\theoremstyle{plain}
\newtheorem{sa}{Theorem}[section]
\newtheorem{Thm}[sa]{Theorem}
\newtheorem{Lem}[sa]{Lemma}
\newtheorem{Cor}[sa]{Corollary}
\newtheorem{Def}[sa]{Definition}
\newtheorem{Rem}[sa]{Remark}
\theoremstyle{plain}
\newtheorem{theorem}{Theorem}[section]
\newtheorem{proposition}[theorem]{Proposition}
\newtheorem{lemma}[theorem]{Lemma}
\theoremstyle{definition}
\theoremstyle{remark}
\newcommand{\alt}{\mathsf{A}}
\newcommand{\nul}{\mathsf{0}}
\newcommand{\HcalO}{\mathcal{H}_\nul}
\newcommand{\HcalA}{\mathcal{H}_\alt}
\newcommand{\HO}{H_\nul}
\newcommand{\HA}{H_\alt}
\newcommand{\N}{\mathbb{N}}
\newcommand{\R}{\mathbb{R}}
\newcommand{\Acal}{\mathcal{A}}
\newcommand{\Bcal}{\mathcal{B}}
\newcommand{\Dcal}{\mathcal{D}}
\newcommand{\Fcal}{\mathcal{F}}
\newcommand{\Hcal}{\mathcal{H}}
\newcommand{\Ical}{\mathcal{I}}
\newcommand{\Pcal}{\mathcal{P}}
\newcommand{\Xcal}{\mathcal{X}}
\newcommand{\Ycal}{\mathcal{Y}}
\newcommand{\Zcal}{\mathcal{Z}}
\newcommand{\Ev}{\textsc{e}}
\newcommand{\Prob}{\mathcal{P}}
\newcommand{\sm}{\setminus}						
\newcommand{\ins}{\subseteq}
\newcommand{\full}{\mathrm{fl}}
\newcommand{\pred}{\mathrm{pd}}
\newcommand{\fixed}{\mathrm{fx}}
\newcommand{\one}{\mathbf{1}}
\DeclareMathOperator{\Bern}{Ber}
\newcommand{\E}{\mathbb{E}}
\DeclareMathOperator{\KL}{KL}
\DeclareMathOperator{\TV}{TV}
\DeclareMathOperator*{\Indep}{\perp\!\!\!\perp} 
\DeclareMathOperator*{\nIndep}{\cancel\Indep} 
\DeclareMathOperator*{\given}{|}
\DeclareMathOperator*{\argmax}{argmax}
\newcommand{\dcup}{\,\dot{\cup}\,}
\newcommand{\lp}{\left ( }
\newcommand{\rp}{\right ) }
\newcommand{\lB}{\left [ }
\newcommand{\rB}{\right ] }
\newcommand{\lC}{\left \{ }
\newcommand{\rC}{\right \} }
\newcommand{\st}{\,\middle|\,}
\renewcommand{\mid}{\middle|}
\algnewcommand{\Input}[1]{%
	\State \textbf{Input:}
	\Statex \hspace*{\algorithmicindent}\parbox[t]{.8\linewidth}{\raggedright #1}
}
\algnewcommand{\Initialize}[1]{%
	\State \textbf{Initialize:}
	\Statex \hspace*{\algorithmicindent}\parbox[t]{.8\linewidth}{\raggedright #1}
}
\algnewcommand{\Output}[1]{%
	\State \textbf{Output:}
	\Statex \hspace*{\algorithmicindent}\parbox[t]{.8\linewidth}{\raggedright #1}
}\algrenewcommand\alglinenumber[1]{\tiny #1:}
\title{E-Valuating Classifier Two-Sample Tests}
\author{\name Teodora Pandeva \email t.p.pandeva@gmail.com \\
      \addr  University of Amsterdam
      \AND
      \name Tim Bakker \email t.b.bakker@uva.nl \\
      \addr  University of Amsterdam
      \AND
      \name Christian A. Naesseth\email c.a.naesseth@uva.nl\\
      \addr  University of Amsterdam
      \AND
      \name Patrick Forr\'e \email p.d.forre@uva.nl\\
 \addr  University of Amsterdam}
\begin{document}

\maketitle

\begin{abstract}
We introduce a powerful deep classifier two-sample test for high-dimensional data based on $\Ev$-values, called $\Ev$-value Classifier Two-Sample Test (E-C2ST). Our test combines ideas from existing work on split likelihood ratio tests and predictive independence tests. The resulting $\Ev$-values are suitable for anytime-valid sequential two-sample tests. This feature allows for more effective use of data in constructing test statistics. Through simulations and real data applications, we empirically demonstrate that E-C2ST achieves enhanced statistical power by partitioning datasets into multiple batches beyond the conventional two-split (training and testing) approach of standard classifier two-sample tests. This strategy increases the power of the test while keeping the type I error well below the desired significance level. 
\end{abstract}

\section{Introduction}
\label{sec:intro}

We consider two-sample tests, which aim to answer the statistical question of whether two independently obtained populations are statistically significantly different. Often, these tests help to distinguish real from generated data \citep{LM16}, noise from data \citep{HTR01, GH12, MSC13, GPaM14}, and are widely used in simulation-based inference \citep{lueckmann21a,miller2022}. In the general setting, consider the scenario where we are given two independent samples from two possibly different distributions:
\begin{align*}
    X_1^{(0)},\dots,X_{N_0}^{(0)} &\stackrel{\text{i.i.d.}}{\sim} P_0, & X_1^{(1)},\dots,X_{N_1}^{(1)} &\stackrel{\text{i.i.d.}}{\sim} P_1.
\end{align*}
Based on these samples, we want to test if the distributions are equal or not. Thus, we can define a  corresponding statistical test with null and alternative hypotheses  as follows:
\begin{align*}
    \HO:\; P_0 = P_1 \qquad\text{vs.}\qquad \HA:\; P_0 \neq P_1.
\end{align*}
In this paper, we consider the two-sample testing problem in the context of sequential testing, where the user accumulates data from $P_0$ and $P_1$ in a time-dependent manner. The primary goal is to evaluate, at each observed time step, whether the null hypothesis defined above remains valid. Thus, if enough evidence against the null is acquired, we can reject the null and stop collecting data.

\textbf{Related work.} Two-sample testing has a long history in statistics, giving rise to classical techniques such as Student's and Welch's t-tests \citep{S1908, welch1947generalization}, which compare the means of two normally distributed samples. In addition, nonparametric tests such as the Wilcoxon-Mann-Whitney test \citep{MW1947}, the Kolmogorov-Smirnov tests \citep{K1933, smirnov1939hearts}, and the Kuiper test \citep{kuiper1960tests} have been established. In the area of high-dimensional data, kernel methods have been introduced, which focus on comparing the kernel embeddings of two populations \citep{gretton2012kernel, chwialkowski2015fast, jitkrittum2016interpretable}. However, these traditional two-sample statistical tests become less powerful when dealing with more complicated data types such as images and text.

Recent advances have led to the development of classifier-based two-sample tests \citep{kim2016classification} and their deep learning extensions, as seen in \citep{LM16, kirchler2020two, liu2020learning,cheng2019classification}. In these methods, a model is trained to discriminate between the two populations using training data, and then a statistical test is performed on a separate test set.  However, all listed methods share a common limitation: when applied sequentially, they can lead to inflated type I error. In simpler terms, these methods assume that the sample size is known in advance, which can be a drawback in practice.

To address this limitation, sequential testing procedures offer a solution by allowing practitioners to dynamically reject the null hypothesis as new batches of data arrive. Within this context, $\Ev$-value-based sequential tests have revived in the work of \citet{shafer2019language,ramdas2022game,GdHK20}, where they are interpreted as bets against the null hypothesis. More formally, $\Ev$-variables are simply non-negative variables $E$ that satisfy 
\begin{align*}
  \text{for all }  P \in \HcalO:\; \E_P[E] \le 1,
\end{align*}
i.e. the expectation of $E$ with respect to any distribution from the null hypothesis distribution class $\HcalO$ is less than one. An example of $\Ev$-variables for singleton hypothesis classes are Bayes factors, i.e. we test if the unknown probability density $p$ equals $p_\nul$ or $p_\alt$: 
\begin{align*}
 \HO: p=p_\nul \qquad\text{ vs. }\qquad  \HA: p=p_\alt.
\end{align*}
The Bayes factor given by ${E(x):=\frac{p_\alt(x)}{p_\nul(x)}}$  is an $\Ev$-variable w.r.t.\ $\HcalO$ since ${\E_{p_\nul}[E]= \int \frac{p_\alt(x)}{p_\nul(x)}\,p_\nul(x) \, dx =1 \le 1.}$  Note that observing a very large value of $E$, which we call an \textit{$\Ev$-value}, provides evidence against the null hypothesis.

The appealing theoretical properties of $\Ev-$variables (details in \Cref{subsec:e-val}) have led to the development of a growing body of work on $\Ev$-value-based (conditional) independence tests spanning several domains, including analysis of contingency tables \citep{turner2021safe}, two-sample test for sequential data \citep{balsubramani2015sequential}, kernel-based approaches  \citep{podkopaev2023skit,shekhar2021game}, rank-based approaches \citep{duan2022interactive}, conditional independence tests under `model-X' assumptions \citep{grunwald2022anytime,shaer2023model}, split-likelihood two-sample tests \citep{lheritier2018sequential}, and other prediction-based conditional independence tests such as the concurrent work by \cite{podkopaev2024sequential}.

\textbf{Contributions.} We extend the work of \cite{lheritier2018sequential} by introducing $\Ev$-values for conditional independence testing, which is a larger class of testing problems  (see \Cref{sec:pcit}). The resulting $\Ev$-values combine the ideas of the split-likelihood testing procedure of \citet{WRB20} (see \Cref{sec:slrt}) and the existing work on predictive conditional independence testing frameworks of \citet{BK17} (see \Cref{sec:pcit}). In contrast to  \citep{lheritier2018sequential}, our framework, when applied to two-sample testing, referred to as E-C2ST, assumes a composite null hypothesis. Furthermore, it leverages the representational capabilities of machine learning models, allowing us to design tests for complex data structures.

 We show that the described tests (including  E-C2ST) provide non-asymptotic type I error control under the null and are consistent (i.e., reject the null almost surely) in both a sequential and a non-sequential setting (details in \Cref{sec:pcit}).  Moreover, when restricted to the two-sample test setting, we establish milder conditions required for the machine learning model than those in \citep{lheritier2018sequential} for the consistency guarantees to hold.
 
In our empirical analysis in \Cref{sec:exp}, we use the theoretical properties of E-C2ST to design sequential tests that optimize data usage by segmenting it into multiple batches. Each batch contributes to the cumulative test statistic. This method contrasts with traditional classifier two-sample tests, which derive a test statistic solely from the test set conditioned on the training data. Our approach not only achieves maximum power faster than standard methods but also consistently keeps type I errors well below the significance level.

\section{Hypothesis Testing with E-Variables}
Building on the recent work of \citet{ramdas2022game,GdHK20}, we give a detailed introduction to $\Ev$-variables and their properties in \Cref{subsec:e-val} and establish their connections to hypothesis testing in \Cref{subsec:testing,subsec:seq-testing}. We deferred all proofs to \Cref{prf:slrt}. First, we introduce the notation used throughout this paper.
\subsection{Notation}
Consider a sample of data points $x_1,x_2,\dots,$\footnote{In the following, we will write small $x$ if we either mean the realization of a random variable $X$ or the argument of a function living on the same space. We use capital $X$ for a data point if we want to stress its role as a random variable.} reflecting realizations of random variables $X_1, X_2\dots,$ drawn from an unknown probability distribution ${P \in \Prob(\Omega)}$ coming from some unknown sample space $\Omega$, where $\Prob(\Omega)$ is the set of all probability measures on $\Omega$. 
In hypothesis testing, we usually consider two model classes:
\begin{align*}
    \HcalO &= \lC P_\theta \in \Prob(\Omega) \st \theta \in \Theta_\nul \rC \;\text{(null hypothesis)},\\
    \HcalA &= \lC P_\theta \in \Prob(\Omega) \st \theta \in \Theta_\alt \rC \; \text{(alternative)},
\end{align*}
where $\Theta_\nul$ and $\Theta_\alt$ represent the parameter sets of the valid distributions under the null and alternative, respectively. We want to decide if $P$ comes from $\HcalO$ or from $\HcalA$:
\begin{align*}
    \HO:\; P \in \HcalO \qquad \text{vs.} \qquad \HA:\; P \in \HcalA.
\end{align*}
 In most cases, the data points come from the same space $\Xcal$ and we would at most observe countably many of such data points $x_n$. In this setting, we can w.l.o.g.\ assume that $\Omega = \Xcal^\N$.
If we, furthermore, assume that the $X_n$, $n \in \N$, are drawn i.i.d.\ from $P$ then 
$P((X_n)_{n \in \N}) = \bigotimes_{n=1}^\N P(X_n)$, we can directly incorporate the product structure into $\HcalO$ and $\HcalA$ and restrict ourselves to one of those factors to state $\Hcal_i$. By slight abuse of notations, we re-write for $i=\nul,\alt$:
\begin{align*}
    \Hcal_i &= \lC P_\theta \in \Prob(\Xcal) \st \theta \in \Theta_i \rC,
\end{align*}
and implicitly assume that $P_\theta((X_n)_{n \in \N}) = \bigotimes_{n=1}^\N P_\theta(X_n).$ Moreover, assume that our probability measures, $P_\theta \in \Hcal_i$, are given via a density with respect to a product reference measure $\mu$. In this work, we denote the density by $p_\theta(x)$ or~$p(x|\theta)$ interchangeably.

\subsection{Conditional E-Variables}
\label{subsec:e-val}

Now consider the more general \emph{relative} framework where we allow hypothesis classes to come from a set of Markov kernels, which can be used to model \emph{conditional} probability distributions for $i=\nul,\alt$:
\begin{align}
\label{eq:hypothesiscond}
    \Hcal_i = \lC P_\theta :\, \Zcal \to \Prob(\Xcal) \st \theta \in \Theta_i \rC \ins \Prob(\Xcal)^\Zcal,
\end{align}
where $\Prob(\Xcal)^\Zcal$ denotes the space of all Markov kernels from $\Zcal$ to $\Xcal$, i.e. for each $P_{\theta}\in\Hcal_i$ for fixed $z\in\Zcal$ $P_{\theta}(\cdot|z)$ is a valid probability measure on $\Xcal.$ An example of conditional hypothesis classes is given in     \Cref{sec:pcit}, where the null hypothesis class represents the set of distributions that reflect the conditional independence of two variables after observing a third one. With respect to $\HcalO$ as defined in   \Cref{eq:hypothesiscond} we can define corresponding $\Ev$-variables which we call  \emph{conditional $\Ev$-variables}\footnote{A formal definition of the "unconditional" $\Ev$-variables introduced in     \Cref{sec:intro} can be easily derived from  \Cref{def:cond-e-val} by dropping $\Zcal.$ Moreover, if $E$ is an $\Ev$-variable and $x \in \Xcal$ a fixed point then we call $E(x)$ the \emph{$\Ev$-value} of $x$ w.r.t.\ $E$.}\footnote{A similar definition can be found in \citep{GdHK20}.}:
\begin{Def}[Conditional E-variable]\label{def:cond-e-val}
    A \emph{conditional $\Ev$-variable w.r.t.\ $\HcalO \ins \Prob(\Xcal)^\Zcal$} is a non-negative measurable map:
    \begin{align*}
        E :\, \Xcal \times \Zcal \to \R_{\ge 0}, \qquad (x,z) \mapsto E(x|z),
    \end{align*}
 such that for all $P_\theta \in \HcalO$ and $z \in \Zcal$ we have:
 \begin{align*}
    \E_\theta[E|z] :=  \int E(x|z) \, P_\theta(dx|z)   \le 1.
 \end{align*}
\end{Def}
 One of the notable features of $\Ev$-variables is their preservation under multiplication. We can easily combine (conditionally) independent $\Ev$-variables by simply multiplying them, which results in an $\Ev$-variable. This property makes $\Ev$-variables appealing for meta-analysis studies \citep{vovk2021values,GdHK20}. 
A more general result is that backward-dependent conditional $\Ev$-variables can be combined by multiplication. This property of $\Ev$-values is analogous to the chain rule observed in probability densities. Such a property becomes key in the development of the E-C2ST in this framework, which is formally stated as:
\begin{Lem}[Products of conditional E-variables (based on \citet{GdHK20})]
    \label{lem:prod-e-val}
    If $E^{(1)}$ is a conditional $\Ev$-variable w.r.t.\ 
    $\HcalO^{(1)} \ins \Pcal(\Ycal)^\Zcal$ and
    $E^{(2)}$  a conditional $\Ev$-variable w.r.t.\
    $\HcalO^{(2)} \ins \Pcal(\Xcal)^{\Ycal \times \Zcal}$
    then $E^{(3)}$ defined via their product:
\begin{align*}
    E^{(3)}(x,y|z) &:= E^{(2)}(x|y,z) \cdot E^{(1)}(y|z),
\end{align*}
is a conditional $\Ev$-variable w.r.t.:
\begin{align*}
    \HcalO^{(3)} := \HcalO^{(2)} \otimes \HcalO^{(1)} \ins \Pcal(\Xcal \times \Ycal)^\Zcal,
\end{align*}
where we define the product hypothesis as:
  \begin{align*}
   \HcalO^{(2)} \otimes \HcalO^{(1)} 
   &:= \lC  P_\theta \otimes P_\psi \st P_\theta \in \HcalO^{(2)}, P_\psi \in \HcalO^{(1)} \rC,
 \end{align*}
 with the product Markov kernels given by:
 \begin{align*}
     \lp P_\theta \otimes P_\psi \rp (dx,dy|z) &:= P_\theta(dx|y,z) \, P_\psi(dy|z).
 \end{align*}
\end{Lem}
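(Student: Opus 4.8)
The plan is to unfold \Cref{def:cond-e-val} for the product kernel and reduce the resulting double integral to two successive one-dimensional bounds, one coming from each factor $\Ev$-variable. Fix an arbitrary $z \in \Zcal$ and an arbitrary element $P_\theta \otimes P_\psi \in \HcalO^{(3)}$ with $P_\theta \in \HcalO^{(2)}$ and $P_\psi \in \HcalO^{(1)}$. By the definition of a conditional $\Ev$-variable, the goal is to verify the single inequality
\begin{align*}
\int E^{(3)}(x,y\mid z)\,(P_\theta\otimes P_\psi)(dx,dy\mid z) \;\le\; 1.
\end{align*}

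Next I would substitute the definitions of $E^{(3)}$ and of the product Markov kernel stated in the lemma, rewriting the left-hand side as
\begin{align*}
\int\!\!\int E^{(2)}(x\mid y,z)\,E^{(1)}(y\mid z)\,P_\theta(dx\mid y,z)\,P_\psi(dy\mid z).
\end{align*}
Because every factor is non-negative and measurable, Tonelli's theorem justifies iterating the integral; I integrate in $x$ first and factor out $E^{(1)}(y\mid z)$, which does not depend on $x$, to obtain
\begin{align*}
\int E^{(1)}(y\mid z)\lp\int E^{(2)}(x\mid y,z)\,P_\theta(dx\mid y,z)\rp P_\psi(dy\mid z).
\end{align*}

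The inner integral is precisely $\E_\theta[E^{(2)}\mid y,z]$ at the fixed point $(y,z)\in\Ycal\times\Zcal$, and since $E^{(2)}$ is a conditional $\Ev$-variable w.r.t.\ $\HcalO^{(2)}\ins\Pcal(\Xcal)^{\Ycal\x\Zcal}$ it is bounded by $1$ for every such $(y,z)$. Using the non-negativity of $E^{(1)}(y\mid z)$ to preserve the inequality when I replace the inner integral by its upper bound $1$, the whole expression is dominated by $\int E^{(1)}(y\mid z)\,P_\psi(dy\mid z)=\E_\psi[E^{(1)}\mid z]$, which is $\le 1$ by the $\Ev$-variable property of $E^{(1)}$ w.r.t.\ $\HcalO^{(1)}$ at the fixed $z$. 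Chaining the two bounds yields $\le 1$, and since $z$ and the product element $P_\theta\otimes P_\psi$ were arbitrary, $E^{(3)}$ is a conditional $\Ev$-variable w.r.t.\ $\HcalO^{(3)}$.

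The only non-routine point is the interchange of integrals together with the treatment of the inner $x$-integral as a genuine disintegration of the product kernel; I expect this to be the main thing to get right. It is handled by appealing to Tonelli's theorem, which applies without any integrability hypothesis exactly because $E^{(1)},E^{(2)}\ge 0$, and to the factorization $(P_\theta\otimes P_\psi)(dx,dy\mid z)=P_\theta(dx\mid y,z)\,P_\psi(dy\mid z)$ supplied in the statement. Joint measurability of $(x,y)\mapsto E^{(2)}(x\mid y,z)\,E^{(1)}(y\mid z)$ follows from the measurability built into each $\Ev$-variable, so no regularity beyond non-negativity and measurability is needed. This is what makes the argument a clean analogue of the chain rule for densities.
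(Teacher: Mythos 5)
Your proof is correct and follows essentially the same route as the paper's: both unfold the product kernel into an iterated integral, bound the inner $x$-integral by $1$ via the conditional $\Ev$-variable property of $E^{(2)}$ at each fixed $(y,z)$, and then bound the remaining $y$-integral by $1$ via the property of $E^{(1)}$. Your citation of Tonelli (rather than the paper's Fubini) is if anything the more precise justification, since non-negativity alone licenses the iterated integration here.
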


\subsection{Hypothesis Testing with Conditional E-Variables}
\label{subsec:testing}
In the context of statistical testing, we can evaluate an $\Ev$-variable based on the given data points, which are realizations of the random variables $X_1, \dots, X_N$. Subsequently, the decision criterion for rejecting the null hypothesis at a significance level $\alpha \in [0,1]$ is as follows
\begin{center}
\emph{Reject $\HcalO$ in favor of $\HcalA$ if $E(X_1,\dots,X_N) \ge \alpha^{-1}$.}
\end{center}
     \Cref{lem:type-I-error} tells us that with this rule, the type I error, the error rate of falsely rejecting the $\HcalO$,  is bounded by $\alpha$.

\begin{Lem}[Type I error control]
    \label{lem:type-I-error}
    Let $E$ be a conditional $\Ev$-variable w.r.t.\ $\HcalO \ins \Prob(\Xcal)^\Zcal$.
    Then for every $\alpha \in [0,1]$, $P_\theta \in \HcalO$ and $z \in \Zcal$ we have:
    \begin{align*}
        P_\theta(E \ge \alpha^{-1} |z) \le \alpha.
    \end{align*}
\end{Lem}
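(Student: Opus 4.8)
The plan is to obtain the bound directly from Markov's inequality applied to the conditional law $P_\theta(\cdot|z)$. First I would fix $\alpha \in (0,1]$, a parameter $P_\theta \in \HcalO$, and a point $z \in \Zcal$, and treat $E(\cdot|z)$ as a non-negative measurable function of $x$ under the probability measure $P_\theta(dx|z)$ on $\Xcal$. Since $E \ge 0$ and $\alpha^{-1} > 0$, the pointwise inequality $\alpha^{-1}\,\I\{E \ge \alpha^{-1}\} \le E$ holds, which is exactly the content of Markov's inequality.

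Integrating this pointwise inequality against $P_\theta(dx|z)$ and using monotonicity of the integral gives
\begin{align*}
\alpha^{-1}\, P_\theta(E \ge \alpha^{-1} |z) \le \E_\theta[E|z].
\end{align*}
The decisive step is then the defining property of a conditional $\Ev$-variable from \Cref{def:cond-e-val}, namely $\E_\theta[E|z] \le 1$ for every $P_\theta \in \HcalO$ and every $z \in \Zcal$. Substituting this bound and multiplying through by $\alpha$ yields $P_\theta(E \ge \alpha^{-1} |z) \le \alpha\,\E_\theta[E|z] \le \alpha$, as claimed.

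Finally I would dispose of the boundary case $\alpha = 0$ separately: there $\alpha^{-1}$ is read as $+\infty$, so the event $\{E \ge \alpha^{-1}\}$ is null for the real-valued variable $E$ and the asserted bound $\le 0$ holds trivially. I do not anticipate a genuine obstacle, since the argument is a one-line conditional Markov inequality; the only point requiring care is that both the inequality and the $\Ev$-variable bound are invoked pointwise in $z$, so that the conclusion holds for each fixed $z \in \Zcal$ rather than merely in expectation over $z$.
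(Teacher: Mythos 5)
Your proposal is correct: the conditional Markov inequality $\alpha^{-1}\,\I\{E \ge \alpha^{-1}\} \le E$ integrated against $P_\theta(dx|z)$, combined with the defining bound $\E_\theta[E|z] \le 1$ from \Cref{def:cond-e-val}, is exactly the standard argument, which the paper itself leaves implicit (its appendix proves the other lemmas but omits this one as routine, following \citet{ramdas2022game,GdHK20}). Your pointwise-in-$z$ treatment and the separate disposal of the boundary case $\alpha=0$ are both sound, so there is nothing to add.
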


Thus, the $\Ev$-values can be transformed into more conservative $p$-values via the relation $p=\min\{1,1/E\}$ such that for $P_\theta \in \HcalO$ it holds $P_\theta(p\le \alpha|z) \le \alpha.$ Note that a valid way of constructing an $\Ev$-variable from the independent random variables $X_1,\dots,X_N$ w.r.t.\ the observed sample points according to     \Cref{lem:prod-e-val} is ${E(X_1,\dots,X_N)=\prod_{i\leq N} E(X_i).}$

\subsection{Sequential Hypothesis Testing with Conditional E-Variables}
\label{subsec:seq-testing}
Up to this point, we have focused primarily on $\Ev$-value-based tests for scenarios where the sample size $N$ is predetermined. Now, suppose that the data does not arrive all at once, but instead, we observe an infinite stream of data. In this context, a new sample $X_t$ becomes available at each time $t$. Consequently, we are interested in developing statistical tests that allow us to reject the null hypothesis at any given time $t$. These tests are called sequential tests and can be constructed using $\Ev$-variables.

Building on the concepts introduced in the previous section, we can define a conditional variable $E^{(t)}=E(X_t|X_1,\ldots, X_{t-1}),$ conditioned on past observations $X_1,\ldots, X_{t-1}$ with respect to the null hypothesis $\HcalO^{(t)}\subseteq\Prob(\Xcal)^{\Xcal^{t-1}}$. Importantly,  \Cref{lem:prod-e-val} suggests combining all the evidence available up to time $t$ to construct a backward-dependent $\Ev$-variable. In other words, the running product $E^{(\leq t)} = \prod_{l=1}^t E^{(l)}$, where $E^{(1)} = E(X_1)$, proves to be a valid $\Ev$-variable with respect to $\HcalO$. 

This sequence of $\Ev$-variables, also known as an $\Ev$-process \citep{ramdas2022game}, offers a theoretical advantage over non-sequential $p$-value-based tests by allowing what it is called ``optional continuation'' \citep{GdHK20}. In simple terms, the user can make an informed decision at any given time $t$: whether to accumulate more data from additional experiments or to stop the process. This decision can be driven by, for example, the decision to reject the null hypothesis. The optional continuation property is facilitated by the following result, which ensures an anytime type-I-error bound for the process $(E^{(\leq t)})_{t\geq 1}$.
\begin{proposition}[\citep{ramdas2022game,GdHK20}]
\label{prop:anytime-valid}
    Let $E^{(\leq t)}$ be the running $\Ev$-variable described above. Then for all $P_{\theta}\in \HcalO$ and all $\alpha \in (0,1]$
    \begin{align*}
        P_{\theta}(\exists t\geq 0 \; \text{such that} \;E^{(\leq t)} \geq \alpha^{-1})\leq \alpha.
    \end{align*}
\end{proposition}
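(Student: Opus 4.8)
The plan is to recognize the running $\Ev$-variable $E^{(\leq t)} = \prod_{l=1}^t E^{(l)}$ --- whose fixed-$t$ validity already follows from \Cref{lem:prod-e-val} --- as a nonnegative supermartingale under any $P_\theta \in \HcalO$, and then invoke Ville's maximal inequality, which is precisely the tool that upgrades the fixed-time bound of \Cref{lem:type-I-error} to the uniform-over-time statement asserted here.

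First I would fix $P_\theta \in \HcalO$ and work with the natural filtration $\Fcal_t := \sigma(X_1, \ldots, X_t)$, with $\Fcal_0$ trivial, so that $M_t := E^{(\leq t)}$ is adapted and nonnegative and $M_0 = 1$ is the empty product. The core computation is the supermartingale property $\E_\theta[M_t | \Fcal_{t-1}] \le M_{t-1}$. Writing $M_t = M_{t-1} \cdot E^{(t)}$ and using that $M_{t-1}$ is $\Fcal_{t-1}$-measurable, I would pull it out of the conditional expectation to obtain $\E_\theta[M_t | \Fcal_{t-1}] = M_{t-1} \cdot \E_\theta[E^{(t)} | \Fcal_{t-1}]$.

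The decisive step is to identify $\E_\theta[E^{(t)} | \Fcal_{t-1}]$ with the conditional $\Ev$-variable bound. Since $E^{(t)} = E(X_t | X_1, \ldots, X_{t-1})$ is by construction a conditional $\Ev$-variable w.r.t.\ $\HcalO^{(t)} \subseteq \Prob(\Xcal)^{\Xcal^{t-1}}$ with conditioning argument $z = (X_1, \ldots, X_{t-1})$, its defining inequality (\Cref{def:cond-e-val}) gives, for $P_\theta$-almost every realization of the past, $\E_\theta[E^{(t)} | \Fcal_{t-1}] = \int E(x | X_1, \ldots, X_{t-1}) \, P_\theta(dx | X_1, \ldots, X_{t-1}) \le 1$. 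Hence $\E_\theta[M_t | \Fcal_{t-1}] \le M_{t-1}$, so $(M_t)_{t \ge 0}$ is a nonnegative supermartingale with $\E_\theta[M_0] = 1$.

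Finally, I would apply Ville's inequality: for a nonnegative supermartingale $(M_t)_{t \ge 0}$ and any $\lambda > 0$ one has $P_\theta(\sup_{t \ge 0} M_t \ge \lambda) \le \E_\theta[M_0]/\lambda$. Taking $\lambda = \alpha^{-1}$ and $\E_\theta[M_0] = 1$ yields $P_\theta(\exists t \ge 0 : E^{(\leq t)} \ge \alpha^{-1}) = P_\theta(\sup_{t \ge 0} M_t \ge \alpha^{-1}) \le \alpha$, which is the claim. I expect the main obstacle to be the bookkeeping in the conditioning step --- ensuring that the conditional $\Ev$-variable property, which \Cref{def:cond-e-val} states pointwise in $z$, transfers correctly to a conditional expectation given $\Fcal_{t-1}$ together with the almost-sure qualifier --- rather than the appeal to Ville's inequality, which is a standard off-the-shelf result for nonnegative supermartingales.
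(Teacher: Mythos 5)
Your proof is correct and takes essentially the same route as the paper: in its appendix proof of \Cref{thm:batch-anytime-type-I}, the paper likewise shows that the running product is a nonnegative supermartingale with respect to the natural filtration, using precisely the conditional $\Ev$-variable bound $\E_\theta[E^{(t)}\mid \Fcal_{t-1}]\le 1$ after pulling out the past factors, and then applies Ville's inequality (\Cref{thm:ville}). The only cosmetic difference is normalization bookkeeping (you use $\E_\theta[M_0]=1$ where the paper's statement of Ville's inequality bounds by $\E[S^{(1)}]/s$), which changes nothing of substance.
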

This result implies that we maintain type-I-error control not just at individual time points $t$ but consistently throughout the entire data collection period. More precisely, the decision rule for rejecting the null hypothesis:
\begin{center}
\emph{Reject $\HcalO$ in favor of $\HcalA$ if $E^{(\leq t)} \ge \alpha^{-1}$ for \textbf{any} $t\geq 1$}
\end{center}
has type I error bounded by $\alpha$. Additionally, we consider this sequential test to be consistent if it correctly rejects the null with a finite number of steps: $P_{\theta}(\exists t\geq 0 \; \text{such that} \;E^{(\leq t)} \geq \alpha^{-1})=1$ for all $P_{\theta}\in \HcalA.$
\section{$M$-Split Likelihood Ratio Test}
\label{sec:slrt}
In general, constructing an $\Ev$-variable with respect to any $\HcalO$ is not a straightforward task. There exist two main approaches.  The first approach, see \citep{GdHK20}, is based on the reverse information projection of the hypothesis space $\HcalA$ onto $\HcalO$. It is not data-dependent and can be shown to be growth-optimal in the worst case. However, the reverse information projection is not explicit in general settings, especially when working with non-convex hypotheses, $\HcalA$ and $\HcalO$. The second approach is based on constructing a data-driven $\Ev$-variable. By utilizing the $M$-split likelihood ratio test by \citet{WRB20} introduced in this section, we establish an $\Ev$-variable for a fixed sample size. Subsequently, we will demonstrate that the same $\Ev$-variable can be adapted for sequential testing for an infinite data stream.

 Assume that our data set $\Dcal=\lC X_1,\dots,X_N \rC$ is of size $N$. We now split the index set $[N]:=\lC 1,\dots,N\rC$ into $M \ge 2$ disjoint batches: $[N] = \Ical^{(1)} \dcup \cdots \dcup \Ical^{(M)}.$ For $m =1,\dots, M$ we also abbreviate:
\begin{align*}
  \Ical^{(<m)} := \Ical^{(1)} \dcup \cdots \dcup \Ical^{(m-1)}, &&
    x^{(m)} := (x_n)_{n \in \Ical^{(m)}} \in \prod_{n \in \Ical^{(m)}} \Xcal =:\Xcal^{(m)}
\end{align*}
and $x^{(<m)}$, $x^{(\le m)}$, $\Ical^{(\le m)}$, analogously. Then for $m=2,\dots, M$ we follow these steps:
\begin{enumerate}
    \item Train a model on $\Theta_\alt$ on all previous points $x^{(<m)}$ in an arbitrary way (MLE, MAP, full Bayesian, etc.) and get $p_\alt(x|x^{(<m)})$. To achieve a high power of the test, the density $p_\alt(x|x^{(<m)})$ should reflect the true distribution in the best possible way to generalize well to unseen data.
    \item Train a model on $\Theta_\nul$ on the data points of the current batch $x^{(m)}$ (conditioned on the previous ones $x^{(<m)}$) via maximum-likelihood fitting (MLE):
        \begin{align*}
        \hat \theta_\nul^{(\le m)}:= \hat \theta_\nul^{(m)}(x^{(\le m)}) &:= \argmax_{\theta \in \Theta_\nul}  p_\theta(x^{(m)}|x^{(<m)}),
        \end{align*}
        and get: $p_\nul(x|x^{(\le m)}) := p(x|x^{(<m)},\hat \theta_\nul^{(m)}(x^{(\le m)}))$.
        Note that under i.i.d.\ assumptions there is no dependence on $x^{(<m)}$.
    \item Evaluate both models on the current points $x^{(m)}$ and define $E^{(m)}$ via their ratio:
        \begin{align}
        \label{eq:slr-e-var}
            E^{(m)}(x^{(m)}|x^{(<m)}):= \frac{p_\alt(x^{(m)}|x^{(<m)})}{p_\nul(x^{(m)}|x^{(\le m)})}= \frac{p_\alt(x^{(m)}|x^{(<m)})}{\max_{\theta \in \Theta_\nul} p_\theta(x^{(m)})}. 
        \end{align}
    Then $E^{(m)}$ constitutes a conditional $\Ev$-variable, conditioned on the space $\Xcal^{(<m)}$,
        w.r.t.\ $\HcalO^{(m)|(<m)}$ (see \Cref{prf:slrt} for the proof).
\end{enumerate}
From the previous section and \Cref{lem:prod-e-val}, we know that the running product 
        \begin{align}
          E:=  E^{(\le M)} &:= \prod_{m=1}^M E^{(m)}, \label{eq:E-val-seq}
        \end{align}
defines an $\Ev$-variable w.r.t.\ $\HcalO=\HcalO^{(\le M)}$ (see \Cref{app:proof} for the proof). For fixed $M$, \Cref{lem:type-I-error} gives us type I guarantees of the derived test. In other words, the $M$-split likelihood ratio test, for significance level $\alpha \in [0,1]$, rejects the null hypothesis $\HcalO$ if $E(X_1,\dots,X_N) \ge \alpha^{-1}$ with type I error bounded by $\alpha$. 
\begin{Rem}[{Intuition for the $m$-th conditional $\Ev$-variable.}]
    For a fixed $m$, the $m$-th conditional $\Ev$-variable $E^{(m)}$ intrinsically \textbf{compares the $\HcalA$-model's \textit{test} performance:} $-\log p_\alt(x^{(m)}|x^{(<m)})$, which is  trained on $x^{(<m)}$, tested on $x^{(m)}$, \textbf{with the $\HcalO$-model's \textit{train} performance:} $-\log p_\nul(x^{(m)}|x^{(m)})$, both trained and tested on the same $x^{(m)}$ in the i.i.d.\ case. This means that if the alternative is true, then the $\HcalA$-model $p_\alt$ has to perform better on $x^{(m)}$ than the $\HcalO$-model $p_\nul$, while the latter was allowed to directly be (over)fitted on $x^{(m)}$. 
\end{Rem}
Now consider a scenario where the number of batches  $M$ is not fixed, and we are dealing with a continuous stream of incoming batches of data. Using the insights from \Cref{prop:anytime-valid}, the resulting $\Ev$-value from \eqref{eq:E-val-seq} can be used for sequential testing. In this way, we can achieve an even more robust form of batch-wise anytime type I error control (details in \Cref{app:proof} for the proof), as follows:
\begin{Cor}[Batch-wise anytime type I error control]
\label{thm:batch-anytime-type-I}
    Consider the sequence of $\Ev$-variables $\lp E^{(\le M)}\rp_{M \in \N}$ from equation \Cref{eq:E-val-seq} for an infinite stream of finite batches of data points.
    It follows that for every $P_\theta \in \HcalO$ and every $\alpha \in (0,1]$ we have the \emph{anytime type I error bound}:
    \begin{align*}
        P_\theta \lp \exists M \in \N\; E^{(\le M)} \ge \alpha^{-1} \rp \le \alpha.
    \end{align*}
\end{Cor}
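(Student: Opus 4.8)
The plan is to recognize the batch-indexed sequence $\lp E^{(\le M)}\rp_{M\in\N}$ as a nonnegative supermartingale under any fixed null $P_\theta\in\HcalO$ and then invoke the anytime bound of \Cref{prop:anytime-valid}, which is exactly Ville's maximal inequality. The only conceptual move relative to \Cref{subsec:seq-testing} is that the ``time step'' there appended a single point $X_t$, whereas here each step appends a whole batch $X^{(m)}$. Since a batch $X^{(m)}$ is itself a single point of the product space $\Xcal^{(m)}$, the two settings are isomorphic, and \Cref{prop:anytime-valid} applies once the supermartingale property has been verified in the coarser, batch-indexed filtration.

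First I would set up the filtration $\Fcal_M := \sigma\lp X^{(1)},\dots,X^{(M)}\rp$ generated by the first $M$ batches and note that $E^{(\le M)} = \prod_{m=1}^M E^{(m)}$ is $\Fcal_M$-measurable and nonnegative, with $E^{(\le 0)} = 1$ as the empty product. Next I would establish the supermartingale inequality $\E_\theta\lB E^{(\le M)}\mid \Fcal_{M-1}\rB \le E^{(\le M-1)}$. Here I use that $E^{(\le M-1)}$ is $\Fcal_{M-1}$-measurable and may be pulled out of the conditional expectation, leaving $E^{(\le M-1)}\cdot\E_\theta\lB E^{(M)}\mid\Fcal_{M-1}\rB$; the defining property of the conditional $\Ev$-variable $E^{(M)}$ (\Cref{def:cond-e-val}), namely $\E_\theta\lB E^{(M)}\mid x^{(<M)}\rB \le 1$ for every conditioning value and every $P_\theta\in\HcalO^{(M)|(<M)}$, upgrades to the $\sigma$-algebra statement $\E_\theta\lB E^{(M)}\mid\Fcal_{M-1}\rB \le 1$, which gives the claim. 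That each $E^{(\le M)}$ is a genuine conditional $\Ev$-variable w.r.t.\ $\HcalO=\HcalO^{(\le M)}$ is precisely \Cref{lem:prod-e-val}, applied inductively across batches.

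With the supermartingale property in hand, I would apply \Cref{prop:anytime-valid} under the identification $t\mapsto M$: since $\lp E^{(\le M)}\rp_M$ is a nonnegative $P_\theta$-supermartingale started at $E^{(\le 0)}=1$, for every $\lambda>0$ we obtain $P_\theta\lp \sup_{M} E^{(\le M)} \ge \lambda \rp \le \lambda^{-1}$. Taking $\lambda=\alpha^{-1}$ and rewriting the supremum event as $\lC \exists M\in\N:\, E^{(\le M)} \ge \alpha^{-1}\rC$ yields the stated bound $\le\alpha$.

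The main obstacle I anticipate is bookkeeping around the conditioning rather than any deep inequality. I must be careful that the conditional $\Ev$-variable property of each $E^{(M)}$, which is stated pointwise in the conditioning value $x^{(<M)}$, correctly yields the conditional-expectation bound $\E_\theta\lB E^{(M)}\mid\Fcal_{M-1}\rB\le 1$ in the $\sigma$-algebra sense; and that the nested null classes $\HcalO^{(\le M)}$ are consistently matched across all $M$, so that one and the same $P_\theta\in\HcalO$ satisfies the step-$m$ null constraint $\E_\theta\lB E^{(m)}\mid x^{(<m)}\rB\le 1$ for every $m$ simultaneously along the infinite stream. Once this alignment is settled, the supermartingale verification and the appeal to \Cref{prop:anytime-valid} are routine.
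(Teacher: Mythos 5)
Your proposal is correct and follows essentially the same route as the paper: both verify that $\lp E^{(\le M)}\rp_{M\in\N}$ is a nonnegative supermartingale with respect to the batch filtration $\sigma(X^{(\le M)})$ by pulling out the $\Fcal_{M-1}$-measurable running product and applying the conditional $\Ev$-variable bound $\E_\theta\lB E^{(M)}\mid x^{(<M)}\rB\le 1$, and then conclude via Ville's maximal inequality. The only cosmetic difference is that you route the final step through \Cref{prop:anytime-valid} (treating each batch as a single point of the product space $\Xcal^{(m)}$), whereas the paper invokes Ville's inequality directly; since \Cref{prop:anytime-valid} is itself Ville's inequality for such running products, this is the same argument.
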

\subsection{Test Consistency}
In this section, we explore the consistency of the test introduced earlier, both for the specific case of $M=2$ and for that of $M=\infty$, addressing standard predictive testing and sequential testing scenarios. Consistency is a key concept in statistical testing, as it guarantees that as more data are accumulated under the alternative, the test becomes more reliable in accurately detecting true hypotheses. This property allows us to examine how the test behaves as the sample size increases.

Now consider the split-likelihood case for $M=2$ and the singleton set  $\HcalO=\lC P_\nul \rC$ with a density $p_{\nul}$  for which the $\Ev-$variable is given by 
\[
    E^{(N^{(2)}|N^{(1)})}(x^{(2)}|x^{(1)}):=E^{(1)}(x^{(2)}|x^{(1)})=\, \prod_{n \in \Ical^{(2)}} \frac{p_\alt(x_n|x^{(1)})}{ p_{\nul}(x_n)},
\]
where with $E^{(N^{(2)}|N^{(1)})}$ we want to make explicit the dependence of the $\Ev-$variable on the train ($N^{(1)}$)  and test ($N^{(2)}$) data sizes.  By making mild assumptions about the learner and ensuring the boundedness of the (conditional) $\Ev$-variable, we prove (details in \Cref{app:consistency-m-2}) the consistency of the test with respect to the $\Ev$-variable defined above:
  \begin{Thm}
    \label{thm:type-II-slrt-singleton-informal}
    Let $\HcalO=\lC P_\nul \rC$ be a singleton set.
    Consider a model class $\HcalA$ and a learning algorithm  that for every 
    realization $x=(x_n)_{n \in \N} \in \Xcal^\N$ and every number $N^{(1)} \in \N$ fits a model $P_\alt^{|x^{(1)}} \in \Pcal(\Xcal)$ to the first $N^{(1)}$ entries $x^{(1)}=(x_n)_{n \in \Ical^{(1)}}$ of $x$.
    Assume that for every 
    $P_\theta \in \HcalA$ and $P_\theta$-almost every i.i.d.\ realization $x=(x_n)_{n \in \N}$ of $P_\theta$ there exists a number $N^{(1)}(x) \in \N$ such that for all $N^{(1)} \ge N^{(1)}(x)$:
    \begin{align}
        \KL(P_\theta\|P_\alt^{|x^{(1)}}) &<  \KL(P_\theta\|P_\nul), &
       \sup_{x_n \in \Xcal}  |\log E(x_n|x^{(1)})| < \infty.
    \end{align}
    Then,
    $
        P_\theta\lp E^{(N^{(2)}|N^{(1)})} < \alpha^{-1} \rp 
        \to 0
    $
    for $N^{(1)},N^{(2)} \to \infty$.
\end{Thm}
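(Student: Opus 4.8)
The plan is to pass to logarithms and exploit the conditional i.i.d.\ structure of the test batch. Writing $\xi_n := \log E(x_n|x^{(1)}) = \log\frac{p_\alt(x_n|x^{(1)})}{p_\nul(x_n)}$, the non-rejection event is
\[
A_{N^{(1)},N^{(2)}} := \lC E^{(N^{(2)}|N^{(1)})} < \alpha^{-1} \rC = \lC \sum_{n \in \Ical^{(2)}} \xi_n < \log \alpha^{-1} \rC .
\]
Since the batches $\Ical^{(1)},\Ical^{(2)}$ are disjoint and the data are i.i.d., conditionally on the training sample $x^{(1)}$ the fitted density $p_\alt(\cdot|x^{(1)})$ is frozen and the test points $(x_n)_{n\in\Ical^{(2)}}$ remain i.i.d.\ from $P_\theta$ and independent of $x^{(1)}$. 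Hence, given $x^{(1)}$, the increments $\xi_n$ are i.i.d.\ with common mean
\[
\mu(x^{(1)}) := \E_\theta[\xi_n|x^{(1)}] = \KL(P_\theta\|P_\nul) - \KL(P_\theta\|P_\alt^{|x^{(1)}}),
\]
where the last identity follows by adding and subtracting $\E_\theta[\log p_\theta(X)]$ in the two Kullback--Leibler terms. The boundedness hypothesis $\sup_{x_n}|\xi_n| < \infty$ makes the $\xi_n$ bounded (hence integrable) and $\mu(x^{(1)})$ finite, while the KL inequality in the hypothesis makes $\mu(x^{(1)}) > 0$ once $N^{(1)} \ge N^{(1)}(x)$.

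First I would fix such a ``good'' training realization. Conditionally on $x^{(1)}$ with $\mu(x^{(1)}) > 0$ and bounded increments, the strong law of large numbers gives $\frac{1}{N^{(2)}}\sum_{n\in\Ical^{(2)}} \xi_n \to \mu(x^{(1)}) > 0$ almost surely, so $\sum_{n\in\Ical^{(2)}} \xi_n \to +\infty$ and in particular $\I_{A_{N^{(1)},N^{(2)}}} \to 0$ $P_\theta$-a.s.\ as $N^{(2)} \to \infty$. By bounded convergence this yields $P_\theta(A_{N^{(1)},N^{(2)}}|x^{(1)}) \to 0$ as $N^{(2)}\to\infty$ for every such good $x^{(1)}$. (A quantitative Hoeffding bound gives exponential decay of $P_\theta(A_{N^{(1)},N^{(2)}}|x^{(1)})$ in $N^{(2)}$, but only the qualitative limit is needed.)

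Next I would integrate over the training sample. Let $D_{N^{(1)}}$ be the $x^{(1)}$-measurable event on which both hypotheses (positive KL gap and boundedness) hold for the current $N^{(1)}$. Since $N^{(1)}(x) < \infty$ for $P_\theta$-a.e.\ $x$, we have $\lC N^{(1)} \ge N^{(1)}(x) \rC \ins D_{N^{(1)}}$ and therefore $P_\theta(D_{N^{(1)}}) \to 1$ as $N^{(1)} \to \infty$. Splitting
\[
P_\theta(A_{N^{(1)},N^{(2)}}) = \E_\theta\!\lB \I_{D_{N^{(1)}}}\, P_\theta(A_{N^{(1)},N^{(2)}}|x^{(1)}) \rB + \E_\theta\!\lB \I_{D_{N^{(1)}}^\cmpl}\, P_\theta(A_{N^{(1)},N^{(2)}}|x^{(1)}) \rB ,
\]
the second term is at most $P_\theta(D_{N^{(1)}}^\cmpl)$, and for fixed $N^{(1)}$ the first term tends to $0$ as $N^{(2)}\to\infty$ by the previous paragraph together with bounded convergence (the integrand is dominated by $\I_{D_{N^{(1)}}}\le 1$ and vanishes pointwise on $D_{N^{(1)}}$). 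Taking $N^{(2)}\to\infty$ and then $N^{(1)}\to\infty$ drives both terms to $0$, giving the claim.

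The step I expect to be the main obstacle is making the double limit $N^{(1)},N^{(2)}\to\infty$ rigorous, because the rate at which $P_\theta(A_{N^{(1)},N^{(2)}}|x^{(1)}) \to 0$ is not uniform over training realizations $x^{(1)}$, and the good set $D_{N^{(1)}}$ itself varies with $N^{(1)}$. The decomposition above reduces matters to the iterated limit $\lim_{N^{(1)}}\limsup_{N^{(2)}} P_\theta(A_{N^{(1)},N^{(2)}}) = 0$, which is the natural reading of the statement; the essential inputs are the almost-sure finiteness of the threshold $N^{(1)}(x)$, so that $P_\theta(D_{N^{(1)}}^\cmpl)\to 0$, and bounded convergence to exchange the limit and the expectation on the good set.
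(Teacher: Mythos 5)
Your proof is correct and reaches the stated conclusion, but by a genuinely different engine than the paper's. The paper's formal version (\Cref{thm:type-II-slrt-singleton}, via \Cref{cor:type-2-cond-iid-2} and \Cref{lem:kl-E-log-e-equiv}) conditions on $x^{(1)}$ exactly as you do and uses the same identity $\E_\theta[\log E(X_n|x^{(1)})]=\KL(P_\theta\|P_\nul)-\KL(P_\theta\|P_\alt^{|x^{(1)}})$, but then invokes Sanov's theorem to obtain the quantitative bound $P_\theta\lp E^{(N^{(2)}|N^{(1)})}\le\alpha^{-1}\st x^{(1)}\rp\le\exp\lp-N^{(2)}\KL(\Acal_{\gamma_{N^{(2)}}}^{|x^{(1)}}\|P_\theta)\rp$; there the boundedness hypothesis is used to make $\Acal_\gamma^{|x^{(1)}}$ $\TV$-closed, so that a positive mean of $\log E$ forces $\KL(\Acal_\gamma^{|x^{(1)}}\|P_\theta)>0$, and the training sample is integrated out by dominated convergence. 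You replace the large-deviation step by a conditional strong law for the bounded i.i.d.\ increments $\xi_n$ plus two applications of bounded convergence, with an explicit good-set decomposition $D_{N^{(1)}}$; this is more elementary, uses boundedness only for integrability, and works under the literal hypotheses of the informal statement, whereas the paper's precise version strengthens the KL inequality by a uniform gap $\epsilon(x)>0$, which it needs to pick the threshold $N^{(2)}(x)=-\log\alpha/\epsilon(x)$ independently of $N^{(1)}$. What Sanov buys in exchange is a finite-sample exponential type II error bound rather than a bare limit. The one divergence you rightly flag is the reading of the double limit: you prove $\lim_{N^{(1)}\to\infty}\limsup_{N^{(2)}\to\infty}P_\theta(\cdot)=0$, while the formal theorem asserts convergence as $\min(N^{(1)},N^{(2)})\to\infty$; note that the paper itself secures the joint limit only through the uniform gap $\epsilon(x)$ (its pointwise convergence of the Sanov bound hinges on $N^{(2)}(x)$ not depending on $N^{(1)}$), so under the hypotheses as stated informally your iterated reading is the defensible one.
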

To ensure the consistency of the test, certain conditions are required. The first key assumption is that, given a sufficiently large data set, the output model $P_\alt^{|x^{(1)}}$ has a greater fit to the class of alternative hypotheses than the null distribution has to the same class. The second assumption implies that both $p_\alt^{|x^{(1)}}$ and $p_0$ must have strictly positive lower bounds. In other words, it requires that $\inf_{x\in\Xcal} p_\alt^{|x^{(1)}}(x)>0$ and $\inf_{x\in\Xcal} p_{\nul}(x)>0$. For example, this condition can be easily satisfied for certain discrete distributions, such as the categorical distribution, as shown in \Cref{sec:ec2st}. A more precise formulation of this theorem can be found in \Cref{thm:type-II-slrt-singleton}.  
\citet{lheritier2018sequential} discuss similar but stronger assumptions. In their work,  $P_\alt^{|x^{(1)}}$ is required to be strongly pointwise consistent, i.e. $P_\alt^{|x^{(1)}}(x)\overset{N\to\infty}{\longrightarrow}P_\theta(x)$ almost surely, for which they can provide $\lambda$-consistency results (a weaker notion of consistency). They also assume that the null hypothesis is known. Next, we will show that in the case $M=\infty$, we can remove this condition and only assume that the learner is a better approximation of the true distribution than the estimated null one.  Under similar conditions as in \Cref{thm:type-II-slrt-singleton-informal}, we can prove that the sequential test defined in \Cref{eq:E-val-seq} is consistent. The proof is deferred to \Cref{app:consistency-m-infty}.
\begin{Thm}
\label{thm:seq-consistency}
 Consider the sequence of $\Ev$-variables $\lp E^{(\le M)}\rp_{M \in \N}$ from  \Cref{eq:E-val-seq} for an infinite stream of finite batches of data points. Let the learning algorithm fit a model $P_\alt^{|x^{(<M)}} \in \Pcal(\Zcal)$ to the first $M-1$ batches $x^{(<M)}=(x_n)_{n \in \Ical^{(< M)}}$. 
  Assume that for every $P_\theta \in \HcalA$ and for all $M\in \N$ and  every instantiation of $x^{(\leq M)}$ the learner satisfies
    \vspace{-0.12cm}
    \begin{align*}
      \mathbb{E}_{\theta}\left[\log\frac{p_\theta(x^{(M)})}{p_0(x^{(M)}|x^{(M)})}\right] -\KL\left(P_\theta^{x^{(M)}}\|P_\alt^{x^{(M)}|x^{(<M)}}\right) &>  r_M>0, &
       \sup_{x \in \Xcal^{|\Ical^{(M)}|}}  |\log E(x|x^{(<M)})| < s_M,
    \end{align*}
\vspace{-0.12cm}
where for the positive sequences $\lp r_M\rp_{M \in \N}$ and $\lp s_M\rp_{M \in \N}$  hold
   \begin{align*}
       \limsup_{M\to\infty} \frac{1}{M}\sum_{m=1}^M r_m >0, && \lim_{M\to\infty} \sum_{m=1}^{M} \frac{s_m^2}{m^2}<\infty
   \end{align*}
   Then, $P_{\theta}(\exists M\in\N \; \text{such that} \;E^{(\leq M)} \geq \alpha^{-1})=1$. 
\end{Thm}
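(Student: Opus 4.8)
The plan is to study the log-process $\log E^{(\le M)} = \sum_{m=1}^{M} \log E^{(m)}$ and to show that, under any $P_\theta \in \HcalA$, it is almost surely unbounded above; since it must then exceed the threshold $\log \alpha^{-1}$ at some finite $M$, the claim $P_\theta(\exists M:\, E^{(\le M)} \ge \alpha^{-1}) = 1$ follows immediately. Let $\Fcal_{m} := \sigma(x^{(\le m)})$ be the filtration generated by the first $m$ batches. Since $E^{(m)}$ depends only on $x^{(m)}$ and $x^{(<m)}$ and is an $\Ev$-variable conditionally on $\Fcal_{m-1}$, the natural move is the Doob decomposition $\log E^{(m)} = \mu_m + D_m$, where $\mu_m := \E_\theta[\log E^{(m)} \mid \Fcal_{m-1}]$ is $\Fcal_{m-1}$-measurable and $D_m := \log E^{(m)} - \mu_m$ is a martingale-difference sequence with respect to $(\Fcal_m)_m$.

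First I would identify the drift $\mu_m$ with the left-hand side of the first hypothesis. Writing $\log E^{(m)} = \log p_\alt(x^{(m)}\mid x^{(<m)}) - \log p_0(x^{(m)}\mid x^{(\le m)})$ and conditioning on $\Fcal_{m-1}$ (which fixes $x^{(<m)}$, hence the fitted alternative $P_\alt^{x^{(m)}\mid x^{(<m)}}$, while the maximum-likelihood null model is a function of $x^{(m)}$ alone), the definition of the KL divergence gives $\E_\theta[\log p_\alt(x^{(m)}\mid x^{(<m)}) \mid \Fcal_{m-1}] = \E_\theta[\log p_\theta(x^{(m)})] - \KL(P_\theta^{x^{(m)}}\,\|\,P_\alt^{x^{(m)}\mid x^{(<m)}})$. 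Substituting yields exactly $\mu_m = \E_\theta[\log \tfrac{p_\theta(x^{(m)})}{p_0(x^{(m)}\mid x^{(m)})}] - \KL(P_\theta^{x^{(m)}}\,\|\,P_\alt^{x^{(m)}\mid x^{(<m)}})$, which the first hypothesis assumes to exceed $r_m$ for every instantiation of $x^{(<m)}$; thus $\mu_m > r_m > 0$ pathwise. Because each $r_m > 0$, the partial sums $\sum_{m\le M} r_m$ are nondecreasing, so $\limsup_M \frac1M \sum_{m\le M} r_m > 0$ forces them to diverge, giving $\limsup_M \frac1M \sum_{m\le M} \mu_m \ge \limsup_M \frac1M \sum_{m\le M} r_m =: c > 0$.

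Next I would control the fluctuation term. The second hypothesis bounds $|\log E(\cdot\mid x^{(<m)})| \le s_m$ uniformly, so $|\log E^{(m)}| \le s_m$ and, being a conditional mean of a bounded variable, $|\mu_m| \le s_m$; hence $|D_m| \le 2 s_m$ and $\E_\theta[D_m^2 \mid \Fcal_{m-1}] \le 4 s_m^2$. Consider the martingale $W_M := \sum_{m=1}^{M} D_m/m$, whose predictable quadratic variation obeys $\langle W\rangle_\infty = \sum_m \E_\theta[D_m^2 \mid \Fcal_{m-1}]/m^2 \le 4\sum_m s_m^2/m^2 < \infty$ by the second side condition. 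A martingale with almost surely finite predictable quadratic variation converges almost surely, so $W_M$ has a finite limit; Kronecker's lemma (with weights $b_m = m \uparrow \infty$) then gives $\frac1M \sum_{m\le M} D_m \to 0$ almost surely.

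Combining the two pieces, $\frac1M \log E^{(\le M)} = \frac1M \sum_{m\le M} \mu_m + \frac1M \sum_{m\le M} D_m$, whose $\limsup$ is at least $c > 0$ almost surely; therefore $\log E^{(\le M)}$ is almost surely unbounded above and eventually crosses $\log\alpha^{-1}$, establishing the theorem. The main obstacle is the martingale step: one must confirm that the chosen conditioning really makes $(D_m)$ a martingale-difference sequence, justify the exact identification of $\mu_m$ with the hypothesized drift (in particular reading $p_0(x^{(m)}\mid x^{(m)})$ as the null model both fitted and evaluated on $x^{(m)}$, and noting the KL term is $\Fcal_{m-1}$-measurable), and invoke the correct strong law for martingale differences with summable $1/m^2$-weighted conditional variances. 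The deterministic bound $|D_m| \le 2 s_m$ makes the convergence input routine, but the precise balance between $\limsup \frac1M\sum_{m\le M} r_m > 0$ and $\sum_m s_m^2/m^2 < \infty$ is exactly what the two side conditions are engineered to supply.
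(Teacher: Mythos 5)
Your proposal is correct and follows essentially the same route as the paper's proof: the same decomposition of $\log E^{(\le M)}$ into the conditional drift $\mu_m = \E_\theta[\log E^{(m)} \mid \Fcal_{m-1}]$ (the paper's $W_m$), lower-bounded pathwise by $r_m$ via the KL identity, plus a martingale-difference remainder bounded by $2s_m$, with the condition $\sum_m s_m^2/m^2 < \infty$ feeding a strong law for martingale differences to kill the fluctuation term. The only difference is cosmetic: the paper cites a ready-made SLLN for martingale-difference sequences (its Theorem on Chow-type SLLN), whereas you prove that step inline via finite predictable quadratic variation of $\sum_m D_m/m$ plus Kronecker's lemma --- which is just the standard proof of that same lemma.
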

The requirement regarding the first sequence can be understood as guaranteeing that the learner consistently provides a better approximation of the true distribution compared to the estimated null distribution, averaged over a sequence of $M$ consecutive steps. In this context, $r_M$ could even be a decreasing null sequence. The second condition is a milder assumption than uniform boundedness.  Similar to the previous result, this condition is relatively easier to satisfy when dealing with categorical random variables or random variables with compact support.

\section{Predictive Conditional Independence Testing}
\label{sec:pcit}

 In this section, we combine the ideas of predictive conditional independence testing by \cite{BK17} with $\Ev$-variables from the $M$-split likelihood ratio test from      \Cref{sec:slrt} based on \citet{WRB20} to derive a proper $\Ev$-variable for \emph{conditional} independence testing. The desired two-sample test will later on be reformulated as an independence test by utilizing the theoretical results discussed in this section.

As a reminder, in conditional independence testing, we want to test if a random variable $X$ is independent of $Y$,  or not, conditioned on $Z$:
\begin{align*}
    \HO:\;  X \Indep Y \given Z \qquad \text{vs.} \qquad \HA:\;  X \nIndep Y \given Z,
\end{align*}
based on data  $\Dcal=\lC (X_1,Y_1,Z_1),\dots, (X_N,Y_N,Z_N) \rC$.
The corresponding (\emph{full}) hypothesis spaces, in the i.i.d.\ setting, are:
\begin{align*}
    \HcalO^\full = \lC P_\theta(X|Z) \otimes P_\theta(Y|Z) \otimes P_\theta(Z) \st \theta \in \Theta_\nul \rC &&
    \HcalA^\full = \lC P_\theta(X,Y,Z) \st \theta \in \Theta_\alt \rC \sm \HcalO.
\end{align*}
If we assume that $P(X,Z)$ is \emph{fixed} for $\HcalO$ and $\HcalA$ then this simplifies to the following
product hypothesis classes, $i=\nul,\alt$: $ \Hcal_i^\fixed := \Hcal_i^\pred \otimes \lC P(X,Z) \rC,$ where the conditional hypothesis classes $\Hcal_i^\pred \ins \Pcal(\Ycal)^{\Xcal \times \Zcal}$ of
\emph{predictive distributions} are given by:
\begin{align}
\label{eq:half-H0} 
    \HcalO^\pred = \lC P_\theta(Y|Z)   \st \theta \in \Theta_\nul \rC,&& 
    \HcalA^\pred = \lC P_\theta(Y|X,Z) \st \theta \in \Theta_\alt \rC \sm \HcalO. 
\end{align}
    \Cref{eq:slr-e-var} applied to $\Hcal_i^\fixed$ under i.i.d.\ assumptions leads us to the following $m$-th conditional $\Ev$-variable, using the abbreviation $w=(x,y,z)$:
\begin{align}
    E^{(m)}(w^{(m)}|w^{(<m)}) &= 
    \frac{p_\alt(y^{(m)}|x^{(m)},z^{(m)},w^{(<m)})}{p(y^{(m)}|z^{(m)},\hat\theta_\nul(y^{(m)},z^{(m)}))}. \label{eq:pcit-e-val}
\end{align}
The reason that we applied     \Cref{eq:slr-e-var} to $\Hcal_i^\fixed$ instead of $\Hcal_i^\full$ is that $E^{(m)}$ will automatically be a valid conditional $\Ev$-variable for $\HcalO^\full$ and even $\HcalO^\pred$, as well.
\begin{Rem}
According to \cite{shah2020hardness},  in the general case, a valid test for conditional independence lacks power against all alternatives unless certain additional assumptions are introduced. One such assumption is the model-X assumption, where the user can access the conditional distribution $P(X|Z)$. In the construction outlined above, this assumption is implicit since we assume that the joint distribution $P(X,Z)$ remains fixed under both hypotheses.
\end{Rem}

\section{Classifier Two-Sample Tests with E-Variables}
\label{sec:ec2st}
\begin{algorithm}[tbh!]
  	\begin{algorithmic}[1]
	\Input{ Data stream $((x^{(m)}, y^{(m)})=(x_n,y_n)_{n\in \Ical^{(m)}})_{m\in[M]}$, Significance level $\alpha$, Initial  $\lambda$, Training epochs $T$
 }
 \Initialize{$\Dcal_{train} ,\Dcal_{val} \leftarrow \text{Split}((x_n,y_n)_{n\in \Ical^{(1)}})$,  $E^{(1)}=1$, $\lambda_1=\lambda$}
	\For{$m=2,\ldots, M$}  
 \For{$t=1, \ldots T$}
 \State $g_{\theta}\leftarrow \text{Train}(g_{\theta}, \Dcal_{train})$
 \If{$\text{EarlyStopping}(g_{\theta}, \Dcal_{val})$}
 \State \textbf{break}
 \EndIf
 \EndFor
 \State Compute $E^{(m)}$ on $(x_n,y_n)_{n\in \Ical^{(m)}}$ as in \Cref{eq:ec2st-bounded}.
 \State Compute $E^{(\leq m)}=  \prod_{l=1}^m E^{(l)}$   
	\If{$E^{(\leq m)}\geq\alpha^{-1}$}
 \State \textbf{reject} and \textbf{break}
 \EndIf
  \State Obtain $\lambda_{m+1}$ that solves \eqref{eq:lambda-opt}
 \State $\Dcal_{train}\leftarrow \Dcal_{train} \cup (x_n,y_n)_{n\in \Ical^{(m-1)}}$,  $ \Dcal_{val}\leftarrow (x_n,y_n)_{n\in \Ical^{(m)}}$
 \EndFor
\end{algorithmic}
\caption{Algorithmic description of E-C2ST.}
\label{alg:method}
\end{algorithm}

This section will formalize the classifier two-sample test using $\Ev$-variables. The following test can be easily derived from the conditional independence test introduced in \Cref{sec:pcit} by introducing a binary variable denoted $Y$, along with the following abbreviation:
\begin{align*}
    P(X|Y=0) &:= P_0(X), & P(X|Y=1) &:= P_1(X),
\end{align*}
If we pool the data points $X_n^{(y)}$ via augmenting them with a $Y$-component: $(X_n^{(y)},Y_n^{(y)})$ with $Y_n^{(y)}:=y$,
then the pooled data set can be seen as one i.i.d.\ sample from $P(X,Y)$ of size $N:=N_0+N_1$ for some unknown marginal $P(Y)$.
We can then reformulate the two-sample test as an independence test:
\begin{align*}
    \HO:\; X \Indep Y \qquad\text{vs.}\qquad \HA:\; X \nIndep Y.
\end{align*}
This allows us to use the $\Ev$-variables from    ~\Cref{sec:pcit} (without any conditioning variable $Z$) for (conditional) independence testing. Furthermore, since $Y$ is a binary variable
we can write any Markov kernel $P(Y|X)$ as a Bernoulli distribution $ P_\theta(Y|X=x) = \Bern(\sigma(g_\theta(x)))$ for some parameterized measurable function $g_\theta$ and where $\sigma(t):=\frac{1}{1+\exp(-t)}$ is the 
logistic sigmoid-function. So, our hypothesis spaces look like:
\begin{align}
\label{eq:ec2st-hypothesis}
    \HcalO &= \lC \Bern(q_\theta)  \st \theta \in \Theta_\nul \rC, &q_\theta &\in [0,1],\\
    \HcalA &= \lC \Bern(\sigma(g_\theta)) \st \theta \in \Theta_\alt \rC \sm \HcalO, \notag
\end{align}
and the $m$-th conditional $\Ev$-variable is given by:
\begin{align}
\label{eq:e-c2st}
    E^{(m)}(y^{(m)}|x^{(m)},x^{(<m)},y^{(<m)}) & =
\frac{p_\alt(y^{(m)}|x^{(m)},x^{(<m)},y^{(<m)})}{p(y^{(m)}|\hat\theta_\nul(y^{(m)}))} \\
&= \prod_{n \in \Ical^{(m)}} \lp\frac{\sigma(g_{\hat\theta_\alt^{(<m)}} (x_n)) }{ N_1^{(m)}/N^{(m)} }\rp^{y_n} \cdot 
    \lp\frac{1-\sigma(g_{\hat\theta_\alt^{(<m)}}(x_n))}{N_0^{(m)}/N^{(m)} } \rp^{1-y_n}. \notag
\end{align}
The maximum likelihood estimator for $q_\theta$ with respect to $y^{(m)}$ is represented by $\hat q^{(m)} = N_1^{(m)}/N^{(m)}$. This estimate corresponds to the frequency of data points in the $m$-th batch classified as belonging to class $y=1$. Additionally, the function $g_\theta$ is trained on the data set $(x^{(<m)},y^{(<m)})$ using binary classification.

Therefore, we combine all conditional $\Ev$-variables to create an anytime valid $\Ev$-variable, as explained in \Cref{sec:slrt}. In addition, we can use \Cref{thm:seq-consistency} to prove the consistency of the classifier two-sample test. Consequently, we introduce a minor adjustment to the conditional $\Ev$-variable in \eqref{eq:e-c2st}, resulting in a bounded $\Ev$-variable that yields the following consistency result:
\begin{Lem}
\label{lem:ec2st-consistency}
    Let the learning algorithm fit a model $P_\alt^{|x^{(<m)},y^{(<m)}} \in \Pcal(\Zcal)$ to the first $m-1$ batches $(x^{(<m)},y^{(<m)})=(x_n,y_n)_{n \in \Ical^{(<m)}}$. Furthermore, let ${\tilde P_\alt^{|x^{(<m)},y^{(<m)}}:= \lambda_m P_\nul^{|y^{(m)}} +(1-\lambda_m) P_\alt^{|x^{(<m)},y^{(<m)}}}$ with corresponding density $\tilde p_\alt^{|x^{(<m)}}$ and $\lambda_m\in(0,1)$. Then 
    \begin{align}
    \label{eq:ec2st-bounded}
        \tilde E^{(m)}(y^{(m)}|x^{(m)},x^{(<m)},y^{(<m)}) =
\frac{\tilde p_\alt(y^{(m)}|x^{(m)},x^{(<m)},y^{(<m)})}{p(y^{(m)}|\hat\theta_\nul(y^{(m)}))}
    \end{align}
    constitutes a bounded conditional $\Ev$-variable w.r.t \eqref{eq:ec2st-hypothesis}, i.e. for every instantiation $(x^{(<m)},y^{(<m)})$ it holds $\sup_{(x,y)\in\Xcal\times\Ycal} |\log E^{(m)}(y|x,x^{(<m)},y^{(<m)})|<\infty.$ Additionally, if the  batch sample size is  at most $B\in\N$ together with rest of the conditions in \Cref{thm:seq-consistency} it follows that the $\Ev$-variable $(E^{(\leq M)})_{M\in\N}$ with increments defined in \eqref{eq:ec2st-bounded} yields a consistent test.
\end{Lem}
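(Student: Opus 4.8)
The plan is to reduce the whole statement to two elementary facts about the unmodified increment $E^{(m)}$ of \eqref{eq:e-c2st}: that it is already a valid conditional $\Ev$-variable (by the $M$-split construction of \Cref{sec:slrt}), and that its only defect for \Cref{thm:seq-consistency} is the absence of a two-sided bound on $\log E^{(m)}$. First I would record the key algebraic identity. The null component $P_\nul^{|y^{(m)}}$ of the mixture defining $\tilde P_\alt$ is exactly the maximum-likelihood null fit that appears in the denominator of \eqref{eq:ec2st-bounded}, so its density cancels against that denominator and
\begin{align*}
\tilde E^{(m)}(y^{(m)}\mid x^{(m)},x^{(<m)},y^{(<m)})=\lambda_m+(1-\lambda_m)\,E^{(m)}(y^{(m)}\mid x^{(m)},x^{(<m)},y^{(<m)}).
\end{align*}
This exhibits $\tilde E^{(m)}$ as a convex combination of the constant $\Ev$-variable $1$ and $E^{(m)}$.

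Validity and the lower bound are then immediate. The constant $1$ is trivially a conditional $\Ev$-variable and $E^{(m)}$ is one w.r.t.\ $\HcalO^{(m)|(<m)}$, so by linearity (with expectations conditional on the past batches) $\E_\theta[\tilde E^{(m)}]=\lambda_m+(1-\lambda_m)\E_\theta[E^{(m)}]\le 1$, giving validity w.r.t.\ \eqref{eq:ec2st-hypothesis}; and since $E^{(m)}\ge 0$ the mixture is bounded below by $\lambda_m>0$, so $\log\tilde E^{(m)}\ge\log\lambda_m$. This lower bound is precisely what the mixing buys, because the raw $E^{(m)}$ can be driven arbitrarily close to $0$ as $\sigma(g_{\hat\theta_\alt}(x))$ approaches $0$ or $1$. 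For the upper bound I would expand $E^{(m)}$ as the batch product in \eqref{eq:e-c2st} and bound it uniformly over all label vectors $y^{(m)}$: writing $N_1,N_0$ for the class counts of a candidate $y^{(m)}$, each factor with $y_n=1$ equals $\sigma(\cdot)/(N_1/N^{(m)})\le N^{(m)}/N_1$ and each factor with $y_n=0$ equals $(1-\sigma(\cdot))/(N_0/N^{(m)})\le N^{(m)}/N_0$, while fully degenerate vectors ($N_1\in\{0,N^{(m)}\}$) leave one product empty and the other $\le 1$, so no $1/\hat q^{(m)}$ term ever blows up. Hence $E^{(m)}\le (N^{(m)})^{N^{(m)}}\le B^{B}$ using the cap $N^{(m)}\le B$, whence $\log\tilde E^{(m)}\le B\log B$ and $\sup|\log\tilde E^{(m)}|\le s_m:=\max\{B\log B,\,|\log\lambda_m|\}<\infty$.

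For consistency I would verify the two hypotheses of \Cref{thm:seq-consistency} for $(\tilde E^{(m)})_m$ and invoke it directly. The boundedness hypothesis is exactly the estimate just obtained: since $B$ is fixed, $s_m=O(1)$ whenever the schedule keeps $\lambda_m$ bounded away from $0$, and in general $\sum_m s_m^2/m^2<\infty$ as soon as $\sum_m(\log\lambda_m)^2/m^2<\infty$, a mild constraint I would impose on the $\lambda_m$ produced by \eqref{eq:lambda-opt}. The KL-gap hypothesis is taken from the assumed remaining conditions of \Cref{thm:seq-consistency}, now read for $\tilde P_\alt$; convexity of $\KL$ in its second argument, $\KL(P_\theta\|\tilde P_\alt)\le\lambda_m\KL(P_\theta\|P_\nul)+(1-\lambda_m)\KL(P_\theta\|P_\alt)$, lets me transfer the gap from the learner $P_\alt$ to the mixture. \Cref{thm:seq-consistency} then yields $P_\theta(\exists M:\ \tilde E^{(\le M)}\ge\alpha^{-1})=1$ for every $P_\theta\in\HcalA$, i.e.\ consistency.

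The cancellation identity, validity, and the lower bound are routine; I expect the real work to be the uniform upper bound. There the finite batch cap $B$ is essential, and the bookkeeping for (near-)degenerate label configurations must be handled so that the apparent $1/\hat q^{(m)}$ singularities cancel against empty products. The secondary obstacle is purely schedule-level: one must keep $\lambda_m$ admissible so that the resulting two-sided bound $s_m$ stays summable against $m^{-2}$, since the mixing controls only the lower bound while the upper bound comes entirely from the bounded batch size.
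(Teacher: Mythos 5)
Your proposal is correct and follows essentially the same route as the paper: the cancellation identity $\tilde E^{(m)}=\lambda_m+(1-\lambda_m)E^{(m)}$, validity via convex combination (the paper's \Cref{lem:conv-e-val}), the two-sided log bound with lower bound $\log\lambda_m$ and an upper bound coming from the MLE denominator being at least $1/N^{(m)}\ge 1/B$ (the paper's \Cref{lem:bounded-e-c2st}), and then a direct invocation of \Cref{thm:seq-consistency}. Your extra bookkeeping — the degenerate label vectors leaving an empty product, and the schedule-level constraint $\sum_m(\log\lambda_m)^2/m^2<\infty$ on the mixing weights — is only a more explicit treatment of what the paper subsumes under ``together with rest of the conditions in \Cref{thm:seq-consistency}.''
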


In the above lemma, we also assume that the batch size cannot exceed a fixed number $B\in \N$. This assumption is not particularly restrictive since we can construct the conditional $\Ev$-variable using a maximum of $B$ samples from each incoming batch.  Furthermore, one can think of the sequence $\lambda_m$ as a sequence of hyperparameters. We propose a hyperparameter selection procedure that derives  $\lambda_m$ from the previous step, leading to the constrained optimization problem:
\begin{align}
\label{eq:lambda-opt}
    \lambda_{m+1}\in\argmax_{\lambda\in(0,1)}\log  \tilde E^{(m)}\equiv\argmax_{\lambda\in(0,1)} \sum_{n\in \Ical^{(m)}}\log\left(\frac{\lambda\cdot p(y_n|\hat\theta_\nul(y_n))+(1-\lambda)\cdot  p_\alt(y_n|x_n,x^{(<m)},y^{(<m)})}{p(y_n|\hat\theta_\nul(y_n)) }\right).
\end{align}
In step $m+1$, we determine the mixing parameter $\lambda_{m+1}$ to optimize the conditional $\Ev$-value calculated in the previous step $m$, as given in \eqref{eq:ec2st-bounded}.  When considering an alternative hypothesis, we expect a general decrease in $\lambda_m$ with each successive step due to the model's enhancement with the accumulation of more data.  Under the null hypothesis, this parameter is expected to maintain a higher value. Thus, this mixing parameter becomes essential in balancing our assumptions about the most plausible hypothesis class. The resulting test we call E-C2ST, whose steps are summarized in \Cref{alg:method}.

\section{Experiments}
\label{sec:exp}
\begin{figure}
    \centering
    \includegraphics[scale=0.55]{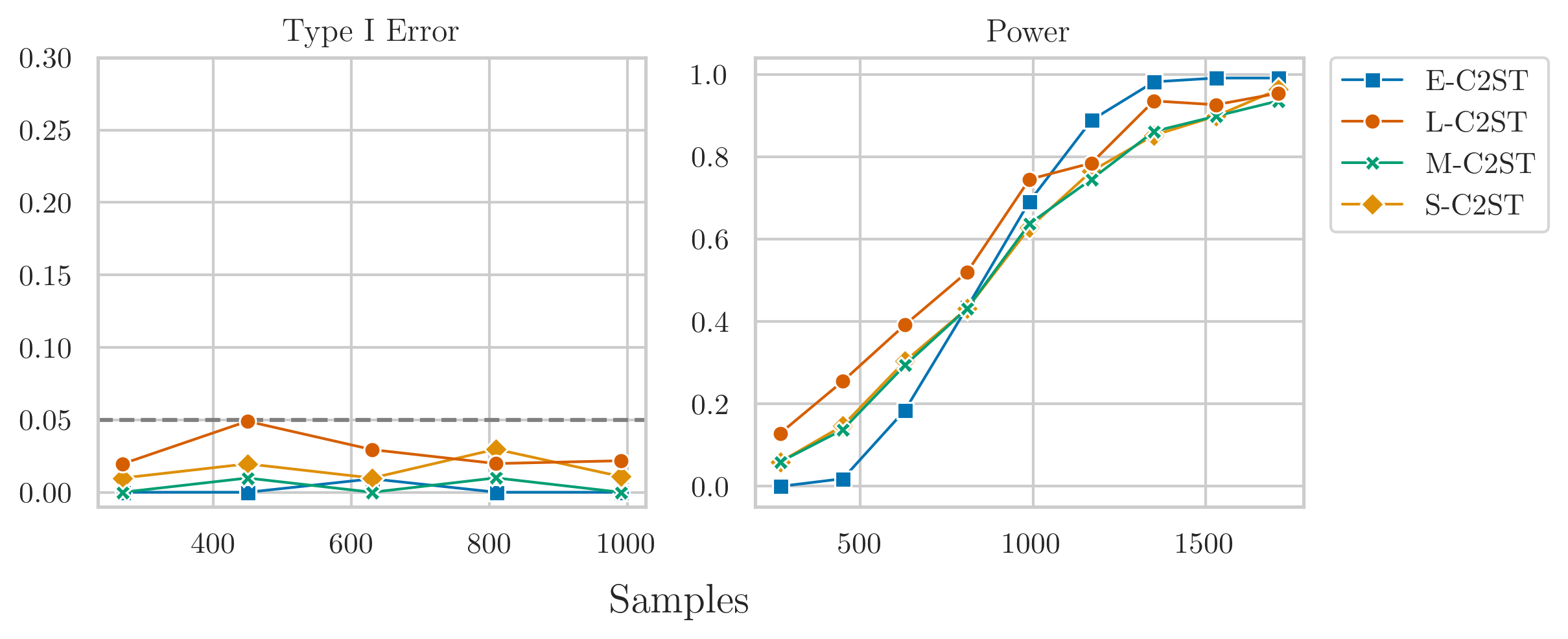}
    \caption{Type I error and Power experiments for the Blob dataset. Compared to the baselines, E-C2ST reaches maximum power faster than the baselines while maintaining the type I error strictly below the significance level. }
    \label{fig:blob}
\end{figure}
We compare our method to other classifier two-sample tests on the Blob, MNIST, and KDEF data. We will empirically show that we can exploit the ability of E-C2ST to construct test statistics by using the entire dataset to gain statistical power over the other tests that compute a $p$-value based only on the train-test data split. Meanwhile, E-C2ST keeps the type I error strictly below the alpha significance level.

\subsection{Baselines Implementation and Training} \label{subsec:training}
 We compare E-C2ST to the following baselines: \textbf{S-C2ST} (standard C2ST), is the C2ST proposed by \citet{LM16,kim2016classification}; \textbf{L-C2ST} (logits C2ST) by \citet{cheng2019classification}, and \textbf{M-C2ST} by \citet{kirchler2020two}. Each method involves training a classifier to differentiate the two classes and then using the trained model for computing the test statistics.  The training procedure is with early stopping, and we use the same fixed network architecture across all tests for all experiments. The resulting $p$-values are reported from 500 permutations for all baseline tests. The main paper or the appendix details all experiments and tests' implementations (see \Cref{app:exp}).

 \subsection{Evaluation} For each sample size, we sample a dataset from a fixed distribution or as a subsample from a given data set. In the baseline case, we split the dataset into train, validation and test sets with ratio 5:1:1 and we fit a classifier.  Using a significance level of $\alpha=0.05$, we decide whether to reject the null hypothesis on the test set. In comparison, E-C2ST is evaluated sequentially. More precisely, for each fixed sample size, we sample a data set divided into batches of equal size. We apply the procedure described in \Cref{sec:ec2st} by constructing the running $\Ev$-variable as in \eqref{eq:E-val-seq} with initial $\lambda_1=0.5$ in all experiments unless specified otherwise. We decide whether to reject the null or to continue testing each time a new batch is observed. We run 100 independent experiments and report the rejection rates for all methods, corresponding to type I error or power, depending on whether the two classes are from the same distribution.

\subsection{Empirical Results}

\paragraph{Blob Dataset.} The blob data set is a two-dimensional Gaussian mixture model with nine modes arranged on a $3\times 3$ grid, used by \citet{gretton2012kernel,chwialkowski2015fast} in their analysis.  The two distributions differ in their variance, as visualized in \Cref{fig:blobdata} in \Cref{app:train}. In the case of E-C2ST, we split the data into mini-batches of size $90$ and compute the test statistics as explained in the previous section. The results are plotted in \Cref{fig:blob}, where the x-axis refers to the sample sizes of the total data used and the y-axis is the rejection rate. E-C2ST reaches maximum power faster than the baseline methods while achieving type I error strictly below the $\alpha$ significance level. 
\begin{figure}
    \centering
    \includegraphics[scale=0.55]{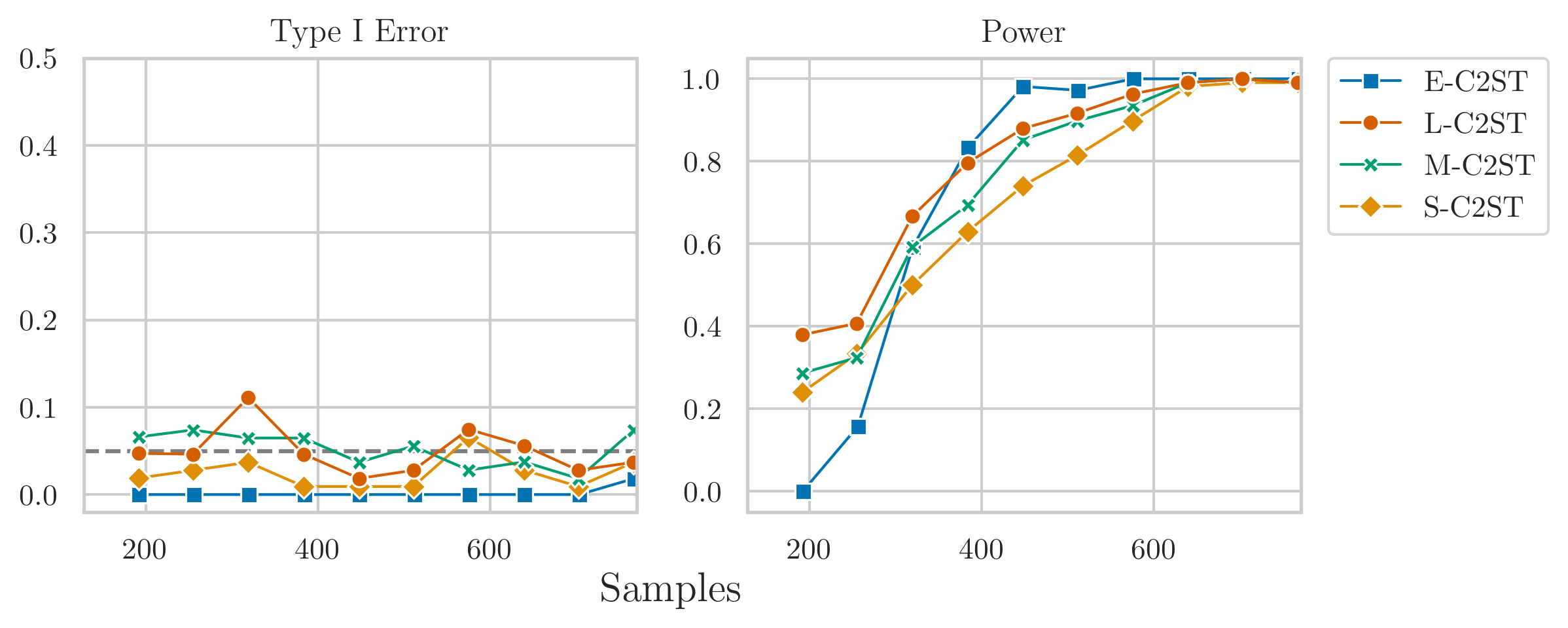}
    \caption{Power analysis and type I error for the KDEF data. All methods show very good power performance. The baselines start with a higher power. However, E-C2ST reaches power one the fastest while keeping the type I error lower than the baselines.  The dashed line corresponds to the significance level $\alpha=0.05.$}
    \label{fig:kdev}
\end{figure}
\begin{figure}
    \centering
    \includegraphics[width=\columnwidth]{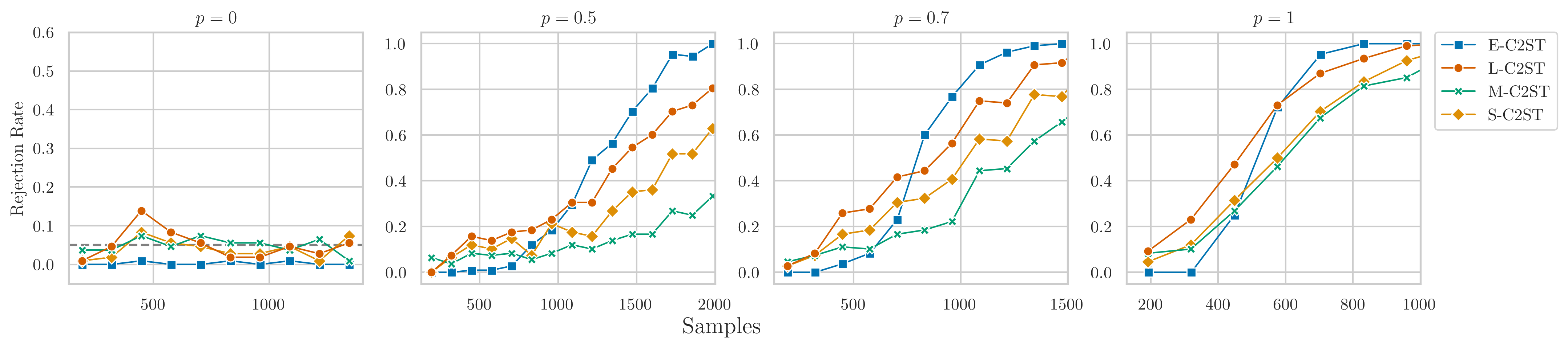}
    \caption{Power analysis for the Corrupted MNIST Data for different proportions of corruption ${p=0, 0.5, 0.7, 1}$. Compared to the baselines, E-C2ST shows the highest power. The dashed line represents the significance level. }
    \label{fig:mnist}
\end{figure}

\textbf{KDEF Data.} The Karolinska Directed Emotional Faces (\textbf{KDEF}) dataset \citep{lundqvist1998karolinska} is used by \citet{jitkrittum2016interpretable,LM16,kirchler2020two} to distinguish between positive (happy, neutral, surprised) and negative (afraid, angry, disgusted) emotions from faces. We draw datasets of sizes $3\cdot 64, 4\cdot 64, \ldots$ from the two classes on which we perform the statistical tests. We set the E-C2ST batch size to $64$ and ran 100 experiments per sample size. The results are shown in \Cref{fig:kdev}, where we compute the rejection rate (type I error and power) per sample size. Although our method has lower power for smaller sample sizes than the baselines, it reaches maximum power faster, with a type I error consistently below 0.05.

\textbf{Corrupted MNIST Data.} The MNIST dataset \citep{lecun1998gradient} consists of 70 000  handwritten digits.  As in \citep{liu2020learning}, we make use of generated MNIST images \citep{lecun1998gradient} from a pre-trained DCGAN model \citep{radford2015unsupervised} for this benchmark experiment. More precisely, we assume that under the alternative, we are given two sets of MNIST digits, one containing ``corrupted'' images, i.e., images generated from the trained DCGAN. We vary the portion of these images in the mini-batches to $p=0, 0.5, 0.7, 1$ and resample the two datasets of size $3\cdot64, 4\cdot 64, \ldots$. We fixed the mini-batch size to 64 for training E-C2ST and conducted 100 independent runs.  The rejection rates per sample size are displayed in \Cref{fig:mnist} for the four different cases. Note that the rejection rate in the case $p=0$ refers to the estimated type I error. We can see that  E-C2ST has superior power across the three levels of corruption compared to the baseline methods while keeping the type I error strictly below the significance level. 

\textbf{The mini-batch size.}
Here, we aim to determine the average number of samples required to reject the null hypothesis effectively. We perform power experiments on two datasets: MNIST ($p=1$) and KDEF, using different batch sizes of 8, 16, 32, 64, and 128. Our methodology differs from previous experiments as we use a sequential testing approach. We continuously sample new batches and stop only when the null hypothesis is rejected. This procedure allows for dynamic sample size adjustments needed for maximum power.

We illustrate our results in \Cref{fig:stoppingtimes} using 100 independent experiments per scenario. The lines in \Cref{fig:stoppingtimes} represent the estimated power. Our results show that smaller batch sizes (e.g., 16 or 32) lead to faster rejection of the null hypothesis regarding the number of samples but not the number of batches required. Conversely, larger sample sizes require more samples to reject the null hypothesis but reduce the number of steps involved, implying less computational power. For example, in the KDEF scenario, maximum power is achieved in only three steps when the batch size is 128. However, reducing batch size too strongly (e.g. batch size 8) can lead to reduced power (see MNIST case). We believe this results from training instabilities for very small batch sizes, leading to suboptimal networks at each step and thus small $\Ev$-values.

\begin{figure}
    \centering
    \includegraphics[scale=0.55]{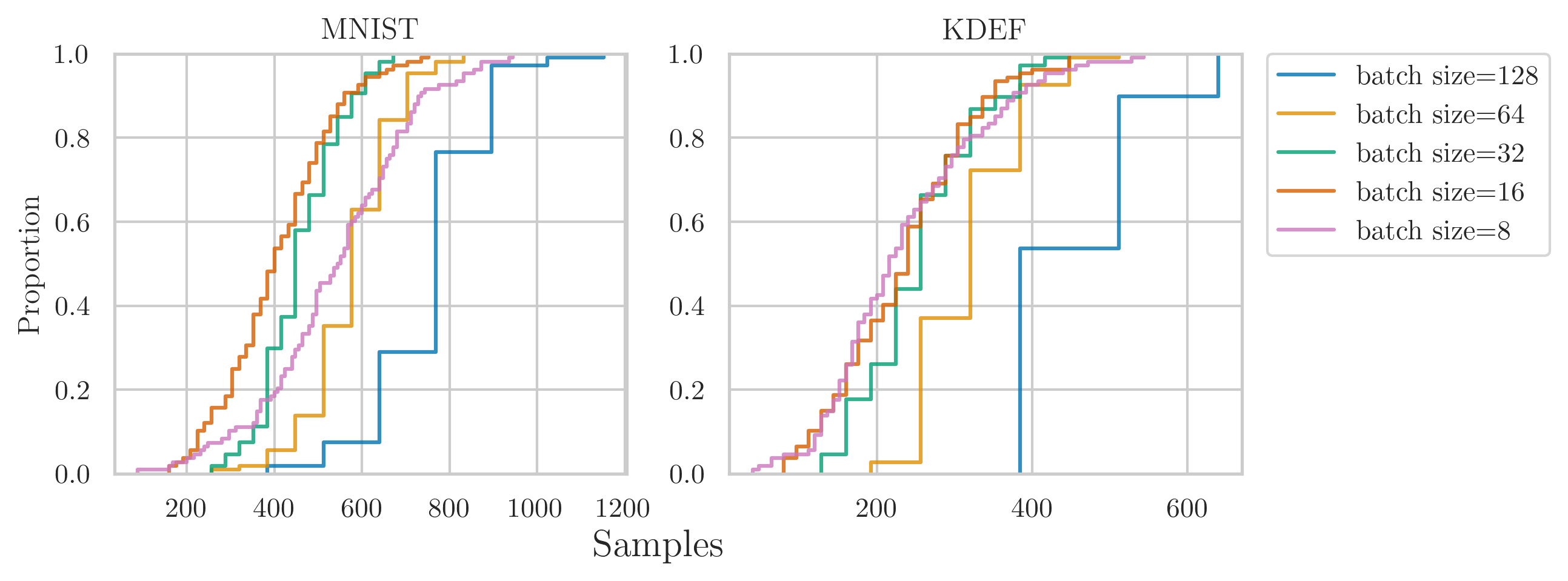}
    \caption{Power experiments performed on the MNIST and KDEF datasets using different batch sizes (8, 16, 32, 64, 128). The lines indicate the estimated power. In general, we can conclude that smaller batch sizes  (except very small batches) allow faster rejection of the null hypothesis in terms of number of samples, and larger batch sizes require fewer steps but more samples. }
    \label{fig:stoppingtimes}
\end{figure}

\textbf{The initial value of $\lambda$.}
We conducted power experiments on two datasets: MNIST (with $p=1$) and KDEF, keeping the batch size constant at 32 samples. We varied $\lambda$ over 0.1, 0.3, 0.5, 0.7 and 0.9. We aim to determine the average number of samples required to reject the null hypothesis effectively. As in the previous experiment, we use a sequential testing strategy.

 We visualized the results using 100 independent experiments for each scenario, as shown in \Cref{fig:stoppingtimeslambda}. The lines represent the power of the tests. From these results, we observe that the initial value of $\lambda$ does not significantly affect the power of the test in the KDEF scenario. However, in the MNIST case,  we observe a more visible effect: higher values of $\lambda$ seem to increase the power of the test. This occurs because in the early stages of testing, when sample sizes are smaller, the neural network tends to show suboptimal test performance. This is particularly evident when the initial $\lambda$ is low, resulting in lower conditional $\Ev$-values. They affect the overall $\Ev$-value, potentially resulting in reduced power performance. However, it is important to note that the initial $\lambda$ value does not drastically affect the number of samples required to achieve maximum power in our tests. 
 
To summarize the two ablation studies, we compared the best E-C2ST performer based on the last two experiments with the baseline methods in \Cref{fig:summary-abl} in the appendix, where we observe a significant gain in terms of power compared to the initial E-C2ST.
\begin{figure}
    \centering
    \includegraphics[scale=0.55]{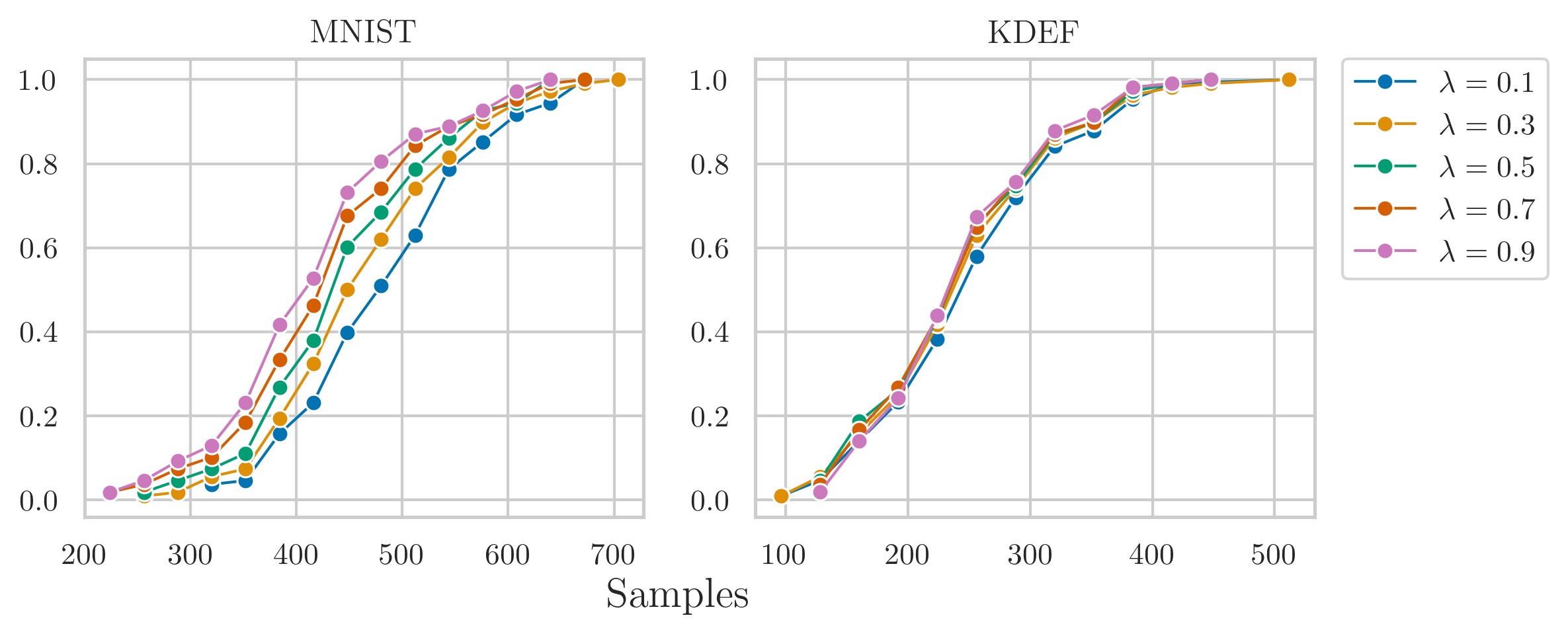}
    \caption{Power experiments performed on the MNIST and KDEF datasets for varying $\lambda=0.1,0.3, 0.5, 0.7, 0.9$ and fixed batch size of 32 samples. The lines indicate the estimated power. The initial value of $\lambda$ had no significant impact on the test performance in the KDEF scenario, while in the MNIST case, higher $\lambda$ values increased the test performance. This effect is due to the early stages of testing, where lower initial $\lambda$ values and the suboptimal neural network performance lead to lower batch $\Ev$-values.}
    \label{fig:stoppingtimeslambda}
\end{figure}
\section{Discussion}
We present E-CS2T, a deep $\Ev$-value-based classifier two-sample test tailored for both fixed and streaming data testing scenarios. This method combines predictive conditional independence tests with M-split likelihood ratio tests, resulting in a consistent test based on a $\Ev$-variable that guarantees finite sample batch-wise anytime type I error control. Through empirical evaluations, we demonstrate that E-CS2T outperforms traditional $p$-value-based methods in terms of power, effectively utilizing the data and maintaining type I error below the specified significance level. In addition, our observations highlight a trade-off between batch size and computational efficiency and suggest an interesting future direction for exploring this trade-off from a theoretical and practical perspective. 

\textbf{Online learning.} Even though our test is computationally efficient for large data sets, for small data regimes, it is more expensive than the considered baseline methods (see \Cref{tab:time}). A promising future work direction is integrating an online training procedure into our method. By doing so, we could significantly reduce the computational time and make it linearly proportional to the number of batches processed.

\textbf{Active Learning.} Another interesting direction for future work is to use active learning techniques to improve the performance of our method. For example, we could prune each batch before including it in the training set by actively selecting the most informative samples. With this strategy, we could potentially improve the learning efficiency and effectiveness of the test.
\subsubsection*{Acknowledgments}
We want to thank Peter Gr\"unwald for his exciting talk at the AI4Science Colloquium, which introduced us to the theory and background of $\Ev$-variables and safe testing. He inspired us to learn more about these topics and to pursue this research project. 

Tim Bakker is partially supported by the \href{https://efficientdeeplearning.nl/}{Efficient Deep Learning} research program, which is financed by the Dutch Research Council (NWO) in the domain ``Applied and Engineering Sciences'' (TTW).

\bibliography{lib}
\bibliographystyle{tmlr}

\newpage
\appendix
\section{Proofs}
\label{app:proof}
\paragraph{Proof of \Cref{lem:prod-e-val}}
 \begin{proof} By Fubini's theorem we get:
  \begin{align*}
     &\E_{\theta,\psi}\lB E^{(3)}\mid z\rB \\
     &= \int E^{(3)}(x,y|z) \, \lp P_\theta \otimes P_\psi \rp (dx,dy|z)\\
      &= \int \int E^{(2)}(x|y,z) \cdot E^{(1)}(y|z) \notag  \\ 
     & \qquad  \, P_\theta(dx|y,z) \, P_\psi(dy|z) \\    
     &= \int \lp \int E^{(2)}(x|y,z) \, P_\theta(dx|y,z) \rp \notag \\ 
     & \qquad \cdot E^{(1)}(y|z)  \, P_\psi(dy|z) \\
     &\le \int 1 \cdot E^{(1)}(y|z)  \, P_\psi(dy|z) \le 1.  &  \qedhere
  \end{align*}
 \end{proof}
 \begin{lemma}[Convex combinations of conditional $\Ev$-variables]
   \label{lem:conv-e-val}
   If $\overline{E}$ and $\underline{E}$ are two conditional $\Ev$-variables w.r.t.\ 
   $\HcalO \ins \Pcal(\Xcal)^\Zcal$ and
       $g:\, \Zcal \to [0,1]$
   a measurable map, then:
   \begin{align*}
       \Tilde{E}(x|z) &:= g(z) \cdot \overline{E}(x|z) + (1-g(z))\cdot \underline{E}(x|z),
   \end{align*}
   also defines a conditional $\Ev$-variable w.r.t.\ $\HcalO$.
\end{lemma}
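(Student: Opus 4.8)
The plan is to verify directly the three requirements of \Cref{def:cond-e-val} for $\tilde E$: non-negativity, measurability, and the conditional expectation bound. Non-negativity is immediate, since $g(z) \in [0,1]$ forces both weights $g(z)$ and $1-g(z)$ to be non-negative, and $\overline{E}$, $\underline{E}$ are non-negative by assumption; measurability follows because $\tilde E$ is built from the measurable maps $g$, $\overline{E}$, $\underline{E}$ by multiplication and addition.

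The substance is the expectation bound. First I would fix an arbitrary $P_\theta \in \HcalO$ and $z \in \Zcal$ and write out $\E_\theta[\tilde E \mid z] = \int \tilde E(x|z) \, P_\theta(dx|z)$. The key observation is that for fixed $z$ the quantity $g(z)$ is a constant with respect to the integration variable $x$, so it factors out of the $x$-integral. By linearity of the integral this gives
\begin{align*}
\E_\theta[\tilde E \mid z] = g(z) \int \overline{E}(x|z) \, P_\theta(dx|z) + (1-g(z)) \int \underline{E}(x|z) \, P_\theta(dx|z) = g(z)\, \E_\theta[\overline{E}\mid z] + (1-g(z))\, \E_\theta[\underline{E}\mid z].
\end{align*}

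Finally I would invoke that $\overline{E}$ and $\underline{E}$ are themselves conditional $\Ev$-variables w.r.t.\ $\HcalO$, so each of the two conditional expectations is at most $1$; since the convex weights are non-negative and sum to one, we obtain $\E_\theta[\tilde E \mid z] \le g(z) + (1-g(z)) = 1$. As $\theta$ and $z$ were arbitrary, this establishes the bound for all $P_\theta \in \HcalO$ and all $z \in \Zcal$, completing the verification. There is no genuine obstacle here: the only point requiring a moment of care is that $g$ may depend on the conditioning variable $z$ but \emph{not} on $x$, which is exactly what allows it to pass outside the $x$-integral as a scalar; had the weight depended on $x$ the linearity step would no longer isolate the two $\Ev$-variable expectations and the argument would fail.
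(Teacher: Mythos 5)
Your proof is correct and takes essentially the same route as the paper's: both expand $\E_\theta[\tilde E \,|\, z]$ by linearity of the integral, pull the $x$-independent weight $g(z)$ out of the $x$-integral, bound each of $\E_\theta[\overline{E}\,|\,z]$ and $\E_\theta[\underline{E}\,|\,z]$ by $1$, and conclude via the convex combination $g(z)+(1-g(z))=1$. Your extra remarks on non-negativity and measurability are routine additions that the paper leaves implicit.
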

\paragraph{Proof of  \Cref{lem:conv-e-val}}
\begin{align*}
    &\E_{\theta}\lB \Tilde{E}\mid z\rB \\  
    &=\int \Tilde{E}(x|z) \,  P_\theta(dx|z)\\
     &= \int \lp g(z) \cdot \overline{E}(x|z) + (1-g(z))\cdot \underline{E}(x|z) \rp P_\theta(dx|z)\\
     &=  g(z) \int  \overline{E}(x|z)P_\theta(dx|z) + (1-g(z))\int  \underline{E}(x|z) P_\theta(dx|z)\\
     &\leq g(z)\cdot\one+(1-g(z))\cdot\one=\one
\end{align*}
\paragraph{Proof of \eqref{eq:slr-e-var} being an E-variable} 
\begin{proof} 
 \label{prf:slrt}
    For $\theta \in \Theta_\nul$ we have:
 \begin{align*}
   & \E_\theta\lB E^{(m)}\mid z^{(<m)} \rB \\
   &= \int E^{(m)}(z^{(m)}|z^{(<m)})\, p_\theta(y^{(m)}|y^{(<m)})\, \mu(dz^{(m)}) \\
   &= \int \frac{p_\alt(y^{(m)}|x^{(m)},z^{(<m)})}{\max_{\tilde \theta \in \Theta_\nul} p_{\tilde \theta}(y^{(m)}|y^{(<m)})}\, p_\theta(y^{(m)}|y^{(<m)}) \, \mu(dz^{(m)}) \\
   &\le \int \frac{p_\alt(y^{(m)}|x^{(m)},z^{(<m)})}{p_\theta(y^{(m)}|y^{(<m)})} \,  p_\theta(y^{(m)}|y^{(<m)})\, \mu(dz^{(m)}) \\
   &= \int p_\alt(y^{(m)}|x^{(m)},z^{(<m)})\, \mu(dz^{(m)}) = 1.  \qedhere
 \end{align*}
\end{proof}

\begin{Lem}
\label{lem:av-e-val}
Let $E_1,\ldots, E_D$ are $D$ $\Ev$-variables with respect to the same null hypothesis class $\HcalO.$ Then, the average over all $D$ $\Ev$-variables is an $\Ev$-variable $\Bar{E}:=\frac{1}{D}\sum_{i=1}^D E_i$ is an  $\Ev$-variable.
\end{Lem}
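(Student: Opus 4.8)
The plan is to verify the two defining properties of an $\Ev$-variable for $\Bar{E}$ directly: non-negativity and the expectation bound under every null distribution. Non-negativity is immediate, since $\Bar{E} = \frac{1}{D}\sum_{i=1}^D E_i$ is a nonnegative linear combination of the nonnegative maps $E_i$, and hence is itself a nonnegative measurable map. So the entire content of the lemma reduces to checking the expectation inequality.

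For the expectation bound, I would fix an arbitrary $P_\theta \in \HcalO$ and simply apply linearity of expectation. Since each $E_i$ is an $\Ev$-variable w.r.t.\ $\HcalO$, we have $\E_\theta[E_i] \le 1$ for all $i$. Therefore
\begin{align*}
    \E_\theta[\Bar{E}] = \E_\theta\!\lB \frac{1}{D}\sum_{i=1}^D E_i \rB = \frac{1}{D}\sum_{i=1}^D \E_\theta[E_i] \le \frac{1}{D}\sum_{i=1}^D 1 = 1,
\end{align*}
which establishes the claim. In the conditional setting of \Cref{def:cond-e-val} the argument is identical, carrying a conditioning variable $z$ through: one bounds $\E_\theta[\Bar{E}\mid z] = \frac{1}{D}\sum_i \E_\theta[E_i\mid z] \le 1$ for every $z \in \Zcal$.

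Alternatively, I would note that this result is an immediate corollary of \Cref{lem:conv-e-val}: taking the constant weight function $g \equiv \tfrac{1}{2}$ shows that the average of two $\Ev$-variables is again an $\Ev$-variable, and a straightforward induction on $D$ (writing $\Bar{E}_D = \tfrac{D-1}{D}\Bar{E}_{D-1} + \tfrac{1}{D}E_D$ as a convex combination) recovers the general case. I expect there to be no genuine obstacle here — the statement is essentially a restatement of the convexity (closedness under convex combination) of the set of $\Ev$-variables, which in turn rests only on the linearity and monotonicity of expectation. The only point worth stating carefully is measurability of $\Bar{E}$, which holds because finite sums and scalar multiples of measurable functions are measurable.
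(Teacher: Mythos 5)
Your proof is correct and takes essentially the same route as the paper, which likewise fixes an arbitrary $P_\theta \in \HcalO$ and applies linearity of expectation (written there as an integral against the density) to get $\E_\theta[\Bar{E}] = \frac{1}{D}\sum_{i=1}^D \E_\theta[E_i] \le 1$. Your added remarks on measurability and the alternative derivation via \Cref{lem:conv-e-val} with induction are sound but not needed beyond the paper's one-line computation.
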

\begin{proof}
Let $p_{\theta} \in \HcalO.$ Then, for $\Bar{E}$ we get
\begin{align*}
  \E_\theta\lB \Bar{E} \rB  = \int\Big(\frac{1}{D}\sum_{i=1}^D E_i(x)\Big)p_{\theta}(x)\mu(dx)
  =\frac{1}{D}\sum_{i=1}^D\int E_i(x)p_{\theta}(x)\mu(dx)
  \le \frac{1}{D}\sum_{i=1}^D 1=1
\end{align*}

\end{proof}

\paragraph{Proof of  \Cref{thm:batch-anytime-type-I}.}

First, we here shortly review Ville's inequality:

\begin{Thm}[Ville's Inequality, see \cite{Vil39}]
    \label{thm:ville}
    Let $(S^{(M)})_{M \in \N}$ be a non-negative supermartingale, $S^{(M)}:\, (\Omega,\Bcal_\Omega,P) \to [0,\infty]$, $M \in \N$, w.r.t.\ filtration $\Fcal=(\Fcal_M)_{M \in \N}$, $\Fcal_M \ins \Bcal_\Omega$.
    Then for every $s > 1$ we have the inequality:
    \begin{align*}
        P\lp \exists M \in \N.\, S^{(M)} \ge s \rp \le \frac{\E[S^{(1)}]}{s}.
    \end{align*}
\end{Thm}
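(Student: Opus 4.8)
The plan is to reduce this anytime statement to a one-sided control on a first-crossing time, exploiting the optional-stopping stability of supermartingales. First I would introduce the stopping time $\tau := \inf\lC M \in \N : S^{(M)} \ge s \rC$ (with $\inf\emptyset = \infty$) and observe that the event of interest is exactly $\lC \exists M \in \N : S^{(M)} \ge s \rC = \lC \tau < \infty \rC$. Since the asserted bound is vacuous when $\E[S^{(1)}] = \infty$, I may assume $\E[S^{(1)}] < \infty$ from the outset, which also sidesteps the $[0,\infty]$-valued codomain.

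Next, for each fixed horizon $N \in \N$ I would pass to the stopped process $T^{(N)} := S^{(\tau \wedge N)}$. The structural fact I rely on is that truncating a supermartingale at a stopping time again yields a supermartingale with respect to $\Fcal$; crucially this needs no extra integrability hypothesis because $\tau \wedge N$ is a \emph{bounded} stopping time. Consequently $\E[T^{(N)}] \le \E[T^{(1)}] = \E[S^{(1)}]$, where I used $T^{(1)} = S^{(1)}$ because $\tau \ge 1$.

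The heart of the argument is then a Markov-type estimate localized to the crossing event. Splitting $T^{(N)} = S^{(\tau)}\I_{\{\tau \le N\}} + S^{(N)}\I_{\{\tau > N\}}$ and discarding the second (nonnegative) summand gives $\E\lB S^{(\tau)}\I_{\{\tau \le N\}}\rB \le \E[S^{(1)}]$. On $\{\tau \le N\}$ the process has by definition reached level $s$, i.e.\ $S^{(\tau)} \ge s$, so $s\, P(\tau \le N) \le \E\lB S^{(\tau)}\I_{\{\tau \le N\}}\rB \le \E[S^{(1)}]$, hence $P(\tau \le N) \le \E[S^{(1)}]/s$. Finally I would let $N \to \infty$ and invoke continuity of $P$ from below along the increasing events $\{\tau \le N\} \uparrow \{\tau < \infty\}$ to conclude $P(\tau < \infty) \le \E[S^{(1)}]/s$, which is precisely the claim.

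The hard part will be the careful optional-stopping bookkeeping: justifying that $\lp S^{(\tau \wedge N)}\rp_{N \in \N}$ is a supermartingale and that the stopped value $S^{(\tau)}$ is well-defined and $\Fcal$-measurable on $\{\tau \le N\}$. Both reduce to the observation that $\tau \wedge N$ is bounded, so no uniform-integrability condition is required; this is the only place where measure-theoretic care is genuinely needed, the remaining steps being the trivial finite-expectation reduction and the routine limit interchange.
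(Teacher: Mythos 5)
Your proposal is correct, but there is nothing in the paper to compare it against: the paper states \Cref{thm:ville} as a classical result, citing \cite{Vil39}, and uses it purely as a black box in the proof of \Cref{thm:batch-anytime-type-I} (where the actual work is showing that $\lp E^{(\le M)}\rp_{M\in\N}$ is a non-negative supermartingale). What you have written is the standard optional-stopping proof of Ville's inequality, and all the steps check out: the reduction to $\E[S^{(1)}]<\infty$ (which, by the supermartingale property and non-negativity, propagates to $\E[S^{(M)}]<\infty$ and hence a.s.\ finiteness for all $M$, taking care of the $[0,\infty]$-valued codomain); the fact that $\tau\wedge N$ is a bounded stopping time, so $\lp S^{(\tau\wedge N)}\rp_{N\in\N}$ is a supermartingale with no uniform-integrability hypothesis needed, since $S^{(\tau\wedge(N+1))}-S^{(\tau\wedge N)}=\lp S^{(N+1)}-S^{(N)}\rp\I_{\lC\tau>N\rC}$ with $\lC\tau>N\rC\in\Fcal_N$; the measurability of $S^{(\tau)}\I_{\lC\tau\le N\rC}=\sum_{M=1}^N S^{(M)}\I_{\lC\tau=M\rC}$; the Markov-type bound $s\,P(\tau\le N)\le\E[S^{(1)}]$ from $S^{(\tau)}\ge s$ on $\lC\tau\le N\rC$; and the monotone limit $\lC\tau\le N\rC\uparrow\lC\tau<\infty\rC$. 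Two minor observations: your argument in fact proves the bound for every $s>0$, not just $s>1$, so the theorem's restriction is immaterial; and by supplying this self-contained proof you are doing strictly more than the paper, which defers entirely to the literature here.
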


\begin{proof}
    
    The sequence $\lp E^{(\le M)} \rp_{M \in \N}$ constitutes a non-negative super-martingale of $\Ev$-variables w.r.t. the filtration $\Fcal:=\lp\sigma(X^{(\le M)}) \rp$ due to the following computation for $P_\theta \in \HcalO$:
    \begin{align*}
     \E_\theta\lB E^{(\le M+1)} \st x^{(\le M)} \rB 
     &= \int \prod_{m=1}^{M+1} E(X^{(m)}|x^{(<m)})\,\mu(dx^{(M+1)}) \\
     &= \int  E(X^{(M+1)}|x^{(<M+1)})\prod_{m=1}^{M} E(X^{(m)}|x^{(<m)})\,\mu(dx^{(M+1)})  \\
     &= \prod_{m=1}^{M} E(X^{(m)}|x^{(<m)})\int  E(X^{(M+1)}|x^{(<M+1)})\,\mu(dx^{(M+1)})  \\
     &\le \prod_{m=1}^{M} E(X^{(m)}|x^{(<m)})\cdot\one \\
     &= E^{(\le M)}(x^{(\le M)})  \cdot 1  .
    \end{align*}

    By Ville's inequality, see  \Cref{thm:ville}, we get:
    \begin{align*}
        P_\theta\lp \exists M \in \N.\, E_\theta^{(\le M)} \ge \alpha^{-1} \rp \le \alpha.
    \end{align*}

\end{proof}

\newpage
\section{Type II Error Control}
\label{sec:power}

A general finite sample bound for the type II error of testing based on the product of (conditional) i.i.d.\ $\Ev$-variables can be achieved by Sanov's theorem, see \cite{Csi84,Bal20}.

\begin{Thm}[Conditional type II error control for conditional i.i.d.\ $\Ev$-variables]\label{thm:type-2-cond-iid}
    Let $E:\,\Xcal \times \Zcal \to \R_{\ge 0}$ be a conditional $\Ev$-variable w.r.t.\ $\HcalO$ given $\Zcal$.  
    Let $X_1,\dots,X_N:\, \Omega \times \Zcal \to \Xcal$ be conditional random variables that are i.i.d.\  conditioned on $\Zcal$. 
    Let $E^{(N)}:=\prod_{n=1}^N E(X_n|Z)$. 
    Let $\alpha \in (0,1]$, $\gamma_N:=-\frac{1}{N} \log \alpha \ge 0$ and for $\gamma \in \R_{\ge 0}$ put:
    \begin{align*}
        \Acal_{\gamma}^{|z} &:= \lC Q \in \Pcal(\Xcal) \st  \E_{X \sim Q}[\log E(X|z)] \le \gamma \rC.
    \end{align*}
    Then for every $P_\theta \in \HcalA$ and $z \in \Zcal$ we have the following type II error bound:
    \begin{align*}
        P_\theta\lp E^{(N)} \le \alpha^{-1} \middle|Z=z\rp \le \exp\lp -N \cdot \KL(\Acal_{\gamma_N}^{|z}\|P_\theta^{|z}) \rp,
    \end{align*}
    which converges to $0$ if $\KL(\Acal_{\gamma}^{|z}\|P_\theta^{|z}) >0$ for some $\gamma >0$.
    Note that for a subset $\Acal \ins \Pcal(\Xcal)$ we abbreviate:
    \begin{align*}
        \KL(\Acal\|P) &:= \inf_{Q \in \Acal} \KL(Q\|P).
    \end{align*}
\end{Thm}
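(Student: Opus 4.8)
The plan is to fix $z\in\Zcal$, condition on $Z=z$, and reduce the type II event to a lower-tail large-deviation event for an empirical mean, which I then control by a Chernoff bound whose exponent I identify with the constrained relative-entropy functional $\KL(\Acal_{\gamma_N}^{|z}\|P_\theta^{|z})$ via convex duality. After conditioning, $X_1,\dots,X_N$ are i.i.d.\ draws from $P_\theta^{|z}$. Writing $f_z(x):=\log E(x|z)$, the product structure $E^{(N)}=\prod_{n=1}^N E(X_n|z)$ gives $\log E^{(N)}=\sum_{n=1}^N f_z(X_n)$, and since $\gamma_N=-\tfrac1N\log\alpha$, the event $\{E^{(N)}\le\alpha^{-1}\}$ is exactly $\{\tfrac1N\sum_{n=1}^N f_z(X_n)\le\gamma_N\}$. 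In terms of the empirical measure $\hat Q_N:=\tfrac1N\sum_{n=1}^N\delta_{X_n}$ this is the event $\{\hat Q_N\in\Acal_{\gamma_N}^{|z}\}$, because $\E_{X\sim\hat Q_N}[f_z(X)]=\tfrac1N\sum_n f_z(X_n)$ and $\Acal_{\gamma_N}^{|z}$ is precisely the half-space $\{Q:\E_{X\sim Q}[\log E(X|z)]\le\gamma_N\}$. Note that no $\Ev$-variable property is needed here: $f_z$ enters only as a fixed measurable statistic whose lower tail under the alternative we estimate.

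Next I would apply the exponential Markov (Chernoff) inequality to this lower tail. For every $\lambda\ge 0$, using conditional independence,
\begin{align*}
 P_\theta\lp \textstyle\sum_{n=1}^N f_z(X_n) \le N\gamma_N \mid Z=z\rp
 &\le e^{\lambda N\gamma_N}\, \E_{X\sim P_\theta^{|z}}\lB E(X|z)^{-\lambda}\rB^{N},
\end{align*}
and optimizing over $\lambda\ge 0$ yields
\begin{align*}
 P_\theta\lp E^{(N)}\le\alpha^{-1}\mid Z=z\rp
 &\le \exp\lp -N\sup_{\lambda\ge 0}\lB -\lambda\gamma_N - \log\E_{X\sim P_\theta^{|z}}\lB E(X|z)^{-\lambda}\rB \rB\rp .
\end{align*}

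The crux is to recognize the exponent as the Sanov rate. By the Gibbs variational principle one has $\inf_Q\{\KL(Q\|P_\theta^{|z})+\lambda\,\E_{X\sim Q}[f_z(X)]\}=-\log\E_{X\sim P_\theta^{|z}}[e^{-\lambda f_z(X)}]$, and Lagrangian duality for minimizing $\KL(\cdot\|P_\theta^{|z})$ over the convex half-space $\Acal_{\gamma_N}^{|z}$ gives $\sup_{\lambda\ge 0}[-\lambda\gamma_N-\log\E_{P_\theta^{|z}}[e^{-\lambda f_z}]]=\inf_{Q\in\Acal_{\gamma_N}^{|z}}\KL(Q\|P_\theta^{|z})=\KL(\Acal_{\gamma_N}^{|z}\|P_\theta^{|z})$, where I use that $e^{-\lambda f_z}=E(\cdot|z)^{-\lambda}$. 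Substituting this identity into the Chernoff bound gives the claimed estimate. This matching of the Chernoff exponent to the constrained-entropy infimum with \emph{no} duality gap is where I expect the main difficulty, since strong duality for the half-space requires mild regularity (finiteness/lower semicontinuity of the log-MGF and a Slater-type interior point); rather than redoing the Legendre transform by hand I would invoke the finite-$N$ Sanov-type bounds for linear constraints of \citep{Csi84,Bal20}, which deliver exactly this prefactor-free form.

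Finally, for the convergence claim: if $\KL(\Acal_\gamma^{|z}\|P_\theta^{|z})>0$ for some $\gamma>0$, then, because $\gamma_N=-\tfrac1N\log\alpha\downarrow 0$, we have $\gamma_N\le\gamma$ for all large $N$, so $\Acal_{\gamma_N}^{|z}\subseteq\Acal_\gamma^{|z}$ and hence $\KL(\Acal_{\gamma_N}^{|z}\|P_\theta^{|z})\ge\KL(\Acal_\gamma^{|z}\|P_\theta^{|z})>0$ by monotonicity of the infimum over a shrinking set. The exponent therefore grows at least linearly in $N$, and the bound tends to $0$.
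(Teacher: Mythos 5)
Your proposal is correct and follows essentially the same route as the paper: both reduce the event $\{E^{(N)}\le\alpha^{-1}\}$ to the empirical measure lying in the convex half-space $\Acal_{\gamma_N}^{|z}$ and then invoke the prefactor-free finite-sample Sanov bound of \cite{Csi84,Bal20} for (completely) convex sets, so your Chernoff-plus-duality discussion serves as motivation for where the exponent comes from rather than as a genuinely different argument, since you ultimately defer to the same citation. Your explicit verification of the convergence claim (once $\gamma_N\le\gamma$ one has $\Acal_{\gamma_N}^{|z}\ins\Acal_{\gamma}^{|z}$, so the KL-infimum is bounded below by a positive constant) is a correct detail that the paper's proof leaves implicit.
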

\begin{proof} If $\hat P_N:= \frac{1}{N}\sum_{n=1}^N\delta_{X_n|Z}$ is the empirical distribution, then we get the following equivalence when conditioned on $Z=z$:
    \begin{align*}
        E^{(N)|z} \le \alpha^{-1} & 
        \iff \prod_{n=1} E(X_n|z) \le \alpha^{-1}\\
        & \iff \frac{1}{N}\sum_{n=1}^N \log E(X_n|z) \le - \frac{1}{N} \log \alpha  =: \gamma_N\\
        & \iff \E_{X \sim \hat P_N^{|z}}\lB \log E(X|z)\rB \le \gamma_N\\
        &\iff \hat P_N^{|z} \in \Acal_{\gamma_N}^{|z}.
    \end{align*}
    The bound then follows by a simple application of Sanov's theorem, see \cite{Csi84,Bal20}, for each $z \in \Zcal$ individually:
    \begin{align*}
     P_\theta\lp E^{(N)} \le \alpha^{-1} \middle|Z=z\rp =    P_\theta\lp \hat P_N \in \Acal_{\gamma_N}^{|z}\middle| Z=z\rp \le \exp\lp -N \cdot \KL(\Acal_{\gamma_N}^{|z}\|P_\theta^{|z}) \rp,
    \end{align*}
    which requires the i.i.d.\ assumption (conditioned on $Z$) and that $\Acal_{\gamma_N}^{|z}$ is completely convex, which it is.
\end{proof}
\begin{Lem}
    \label{lem:kl-E-log-e-equiv}
    Consider the situation in  \Cref{thm:type-2-cond-iid} and fix $z \in \Zcal$. Then, the first statement implies the second:
    \begin{enumerate}
        \item $\KL(\Acal_{\gamma(z)}^{|z}\|P_\theta^{|z}) >0$ for some $\gamma(z) >0$.
        \item $\E_{X \sim P_\theta^{|z}}[ \log E(X|z)] > 0$.
    \end{enumerate}
    If, furthermore, $\sup_{x \in \Xcal} | \log E(x|z)| < \infty$ then the set $\Acal_{\gamma(z)}^{|z}$ is $\TV$-closed in $\Pcal(\Xcal)$ for every $\gamma(z) \ge 0$. In this case, the second statement also implies the first one, where we then have the implication:
    \begin{align*}
       0\le \gamma(z) < \E_{X \sim P_\theta^{|z}}[ \log E(X|z)] \qquad \implies \qquad \KL(\Acal_{\gamma(z)}^{|z}\|P_\theta^{|z}) >0.
    \end{align*}
    \begin{proof}
        ``$\implies$'': If $\E_{X \sim P_\theta^{|z}}[ \log E(X|z)] \le 0$ then $P_\theta^{|z} \in \Acal_0$. 
        Since $\Acal_0 \ins \Acal_\gamma$ for every $\gamma >0$ we get: $\KL(\Acal_{\gamma}^{|z}\|P_\theta^{|z}) =0$ for every $\gamma >0$. 

        ``$\Longleftarrow$'': Assume $C:=\sup_{x \in \Xcal} | \log E(x|z)| < \infty$ and $\gamma \ge 0$.
        Let $Q \in \Pcal(\Xcal)$ be a $\TV$-limit point of a sequence $P_n \in \Acal_\gamma^{|z}$, $n \in \N$.
        Then we have the inequality:
        \begin{align*}
            \E_{X \sim Q}[\log E(X|z)] &= \E_{X \sim Q}[\log E(X|z)] - \E_{X \sim P_n}[\log E(X|z)] + \E_{X \sim P_n}[\log E(X|z)] \\
            &\le |\E_{X \sim Q}[\log E(X|z)] - \E_{X \sim P_n}[\log E(X|z)]| + \E_{X \sim P_n}[\log E(X|z)] \\
            &\le C \cdot \TV(Q,P_n) + \gamma \\
            & \to \gamma.
        \end{align*}
        This shows that $Q \in \Acal_\gamma^{|z}$ as well, and, thus, $\Acal_\gamma^{|z}$ is $\TV$-closed.
        By way of contradiction now assume that $\E_{X \sim P_\theta^{|z}}[ \log E(X|z)] > 0$, but $\KL(\Acal_{\gamma}^{|z}\|P_\theta^{|z}) = 0$ for all $\gamma >0$. Since $\Acal_\gamma^{|z}$ is $\TV$-closed and (completely) convex we have that $P_\theta^{|z} \in \Acal_\gamma^{|z}$ for all $\gamma >0$. So we get:
        \begin{align*}
            \E_{X \sim P_\theta^{|z}}[ \log E(X|z)] \le \gamma,
        \end{align*}
        for all $\gamma >0$, and thus: $\E_{X \sim P_\theta^{|z}}[ \log E(X|z)] \le 0$, which is a contradiction to our assumption. 
    \end{proof}
\end{Lem}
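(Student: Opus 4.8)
The plan is to dispatch the three assertions of the lemma separately, following its logical skeleton, and to isolate exactly where the boundedness hypothesis $C:=\sup_{x\in\Xcal}|\log E(x|z)|<\infty$ becomes indispensable. For the implication $1\Rightarrow 2$ I would argue by contraposition, which needs no boundedness at all. Assuming $\E_{X\sim P_\theta^{|z}}[\log E(X|z)]\le 0$, the measure $P_\theta^{|z}$ satisfies the defining inequality of $\Acal_0^{|z}$, hence $P_\theta^{|z}\in\Acal_0^{|z}\ins\Acal_\gamma^{|z}$ for every $\gamma>0$ because the feasible sets only enlarge as $\gamma$ grows. Since $P_\theta^{|z}$ is itself feasible and $\KL(P_\theta^{|z}\|P_\theta^{|z})=0$, the infimum defining $\KL(\Acal_\gamma^{|z}\|P_\theta^{|z})$ is zero for all $\gamma>0$, which is precisely the negation of statement 1.

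Next I would prove the $\TV$-closedness of $\Acal_\gamma^{|z}$ under the boundedness hypothesis. The set is cut out by the single affine constraint $Q\mapsto\E_{X\sim Q}[\log E(X|z)]\le\gamma$, and the point is that this functional is $\TV$-continuous precisely because its integrand is bounded. Concretely, for any $\TV$-convergent sequence $P_n\to Q$ with $P_n\in\Acal_\gamma^{|z}$, I would bound $|\E_{X\sim Q}[\log E]-\E_{X\sim P_n}[\log E]|\le C\cdot\TV(Q,P_n)\to 0$, so that $\E_{X\sim Q}[\log E]\le\gamma$ and $Q\in\Acal_\gamma^{|z}$. This is the first place boundedness is truly used: with an unbounded integrand the constraint functional may fail to be $\TV$-continuous and the limit can escape the set.

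Finally, for $2\Rightarrow 1$ under boundedness I would argue by contradiction, and this is where I expect the main obstacle. Suppose $0\le\gamma<\E_{X\sim P_\theta^{|z}}[\log E(X|z)]$ but $\KL(\Acal_\gamma^{|z}\|P_\theta^{|z})=0$; then there is a minimizing sequence $Q_n\in\Acal_\gamma^{|z}$ with $\KL(Q_n\|P_\theta^{|z})\to 0$. The crux is to convert this vanishing relative entropy into genuine membership of $P_\theta^{|z}$ in $\Acal_\gamma^{|z}$. I would bridge this gap with Pinsker's inequality, $\TV(Q_n,P_\theta^{|z})^2\le\tfrac12\KL(Q_n\|P_\theta^{|z})\to 0$, so that $Q_n\to P_\theta^{|z}$ in $\TV$; the closedness established in the previous step then forces $P_\theta^{|z}\in\Acal_\gamma^{|z}$, i.e. $\E_{X\sim P_\theta^{|z}}[\log E]\le\gamma$, contradicting the strict inequality. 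Convexity of $\Acal_\gamma^{|z}$, while true, is not strictly needed once Pinsker and $\TV$-closedness are in hand; it only supplies an alternative information-projection route. The delicate thread tying everything together is boundedness: it is exactly what makes the feasible set $\TV$-closed and lets the Pinsker-driven convergence land inside it, which is why the converse fails in the general unbounded case.
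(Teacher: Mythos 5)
Your proof is correct and follows the same skeleton as the paper's: contraposition for $1\Rightarrow 2$ (no boundedness needed), the estimate $|\E_{X\sim Q}[\log E(X|z)]-\E_{X\sim P_n}[\log E(X|z)]|\le C\cdot\TV(Q,P_n)$ for $\TV$-closedness, and a contradiction argument for the converse. The one substantive difference lies in the final step: the paper deduces $P_\theta^{|z}\in\Acal_\gamma^{|z}$ from $\KL(\Acal_\gamma^{|z}\|P_\theta^{|z})=0$ by appealing to the set being ``$\TV$-closed and (completely) convex'' without spelling out the mechanism, whereas you make the mechanism explicit: take a minimizing sequence $Q_n\in\Acal_\gamma^{|z}$ with $\KL(Q_n\|P_\theta^{|z})\to 0$, apply Pinsker's inequality to get $\TV(Q_n,P_\theta^{|z})\to 0$, and conclude by closedness. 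Your remark that convexity is dispensable here is accurate — Pinsker converts vanishing relative entropy into $\TV$-convergence with no convexity hypothesis — so your argument is slightly leaner and fills in a detail the paper leaves implicit (convexity is, of course, still needed elsewhere, namely for Sanov's theorem in \Cref{thm:type-2-cond-iid}). A further small gain: by fixing an arbitrary $\gamma$ with $0\le\gamma<\E_{X\sim P_\theta^{|z}}[\log E(X|z)]$ from the outset, you prove the quantitative implication stated at the end of the lemma directly, rather than only the existential statement~1 that the paper's contradiction (which assumes $\KL(\Acal_\gamma^{|z}\|P_\theta^{|z})=0$ for \emph{all} $\gamma>0$) yields as written.
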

The unconditional version follows from the above by using the one-point space $\Zcal=\{\ast\}$ and reads like:

\begin{Cor}[Type II error control for i.i.d.\ $\Ev$-variables]
    Let $X_1,\dots,X_N$ be an i.i.d.\ sample, $E:\,\Xcal \to \R_{\ge 0}$ be an $\Ev$-variable w.r.t.\ $\Hcal_\nul$  and $E^{(N)}:=\prod_{n=1}^N E(X_n)$. Let $\alpha \in (0,1]$, $\gamma_N:=-\frac{1}{N} \log \alpha \ge 0$ and for $\gamma \in \R_{\ge 0}$ put:
    \begin{align*}
        \Acal_{\gamma} &:= \lC Q \in \Pcal(\Xcal) \st \E_Q[\log E] \le \gamma \rC.
    \end{align*}
    Then, for every $P_\theta \in \HcalA$ we have the following type II error bound:
    \begin{align*}
        P_\theta\lp E^{(N)} \le \alpha^{-1} \rp \le \exp\lp -N \cdot \KL(\Acal_{\gamma_N}\|P_\theta) \rp,
    \end{align*}
    which converges to $0$ if $\KL(\Acal_\gamma\|P_\theta) >0$ for some $\gamma >0$.
\end{Cor}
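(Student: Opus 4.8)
The plan is to obtain this Corollary as the special case of \Cref{thm:type-2-cond-iid} in which the conditioning space is a single point. First I would set $\Zcal=\{\ast\}$, so that a Markov kernel $\Zcal\to\Pcal(\Xcal)$ is just a probability measure on $\Xcal$, the conditional $\Ev$-variable requirement $\E_\theta[E\mid\ast]\le 1$ reduces to the ordinary requirement $\E_\theta[E]\le 1$, and the conditionally i.i.d.\ family $X_1,\dots,X_N$ becomes an unconditionally i.i.d.\ sample. Under this identification the threshold set $\Acal_{\gamma_N}^{|z}$ collapses to $\Acal_{\gamma_N}$ and the conditional law $P_\theta^{|z}$ to $P_\theta$, so the conclusion of \Cref{thm:type-2-cond-iid} reads
\[
P_\theta\lp E^{(N)} \le \alpha^{-1} \rp \le \exp\lp -N \cdot \KL(\Acal_{\gamma_N}\|P_\theta) \rp,
\]
which is exactly the asserted bound.

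For transparency I would recall the mechanism behind \Cref{thm:type-2-cond-iid} in this trivial-conditioning form, so that the specialization is self-contained. Writing $\hat P_N:=\frac1N\sum_{n=1}^N\delta_{X_n}$ for the empirical measure and taking logarithms, the event $\{E^{(N)}\le\alpha^{-1}\}$ is equivalent to $\{\frac1N\sum_{n=1}^N\log E(X_n)\le\gamma_N\}$, i.e.\ to $\{\E_{X\sim\hat P_N}[\log E]\le\gamma_N\}=\{\hat P_N\in\Acal_{\gamma_N}\}$. Since $\Acal_{\gamma_N}$ is a (completely) convex subset of $\Pcal(\Xcal)$, Sanov's theorem gives $P_\theta(\hat P_N\in\Acal_{\gamma_N})\le\exp(-N\KL(\Acal_{\gamma_N}\|P_\theta))$, recovering the stated inequality.

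The one point needing care, which I regard as the main obstacle, is the convergence claim: since $\gamma_N=-\frac1N\log\alpha\to 0$, the sets $\Acal_{\gamma_N}$ are shrinking and the exponent $\KL(\Acal_{\gamma_N}\|P_\theta)$ itself varies with $N$, so one cannot read off convergence from a single fixed exponent. To handle this I would use monotonicity: the family $\Acal_\gamma$ is nested increasing in $\gamma$, so for any fixed $\gamma>0$ and all $N$ large enough that $\gamma_N\le\gamma$ we have $\Acal_{\gamma_N}\subseteq\Acal_\gamma$, whence $\KL(\Acal_{\gamma_N}\|P_\theta)\ge\KL(\Acal_\gamma\|P_\theta)$ (an infimum over a smaller set is no smaller). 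Consequently, if $\KL(\Acal_\gamma\|P_\theta)>0$ for some $\gamma>0$, then for all sufficiently large $N$ the bound is dominated by $\exp(-N\cdot\KL(\Acal_\gamma\|P_\theta))$, which tends to $0$. This yields the claimed convergence and completes the argument.
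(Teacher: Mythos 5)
Your proof is correct and takes essentially the same route as the paper, which obtains this corollary in a single line by specializing \Cref{thm:type-2-cond-iid} to the one-point space $\Zcal=\{\ast\}$. Your monotonicity argument for the convergence claim (since $\gamma_N\downarrow 0$, eventually $\Acal_{\gamma_N}\subseteq\Acal_\gamma$, hence $\KL(\Acal_{\gamma_N}\|P_\theta)\ge\KL(\Acal_\gamma\|P_\theta)>0$) correctly fills in a detail the paper leaves implicit.
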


Relating to the simpler unconditional case of the Corollary, we can make the following clarifying remarks.

\begin{Rem} 
\begin{enumerate}
    \item The condition: $\KL(\Acal_\gamma\|P_\theta) >0$ for some $\gamma >0$, is slightly stronger 
        than the condition: $\E_{P_\theta}[\log E] >0$. If $\sup_{x \in \Xcal} |\log E(x)| < \infty$, then one can show that both those conditions are equivalent.
    \item If there exist $\delta,\gamma >0$ such that for all $P_\theta \in \HcalA$ 
        we have $\KL(\Acal_\gamma\|P_\theta) \ge \delta$, then we easily deduce the uniform type II error bound for $N \ge -\frac{\log \alpha}{\gamma}$:
    \begin{align*}
        \sup_{P_\theta \in \HcalA}  P_\theta\lp E^{(N)} \le \alpha^{-1} \rp \le \exp\lp -N \cdot \delta \rp.
    \end{align*}
    \end{enumerate}
\end{Rem}

From  \Cref{thm:type-2-cond-iid}, we can also get a type II error control for conditional i.i.d.\ $\Ev$-variables if we assume that the distribution for the conditioning variable is a marginal part of the hypothesis.

\begin{Cor}[Unconditional type II error control for conditional i.i.d.\ $\Ev$-variables]\label{cor:type-2-cond-iid-2}
    Let $E:\,\Xcal \times \Zcal \to \R_{\ge 0}$ be a conditional $\Ev$-variable w.r.t.\ $\HcalO$ given $\Zcal$.  
    Let $Z:\, \Omega \to \Zcal$ be a fixed random variable with values in $\Zcal$ and
    let $X_1,\dots,X_N:\, \Omega \to \Xcal$ be random variables that are i.i.d.\ conditioned on $Z$. 
    Let $E^{(N)}:=\prod_{n=1}^N E(X_n|Z)$. 
    Let $\alpha \in (0,1]$, $\gamma_N:=-\frac{1}{N} \log \alpha \ge 0$.
    Then, for every $P_\theta \in \HcalA$ we have the following type II error bound:
    \begin{align*}
        P_\theta\lp E^{(N)} \le \alpha^{-1} \rp 
        &\le \E_\theta \lB \exp\lp -N \cdot \KL(\Acal_{\gamma_N}^{|Z}\|P_\theta^{|Z}) \rp \rB \\
        &\le \exp\lp -N \cdot \inf_{z \in \Zcal}\KL(\Acal_{\gamma_N}^{|z}\|P_\theta^{|z}) \rp,
    \end{align*}
    where the middle term converges to $0$ for $N \to \infty$ if for $P_\theta(Z)$-almost-all $z\in \Zcal$ there exists some $\gamma >0$ such that $\KL(\Acal_{\gamma}^{|z}\|P_\theta^{|z}) >0$.
    The latter is e.g.\ the case if $\inf_{z \in \Zcal}\KL(\Acal_{\gamma}^{|z}\|P_\theta^{|z}) >0$ for some $\gamma >0$.
\end{Cor}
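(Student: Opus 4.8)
The plan is to reduce the statement to the conditional bound already established in \Cref{thm:type-2-cond-iid} and then integrate out $Z$ using the tower property, followed by a dominated-convergence argument for the vanishing claim.

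First I would condition on $Z$. By the tower property, $P_\theta(E^{(N)} \le \alpha^{-1}) = \E_\theta\lB P_\theta(E^{(N)}\le\alpha^{-1}\mid Z) \rB$. For each fixed value $Z=z$ the variables $X_1,\dots,X_N$ are i.i.d.\ by hypothesis and $E(\cdot\mid z)$ is a conditional $\Ev$-variable, so the inner conditional probability is exactly the object bounded in \Cref{thm:type-2-cond-iid}, namely $P_\theta(E^{(N)}\le\alpha^{-1}\mid Z=z) \le \exp\lp -N\cdot\KL(\Acal_{\gamma_N}^{|z}\|P_\theta^{|z})\rp$. Substituting this bound inside the expectation yields the first inequality $\E_\theta\lB \exp\lp -N\cdot\KL(\Acal_{\gamma_N}^{|Z}\|P_\theta^{|Z})\rp\rB$. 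The second inequality is then immediate: since $\KL(\Acal_{\gamma_N}^{|z}\|P_\theta^{|z}) \ge \inf_{z'\in\Zcal}\KL(\Acal_{\gamma_N}^{|z'}\|P_\theta^{|z'})$ for every $z$, the integrand is pointwise bounded by the constant $\exp\lp -N\inf_{z'}\KL(\Acal_{\gamma_N}^{|z'}\|P_\theta^{|z'})\rp$, which pulls out of the expectation.

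For the convergence claim I would combine monotonicity of the set map $\gamma \mapsto \Acal_\gamma^{|z}$ with dominated convergence. Since $\gamma_N = -\frac1N\log\alpha \to 0$ and $\Acal_{\gamma_N}^{|z} \ins \Acal_{\gamma}^{|z}$ whenever $\gamma_N \le \gamma$, taking the infimum of $\KL(\cdot\|P_\theta^{|z})$ over the smaller set gives a larger value, so $\KL(\Acal_{\gamma_N}^{|z}\|P_\theta^{|z}) \ge \KL(\Acal_{\gamma}^{|z}\|P_\theta^{|z})$. Hence, for $P_\theta(Z)$-almost every $z$, choosing $\gamma=\gamma(z)>0$ with $c(z):=\KL(\Acal_{\gamma(z)}^{|z}\|P_\theta^{|z})>0$, every sufficiently large $N$ satisfies $\gamma_N \le \gamma(z)$ and therefore $\exp\lp -N\cdot\KL(\Acal_{\gamma_N}^{|z}\|P_\theta^{|z})\rp \le \exp(-N\,c(z)) \to 0$. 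As each $\KL$ term is non-negative, the integrand is dominated by the constant $1$, so dominated convergence gives $\E_\theta\lB \exp\lp -N\cdot\KL(\Acal_{\gamma_N}^{|Z}\|P_\theta^{|Z})\rp\rB \to 0$. Finally, the sufficient condition $\inf_{z}\KL(\Acal_\gamma^{|z}\|P_\theta^{|z})>0$ for some $\gamma>0$ trivially entails the almost-everywhere hypothesis and makes the cleaner upper bound $\exp\lp -N\inf_z\KL(\cdots)\rp$ vanish directly.

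The main obstacle is bookkeeping rather than conceptual: justifying the interchange of limit and expectation when the threshold $\gamma_N$ itself shrinks with $N$, so that no single set $\Acal_\gamma^{|z}$ can be fixed in advance. The monotonicity step above is precisely what extracts a fixed $z$-dependent lower bound $c(z)$ valid for all large $N$, enabling dominated convergence; everything else follows mechanically from \Cref{thm:type-2-cond-iid}.
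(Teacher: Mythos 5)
Your proposal is correct and follows essentially the same route as the paper: condition on $Z$, apply \Cref{thm:type-2-cond-iid} pointwise, bound the integrand by the infimum over $z$, and conclude the vanishing claim via dominated convergence using the trivial bounds $0 \le \exp\lp -N \cdot \KL(\Acal_{\gamma_N}^{|z}\|P_\theta^{|z}) \rp \le 1$. Your explicit monotonicity step --- fixing $\gamma(z)>0$ and using $\Acal_{\gamma_N}^{|z} \ins \Acal_{\gamma(z)}^{|z}$ for large $N$ to extract the uniform-in-$N$ rate $c(z)$ --- is in fact a welcome elaboration of the pointwise convergence that the paper leaves implicit.
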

\begin{proof}
    The inequalities directly follow from 
     \Cref{thm:type-2-cond-iid} by plugging the random variable $Z$ into $z$ and taking expectation values:
    \begin{align*}
        P_\theta\lp E^{(N)} \le \alpha^{-1} \rp &=
        \E_\theta\lB P_\theta\lp E^{(N)} \le \alpha^{-1} \middle|Z\rp \rB \\
        &\le \E_\theta\lB \exp\lp -N \cdot \KL(\Acal_{\gamma_N}^{|Z}\|P_\theta^{|Z}) \rp \rB \\
        &\le \sup_{z \sim P_\theta(Z)} \exp\lp -N \cdot \KL(\Acal_{\gamma_N}^{|z}\|P_\theta^{|z}) \rp \\
        &= \exp\lp -N \cdot \inf_{z \sim P_\theta(Z)}\KL(\Acal_{\gamma_N}^{|z}\|P_\theta^{|z}) \rp\\
        &\le \exp\lp -N \cdot \inf_{z \in \Zcal}\KL(\Acal_{\gamma_N}^{|z}\|P_\theta^{|z}) \rp.
    \end{align*}
Here $\sup_{z \sim P_\theta(Z)}$ and $\inf_{z \sim P_\theta(Z)}$ denote the essential supremum, essential infimum, resp., w.r.t.\ $P_\theta(Z)$.

The statement of the convergence follows from the dominated convergence theorem and the observation that for every $z \in \Zcal$ we have the trivial bounds:
\begin{align*}
    0 \le \exp\lp -N \cdot \KL(\Acal_{\gamma_N}^{|z}\|P_\theta^{|z}) \rp \le 1.
\end{align*}

This shows the claim.
\end{proof}

\subsection{Type II Error M=2}
\label{app:consistency-m-2}

\begin{Thm}[Type-II error control  for conditional $\Ev$-variable for singleton $\HcalO$]
\label{thm:type-II-slrt-singleton}
    Let $\HcalO=\lC P_\nul \rC$ be a singleton set.
    Consider a model class $\HcalA$ and a learning algorithm  that for every 
    realization $x=(x_n)_{n \in \N} \in \Xcal^\N$ and every number $N^{(1)} \in \N$ fits a model $P_\alt^{|x^{(1)}} \in \Pcal(\Xcal)$ to the first $N^{(1)}$ entries $x^{(1)}=(x_n)_{n \in \Ical^{(1)}}$ of $x$.
    Assume that for every 
    $P_\theta \in \HcalA$ and $P_\theta$-almost every i.i.d.\ realization $x=(x_n)_{n \in \N}$ of $P_\theta$ there exists a number $N^{(1)}(x) \in \N$ and $\epsilon(x)>0$ such that for all $N^{(1)} \ge N^{(1)}(x)$ the model $P_\alt^{|x^{(1)}} \in \Pcal(\Xcal)$ has a density
    $p_\alt(x_n|x^{(1)})$ and satisfies:
    \begin{align*}
        \KL(P_\theta\|P_\alt^{|x^{(1)}}) &<  \KL(P_\theta\|P_\nul) - \epsilon(x), &
       \sup_{x_n \in \Xcal}  |\log E(x_n|x^{(1)})| < \infty.
    \end{align*}
    Then for every $N^{(1)},N^{(2)}  \in \N$ we have the bound:
    \begin{align*}
        P_\theta\lp E^{(N^{(2)}|N^{(1)})} \le \alpha^{-1} \rp 
        &\le \E_{X^{(1)} \sim P_\theta} \lB \exp\lp -N^{(1)} \cdot \KL(\Acal_{\gamma_{N^{(2)}}}^{|X^{(1)}}\|P_\theta) \rp \rB,
    \end{align*}
    which converges to zero for $\min(N^{(1)},N^{(2)}) \to \infty$.
    
\end{Thm}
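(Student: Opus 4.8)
The plan is to condition on the training batch and reduce to the conditional i.i.d.\ type~II bound of \Cref{thm:type-2-cond-iid}. For a fixed realization $x^{(1)}$ of the first $N^{(1)}$ points, the fitted density $p_\alt(\cdot|x^{(1)})$ is deterministic, so $E(\cdot|x^{(1)}) = p_\alt(\cdot|x^{(1)})/p_\nul(\cdot)$ is a genuine conditional $\Ev$-variable w.r.t.\ the singleton $\HcalO = \lC P_\nul \rC$, since $\int p_\alt(x|x^{(1)})\,\mu(dx) = 1$. Moreover, because the whole sample is i.i.d., conditioning on $X^{(1)} = x^{(1)}$ leaves the test points $(X_n)_{n \in \Ical^{(2)}}$ i.i.d.\ with law $P_\theta$, so that $E^{(N^{(2)}|N^{(1)})} = \prod_{n \in \Ical^{(2)}} E(X_n|x^{(1)})$ is precisely a product of $N^{(2)}$ conditionally i.i.d.\ $\Ev$-variables. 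Applying \Cref{thm:type-2-cond-iid} with conditioning variable $X^{(1)}$ and $N = N^{(2)}$, then taking the expectation over $X^{(1)} \sim P_\theta$ via the tower property, yields
\begin{align*}
    P_\theta\lp E^{(N^{(2)}|N^{(1)})} \le \alpha^{-1} \rp \le \E_{X^{(1)} \sim P_\theta}\lB \exp\lp -N^{(2)} \cdot \KL(\Acal_{\gamma_{N^{(2)}}}^{|X^{(1)}}\|P_\theta) \rp \rB,
\end{align*}
with $\gamma_{N^{(2)}} = -\tfrac{1}{N^{(2)}}\log\alpha$.

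Next I would show that the KL exponent is strictly positive once $N^{(1)}$ is large. The decisive identity is that, for each fixed $x^{(1)}$,
\begin{align*}
    \E_{X \sim P_\theta}\lB \log E(X|x^{(1)}) \rB = \KL(P_\theta\|P_\nul) - \KL(P_\theta\|P_\alt^{|x^{(1)}}),
\end{align*}
which by the first hypothesis strictly exceeds $\epsilon(x) > 0$ as soon as $N^{(1)} \ge N^{(1)}(x)$. Together with the boundedness hypothesis $\sup_{x_n \in \Xcal}|\log E(x_n|x^{(1)})| < \infty$, this lets me invoke \Cref{lem:kl-E-log-e-equiv}: the sets $\Acal_\gamma^{|x^{(1)}}$ are $\TV$-closed and $\KL(\Acal_\gamma^{|x^{(1)}}\|P_\theta) > 0$ for every $0 \le \gamma < \E_{X \sim P_\theta}[\log E(X|x^{(1)})]$.

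Finally I would take the limit $\min(N^{(1)},N^{(2)}) \to \infty$ inside the expectation by dominated convergence, each integrand being bounded by $1$. Fix a $P_\theta$-typical realization $x$ and set $\gamma^\ast := \epsilon(x)/2$. For $N^{(1)} \ge N^{(1)}(x)$ the previous step gives $\KL(\Acal_{\gamma^\ast}^{|x^{(1)}}\|P_\theta) =: \delta(x) > 0$, and for $N^{(2)}$ large enough that $\gamma_{N^{(2)}} \le \gamma^\ast$ the inclusion $\Acal_{\gamma_{N^{(2)}}}^{|x^{(1)}} \ins \Acal_{\gamma^\ast}^{|x^{(1)}}$ forces $\KL(\Acal_{\gamma_{N^{(2)}}}^{|x^{(1)}}\|P_\theta) \ge \delta(x)$; hence the integrand is at most $\exp(-N^{(2)}\delta(x)) \to 0$. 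This pointwise-a.e.\ vanishing, combined with the uniform bound by $1$, gives the claim through dominated convergence. The main obstacle is exactly this last step: the thresholds $N^{(1)}(x)$ and $\epsilon(x)$ are realization-dependent, so I must verify that along \emph{any} sequence with $\min(N^{(1)},N^{(2)})\to\infty$ the integrand tends to $0$ for $P_\theta$-almost every $x$ --- $N^{(1)}$ large opens the KL gap, while $N^{(2)}$ large both shrinks $\gamma_{N^{(2)}}$ below that gap and drives the exponential down --- and that the domination by $1$ legitimizes interchanging limit and expectation over the random conditioning set.
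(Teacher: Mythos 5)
Your proof is correct and takes essentially the same route as the paper's: reduce to the Sanov-based bound (the paper cites \Cref{cor:type-2-cond-iid-2}, which you re-derive inline from \Cref{thm:type-2-cond-iid} via the tower property over $X^{(1)}$), use the identity $\E_{X\sim P_\theta}[\log E(X|x^{(1)})]=\KL(P_\theta\|P_\nul)-\KL(P_\theta\|P_\alt^{|x^{(1)}})>\epsilon(x)$ together with \Cref{lem:kl-E-log-e-equiv} under the boundedness hypothesis, and conclude by dominated convergence with the trivial bound by $1$ --- your explicit inclusion $\Acal_{\gamma_{N^{(2)}}}^{|x^{(1)}}\ins\Acal_{\gamma^\ast}^{|x^{(1)}}$ plays the same role as the paper's threshold $N^{(2)}>-\log\alpha/\epsilon(x)$. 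Note also that your exponent $-N^{(2)}\cdot\KL(\cdots)$ matches what the paper's proof actually establishes; the $-N^{(1)}$ appearing in the theorem statement is evidently a typo.
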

 \begin{proof}
        The bound directly follows from  \Cref{cor:type-2-cond-iid-2}. 
        Note that by the independence assumptions, we have $P_\theta^{|x^{(1)}}=P_\theta$.
        Then note that for $P_\theta$-almost-all $x\in\Xcal^\N$ and for $N^{(1)} \ge N^{(1)}(x)$:
        \begin{align*}
            \E_{X_n \sim P_\theta}\lB \log E(X_n|x^{(1)}) \rB &= \KL(P_\theta\|P_\nul) - \KL(P_\theta\|P_\alt^{|x^{(1)}}) > \epsilon(x) > 0.
        \end{align*}
        By assumption and  \Cref{lem:kl-E-log-e-equiv} we  have that 
        for $P_\theta$-almost all $x\in\Xcal^\N$ and for $N^{(1)} > N^{(1)}(x)$ and for $\gamma(x^{(1)})$ with:
        \begin{align*} 
        0< \gamma(x^{(1)}) < \epsilon(x) < \E_{X_n \sim P_\theta}\lB \log E(X_n|x^{(1)})\rB 
        \end{align*} 
        we have: $\KL(\Acal_{\gamma(x^{(1)})}^{|x^{(1)}}\|P_\theta)>0$.
        So for $N^{(2)}$ big enough we get:
        \begin{align*} 
            \gamma_{N^{(2)}} := - \frac{1}{N^{(2)}}\log \alpha < \epsilon(x).
        \end{align*}
        This shows that for:
        \begin{align*}
            N^{(2)} > \frac{-\log \alpha}{\epsilon(x)}=:N^{(2)}(x),
        \end{align*}
        we have: $\KL(\Acal_{\gamma_{N^{(2)}}}^{|x^{(1)}}\|P_\theta)>0$.
        This shows that for $P_\theta$-almost-all $x\in\Xcal^\N$ we have that:
        \begin{align*}
            \exp\lp -N^{(2)} \cdot \KL(\Acal_{\gamma_{N^{(2)}}}^{|x^{(1)}}\|P_\theta) \rp &\longrightarrow 0, \qquad \text{ for }\qquad \min(N^{(1)},N^{(2)}) \to \infty.
        \end{align*}
        Since we always have the trivial bounds:
        \begin{align*}
          0 \le   \exp\lp -N^{(2)} \cdot \KL(\Acal_{\gamma_{N^{(2)}}}^{|x^{(1)}}\|P_\theta) \rp &\le 1.
        \end{align*}
        the theorem of dominated convergence tells us that we also have the convergence:
    \begin{align*}
        \E_{X^{(1)} \sim P_\theta} \lB \exp\lp -N^{(2)} \cdot \KL(\Acal_{\gamma_{N^{(2)}}}^{|X^{(1)}}\|P_\theta) \rp \rB & \longrightarrow 0, \qquad \text{ for }\qquad \min(N^{(1)},N^{(2)}) \to \infty.
    \end{align*}
    This shows the claim.
    \end{proof}

\begin{Lem}
\label{lem:bounded-e-c2st}
       Let $\tilde{P}_\alt(y|x,x^{(1)}, y^{(1)})=(1-\lambda) P_\alt(y|x,x^{(1)}, y^{(1)})+\lambda\cdot P_\nul(y| y^{(2)})$ for $\lambda \in (0,1)$ and $y\in\{0,1\}$. Then the conditional $\Ev$-variable defined by
        \begin{align}
        \label{eq:2st-bounded-e}
    \tilde{E}(x,y|x^{(1)}, y^{(1)}) =  \frac{\tilde{p}_\alt(y|x,x^{(1)}, y^{(1)})}{ p_\nul(y| y^{(2)})} = \lambda + (1-\lambda)E(x,y|x^{(1)}, y^{(1)}) 
\end{align}
is bounded, i.e. $\Vert \log \tilde{E}\Vert_{\infty}<\infty.$ 
 \end{Lem}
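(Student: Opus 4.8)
The plan is to first pin down the claimed algebraic identity and then read off a strictly positive lower bound and a finite upper bound for $\tilde E$, from which $\Vert \log \tilde E\Vert_\infty < \infty$ is immediate. The identity is pure substitution: plugging the mixture $\tilde p_\alt = (1-\lambda)\,p_\alt + \lambda\, p_\nul$ into the defining ratio gives
\begin{align*}
\tilde E(x,y|x^{(1)},y^{(1)}) = \frac{(1-\lambda)\,p_\alt(y|x,x^{(1)},y^{(1)}) + \lambda\, p_\nul(y|y^{(2)})}{p_\nul(y|y^{(2)})} = \lambda + (1-\lambda)\,E(x,y|x^{(1)},y^{(1)}),
\end{align*}
valid wherever $p_\nul(y|y^{(2)}) > 0$, which is exactly where $E$ is defined. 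As a byproduct this exhibits $\tilde E$ as the convex combination $\lambda\cdot\one + (1-\lambda)\,E$ of the constant $\Ev$-variable $\one$ and the $\Ev$-variable $E$, so by \Cref{lem:conv-e-val} it is itself a conditional $\Ev$-variable, confirming that part of the statement.

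Next I would establish the lower bound, which is where the convex mixing earns its keep. Since $E \ge 0$ as a ratio of nonnegative densities, the identity yields $\tilde E \ge \lambda > 0$ uniformly in $(x,y)$, hence $\log \tilde E \ge \log \lambda > -\infty$. This step is essential: the unmixed $E$ is \emph{not} bounded away from zero, because its numerator is the Bernoulli mass $\sigma(g_\theta(x))^{y}(1-\sigma(g_\theta(x)))^{1-y}$, which can be driven arbitrarily close to $0$ as $g_\theta(x)\to\pm\infty$, so $\log E$ would otherwise diverge to $-\infty$. Flooring $\tilde E$ at $\lambda$ removes precisely this pathology.

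For the upper bound I would bound $E$ directly, holding the conditioning $(x^{(1)},y^{(1)},y^{(2)})$ fixed. The numerator is a Bernoulli probability, hence at most $1$, while the denominator equals $\hat q^{(2)}$ for $y=1$ and $1-\hat q^{(2)}$ for $y=0$, with $\hat q^{(2)} = N_1^{(2)}/N^{(2)}$ the null MLE on the current batch. Provided $\hat q^{(2)} \in (0,1)$, the denominator is bounded below by $\min(\hat q^{(2)}, 1-\hat q^{(2)}) > 0$, so $E \le 1/\min(\hat q^{(2)}, 1-\hat q^{(2)})<\infty$ and therefore $\tilde E \le \lambda + (1-\lambda)/\min(\hat q^{(2)}, 1-\hat q^{(2)})$. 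Combining the two bounds gives $\log \lambda \le \log \tilde E \le \log\!\left(\lambda + (1-\lambda)/\min(\hat q^{(2)}, 1-\hat q^{(2)})\right)$, which is the claim.

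The one delicate point, and the main obstacle, is the degeneracy $\hat q^{(2)}\in\{0,1\}$, i.e.\ a batch whose labels all coincide: then $p_\nul(y|y^{(2)})$ vanishes for one value of $y$, and neither $E$ nor $\tilde E$ is well defined there. I would stress that mixing with $\lambda$ does \emph{not} rescue the upper bound in this case, since it only controls the lower bound. I would therefore invoke the standing two-sample setting, in which each batch contains samples of both classes, so that $\hat q^{(2)}\in(0,1)$ and the denominator is uniformly positive; this is the implicit assumption under which the supremum over $(x,y)\in\Xcal\times\Ycal$ is finite.
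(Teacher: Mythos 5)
Your proof is correct (under your added assumption) and follows the same skeleton as the paper's: the substitution identity, the uniform lower bound $\tilde E \ge \lambda$, hence $\log\tilde E \ge \log\lambda$, and an upper bound obtained by bounding the null denominator away from zero. The genuine difference lies in how the degenerate batch $\hat q^{(2)}\in\{0,1\}$ is handled, and you were right to single this out as the delicate point. You resolve it by postulating that every batch contains both classes, which lets you take the supremum over all of $\Xcal\times\Ycal$ and gives $E \le 1/\min\bigl(\hat q^{(2)},1-\hat q^{(2)}\bigr)$. The paper instead restricts the supremum to points $(x,y)$ that actually occur in the batch $\Ical^{(2)}$: the label $y$ of an observed point appears at least once among the $N^{(2)}$ labels, so the null MLE satisfies $p_\nul(y|y^{(2)})=N_y^{(2)}/N^{(2)}\ge 1/N^{(2)}$ \emph{even when the batch is entirely one class}, yielding the explicit, assumption-free bound $\log\tilde E \le \log\bigl(\lambda+(1-\lambda)N^{(2)}\bigr)$, uniform over all batches of size $N^{(2)}$. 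This restriction suffices downstream, because $\log E^{(m)}$ in the consistency argument of \Cref{thm:seq-consistency} is a sum over batch points only. What your route buys is fidelity to the literal statement ($\Vert\log\tilde E\Vert_\infty<\infty$, and the $\sup_{(x,y)\in\Xcal\times\Ycal}$ in \Cref{lem:ec2st-consistency}); your observation that the $\lambda$-mixture only floors the lower bound and cannot rescue a vanishing denominator is a genuine subtlety that the paper's statement glosses over. What the paper's route buys is dispensing with the class-balance assumption and a bound depending only on the batch size; note also that under your assumption $\min\bigl(\hat q^{(2)},1-\hat q^{(2)}\bigr)\ge 1/N^{(2)}$, so the paper's bound subsumes yours. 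Your side remark that $\tilde E$ is a conditional $\Ev$-variable as a convex combination of $E$ with the constant $\one$, via \Cref{lem:conv-e-val}, is a welcome verification the paper leaves implicit.
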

 \begin{proof}
    For every $\Ical^{(2)}\subset\Xcal\times\Ycal $ and every $(x,y)\in \Ical^{(2)}$

    \begin{align*}
        \log  \tilde{E}(x,y|x^{(1)}, y^{(1)})&=\log \left( \lambda + (1-\lambda)E(x,y|x^{(1)}, y^{(1)}) \right)\geq \log \lambda
    \end{align*}
and 
        \begin{align*}
        \log \tilde{E}(x,y|x^{(1)}, y^{(1)})&=\log \left( \lambda + (1-\lambda)E(x,y|x^{(1)}, y^{(1)}) \right)\\
        &\le \log \left( \lambda+\frac{1-\lambda}{\min_{\Ical^{(2)}\subset\Xcal\times\Ycal}p_\nul(y| y^{(2)})}\right)\\
        &\leq \log \left( \lambda+\frac{1-\lambda}{1/N^{(2)}}\right)=\log(\lambda+(1-\lambda)N^{(2)})
    \end{align*}
 
\end{proof}

        \begin{Cor}
    \label{cor:type-II-bounded-ec2st-1}
        Let the assumptions about the learner in  \Cref{thm:type-II-slrt-singleton} hold, i.e. for every 
    realization $x=(x_n)_{n \in \N} \in \Xcal^\N$ and every number $N^{(1)} \in \N$ the learner fits a model $P_\alt^{|x^{(1)}} \in \Pcal(\Xcal)$ to the first $N^{(1)}$ entries $x^{(1)}=(x_n)_{n \in \Ical^{(1)}}$ of $x$.
    Assume that for every 
    $P_\theta \in \HcalA$ and $P_\theta$-almost every i.i.d.\ realization $x=(x_n)_{n \in \N}$ of $P_\theta$ there exists a number $N^{(1)}(x) \in \N$ and $\epsilon(x)>0$ such that for all $N^{(1)} \ge N^{(1)}(x)$ the model $P_\alt^{|x^{(1)}} \in \Pcal(\Xcal)$ has a density
    $p_\alt(x_n|x^{(1)})$ and satisfies:
    \begin{align*}
        \KL(P_\theta\|P_\alt^{|x^{(1)}}) <  \KL(P_\theta\|P_\nul) - \epsilon(x),
    \end{align*}
        
        Then  the statististical test w.r.t. $\tilde{E}(x,y|x^{(1)},y^{(1)})$ from  \Cref{eq:2st-bounded-e} is consistent.

    \end{Cor}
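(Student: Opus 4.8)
The plan is to deduce the statement directly from Theorem~\ref{thm:type-II-slrt-singleton}, but applied to the \emph{bounded} E-variable $\tilde E$ of \eqref{eq:2st-bounded-e} rather than to $E$ itself. The crucial observation is that $\tilde E$ is again a split-likelihood-ratio E-variable for the singleton null $\{P_\nul\}$, now built from the mixture alternative model $\tilde P_\alt^{|x^{(1)}} := (1-\lambda)\,P_\alt^{|x^{(1)}} + \lambda\,P_\nul$, since $\tilde E = \tilde p_\alt / p_\nul = \lambda + (1-\lambda)E$. Hence it suffices to verify that the two hypotheses of Theorem~\ref{thm:type-II-slrt-singleton} --- uniform log-boundedness of the E-variable and the strict KL-gap --- both hold for $\tilde E$ (equivalently, for $\tilde P_\alt^{|x^{(1)}}$), even though here only the KL-gap is assumed, and only for the unmixed $P_\alt^{|x^{(1)}}$.

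The boundedness hypothesis is immediate from Lemma~\ref{lem:bounded-e-c2st}, which gives $\Vert \log \tilde E\Vert_\infty < \infty$ for every fixed conditioning $(x^{(1)},y^{(1)})$ irrespective of whether $E$ is bounded; this is precisely the purpose of mixing with weight $\lambda \in (0,1)$, as it forces $\tilde E \ge \lambda$ from below. For the KL-gap I would invoke convexity of $Q \mapsto \KL(P_\theta \| Q)$ in its second argument to obtain, for $P_\theta$-almost every $x$ and all $N^{(1)} \ge N^{(1)}(x)$,
\begin{align*}
    \KL\!\left(P_\theta \,\big\|\, \tilde P_\alt^{|x^{(1)}}\right)
    &\le (1-\lambda)\,\KL\!\left(P_\theta \,\big\|\, P_\alt^{|x^{(1)}}\right) + \lambda\,\KL\!\left(P_\theta \,\big\|\, P_\nul\right) \\
    &< (1-\lambda)\bigl(\KL(P_\theta\|P_\nul) - \epsilon(x)\bigr) + \lambda\,\KL(P_\theta\|P_\nul) \\
    &= \KL(P_\theta\|P_\nul) - (1-\lambda)\epsilon(x),
\end{align*}
where the middle inequality uses the assumed gap. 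Thus $\tilde P_\alt^{|x^{(1)}}$ satisfies the same KL-gap condition with the strictly positive slack $\tilde\epsilon(x) := (1-\lambda)\epsilon(x)$.

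With both hypotheses verified for $\tilde E$, Theorem~\ref{thm:type-II-slrt-singleton} yields $P_\theta(\tilde E^{(N^{(2)}|N^{(1)})} \le \alpha^{-1}) \to 0$ as $\min(N^{(1)},N^{(2)}) \to \infty$, which is exactly the consistency of the test with respect to $\tilde E$. The only genuinely substantive ingredient is the second-argument convexity of KL divergence, which lets the KL-gap survive the mixing; the remaining steps are bookkeeping, including the routine check that $\tilde P_\alt^{|x^{(1)}}$ is a bona fide probability measure with density $\tilde p_\alt^{|x^{(1)}}$ (a convex combination of two densities), so that all KL terms above are well defined and finite.
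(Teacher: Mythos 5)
Your proposal is correct and takes essentially the same route as the paper, whose own proof is a single sentence: apply \Cref{thm:type-II-slrt-singleton} to the mixture $\Ev$-variable and cite \Cref{lem:bounded-e-c2st} for the boundedness hypothesis. You are in fact more careful than the paper, which leaves implicit the step you verify via convexity of $Q \mapsto \KL(P_\theta\|Q)$ --- namely that the assumed KL-gap for the unmixed $P_\alt^{|x^{(1)}}$ transfers to the mixture $\tilde P_\alt^{|x^{(1)}}$ with the positive slack $(1-\lambda)\epsilon(x)$.
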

    \begin{proof}
        The claim follows directly from  \Cref{thm:type-II-slrt-singleton} since $\tilde{E}(x,y|x^{(1)},y^{(1)})$ is bounded according to  \Cref{lem:bounded-e-c2st}.
    \end{proof}

    \subsection{Type II Error Control $M=\infty$}
    \label{app:consistency-m-infty}
    \begin{Thm}[Strong Law of Large Numbers for Martingale Difference Sequences]
    \label{theo:slln-mds}
       Consider the probability space $(\Omega, \Fcal, P)$. Let $(X_n)_{n\in\N}$ be a sequence random variables that satisfies for some $r\geq 1$ 
       \begin{align*}
           \sum_{n=1}^{\infty} \frac{\E_P[|X_n|^{2r}]}{n^{r+1}}<0
       \end{align*}
        Consider the natural filtration $\Fcal_n=\sigma(X_1,\ldots, X_n)\subset\Fcal.$ Additionally, let $\E_P[X_n|\Fcal_{n-1}] = 0$ for all $n\in\N$.
        
        Then, it holds $\frac{1}{n} \sum_{k=1}^n X_k \stackrel{a.s.}{\to}0.$
    \end{Thm}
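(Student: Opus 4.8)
The plan is to work directly with the partial sums $S_n:=\sum_{k=1}^n X_k$, to prove $S_n/n\to 0$ a.s.\ along the geometric subsequence $n_j=2^j$ by a Borel--Cantelli argument, and then to fill in the gaps with a maximal inequality. I deliberately avoid the usual route through Kronecker's lemma (which would demand a.s.\ convergence of $\sum_k X_k/k$): for $r>1$ the natural $L^2$ or $L^{2r}$ bounds on the tail martingale $\sum_k X_k/k$ are not summable under the stated hypothesis (reading it as $\sum_k \E[|X_k|^{2r}]/k^{r+1}<\infty$), so a scale-adapted argument is needed. The engine is a moment bound. Since $(X_k,\Fcal_k)$ is a martingale difference sequence, $S_n$ is a martingale, and Burkholder's inequality gives for $p=2r\ge 2$ a constant $C_r$ with
\[
  \E[|S_n|^{2r}] \le C_r\,\E\lB\lp\sum_{k=1}^n X_k^2\rp^{r}\rB.
\]
Applying the power-mean inequality $\lp\sum_{k=1}^n b_k\rp^{r}\le n^{r-1}\sum_{k=1}^n b_k^{r}$ (valid because $r\ge 1$) with $b_k=X_k^2$ and taking expectations yields the key estimate
\[
  \E[|S_n|^{2r}] \le C_r\, n^{r-1}\sum_{k=1}^n \E[|X_k|^{2r}].
\]
This is exactly the step that renders the $2r$-th moment hypothesis usable: the prefactor $n^{r-1}$ will combine with the $n^{-2r}$ from normalising $S_n$ by $n$ to produce precisely the weight $n^{-(r+1)}$ occurring in the assumption.

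\textbf{The dyadic subsequence.} Writing $a_k:=\E[|X_k|^{2r}]$, Markov's inequality and the bound above give, for each $\varepsilon>0$,
\[
  P\lp |S_{n_j}| > \varepsilon\, n_j\rp \le \frac{C_r}{\varepsilon^{2r}}\cdot\frac{\sum_{k=1}^{n_j}a_k}{n_j^{\,r+1}}.
\]
Summing over $j$ and interchanging the order of summation, for fixed $k$ the inner geometric series $\sum_{j:\,2^j\ge k} 2^{-j(r+1)}$ is bounded by a constant times $k^{-(r+1)}$, so $\sum_j n_j^{-(r+1)}\sum_{k\le n_j}a_k \le C\sum_k a_k/k^{r+1}<\infty$ by hypothesis. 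Borel--Cantelli then yields $S_{n_j}/n_j \to 0$ a.s.

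\textbf{Filling the gaps.} For $n_j<n\le n_{j+1}$ the increments $S_n-S_{n_j}$ form a martingale, so Doob's $L^{2r}$ maximal inequality, together with the same Burkholder/power-mean bound applied to the block $\{n_j+1,\dots,n_{j+1}\}$, gives
\[
  \E\lB\max_{n_j<n\le n_{j+1}}|S_n-S_{n_j}|^{2r}\rB \le C_r\, n_{j+1}^{\,r-1}\sum_{k=n_j+1}^{n_{j+1}}a_k.
\]
Since $n_{j+1}=2n_j$, dividing by $(\varepsilon n_j)^{2r}$ and summing over $j$ reduces, after interchanging summations and using $n_j\ge k/2$ on the block containing $k$, once more to $C\sum_k a_k/k^{r+1}<\infty$; Borel--Cantelli gives $\max_{n_j<n\le n_{j+1}}|S_n-S_{n_j}|/n_j \to 0$ a.s. Combining the two a.s.\ limits via $|S_n|/n \le |S_{n_j}|/n_j + \max_{n_j<m\le n_{j+1}}|S_m-S_{n_j}|/n_j$ for $n_j\le n<n_{j+1}$ yields $S_n/n\to 0$ a.s., which is the claim.

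\textbf{Main obstacle.} The delicate part is the bookkeeping in the two Borel--Cantelli steps: one must check that both weighted sums of the $a_k$ collapse back to the hypothesised series $\sum_k a_k/k^{r+1}$. This hinges on evaluating the moment bound precisely at the geometric scale $n_j=2^j$, so that the $n^{r-1}$ prefactor is absorbed, and on the maximal inequality to pass from the subsequence to the full sequence. The clean alignment of the powers is what forces the otherwise unusual exponent $r+1$ in the hypothesis, and verifying that alignment carefully is the crux of the argument.
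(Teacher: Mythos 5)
Your proof is correct, but note that the paper itself offers nothing to compare it against: Theorem~\ref{theo:slln-mds} is stated without proof (and without citation) as an auxiliary result used in the proof of \Cref{thm:seq-consistency} --- it is the classical SLLN for martingale difference sequences due to Chow (see, e.g., Stout, \emph{Almost Sure Convergence}, Thm.~3.3.8), and the displayed condition ``$<0$'' is evidently a typo for ``$<\infty$'', which you read correctly. Your argument is essentially the standard textbook derivation of that classical result, and every step checks out: Burkholder (valid since $p=2r\ge 2$) combined with the power-mean inequality gives $\E[|S_n|^{2r}]\le C_r\,n^{r-1}\sum_{k\le n}\E[|X_k|^{2r}]$; the exponent bookkeeping at the dyadic scale is right ($n_j^{r-1}/n_j^{2r}=n_j^{-(r+1)}$, and the interchange of sums with the geometric tail $\sum_{j:2^j\ge k}2^{-j(r+1)}\le C k^{-(r+1)}$ collapses both Borel--Cantelli sums to the hypothesised series); and the gap-filling via Doob's $L^{2r}$ maximal inequality on each block, using $n_{j+1}=2n_j$ and $n_j\ge k/2$, is sound, as is the final triangle-inequality combination. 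Two cosmetic remarks: you could shorten the gap-filling step by invoking the Burkholder--Davis--Gundy inequality with the running maximum directly, avoiding the separate Doob step; and your motivating claim that the Kronecker route is blocked is slightly overstated --- the hypothesis is exactly the condition $\sum_k \E[|X_k/k|^{2r}]\,k^{r-1}<\infty$ under which a refined $p\ge 2$ martingale convergence theorem (itself proved by the same moment bound) yields a.s.\ convergence of $\sum_k X_k/k$, so Kronecker does apply; only the naive $L^{2r}$-boundedness argument fails, as your counter-scaling correctly indicates for $r\ge 2$.
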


\paragraph{Proof of \Cref{thm:seq-consistency}}
\begin{proof}
    Consider the stopping time $T$ adapted to the natural filtration $\Fcal_M=\sigma(X^{(\leq M)}\})$ given by
    \begin{align*}
        T = \inf \{M\geq 0:\; E^{(\leq M)}\ge 1/\alpha \}.
    \end{align*}
    For this stopping time, we have the equivalent relation 
    \begin{align*}
        T<\infty \Longleftrightarrow  \exists M\geq 0 \; \text{such that} \;E^{(\leq M)} \geq \alpha^{-1}
    \end{align*}
    The probability that the null distribution will not be rejected in favor of the alternative is given by
    \begin{align*}
        P_{\theta}(T=\infty)= P_{\theta}(\forall M\geq 0:\;E^{(\leq M)} < \alpha^{-1}) = P_{\theta}\left(\forall M\geq 0:\frac{1}{M}\sum_{m=1}^M \log E^{(m)} < \frac{\log \alpha^{-1}}{M} \right)
    \end{align*}
    Next, define the random variables $W_m:$
    \begin{align*}
        W_m &= \mathbb{E}_{\theta}[\log E^{(m)}|\Fcal_{m-1}] =  \mathbb{E}_{\theta}\left[\log \lp\prod_{i\in \Ical^{(m)}}E(x_i|x^{(<m)})\rp\Bigg\vert\Fcal_{m-1}\right] 
        \\&=   \mathbb{E}_\theta\left[\log \frac{p_\theta(x^{(m)})}{p_\nul(x^{(m)}|\hat\theta_0( x^{(m)}))}\right]  - \KL(P_\theta\|P_\alt^{|x^{(<m)}}) >r_m.
    \end{align*}
    Then, the above probability equals
    \begin{align*}
        P_{\theta}\left(\forall M\geq 0:\; \frac{1}{M}\sum_{m=1}^M \log E^{(m)} < \frac{\log \alpha^{-1}}{M} \right) &= P_{\theta}\left(\forall M\geq 0:\; \frac{1}{M}\sum_{m=1}^M \log E^{(m)} - W_m + \frac{1}{M}\sum_{m=1}^M W_m < \frac{\log \alpha^{-1}}{M} \right)\\
        &\leq P_{\theta}\left(\forall M\geq 0:\;\frac{1}{M}\sum_{m=1}^M \log E^{(m)} - W_m  +  \frac{1}{M}\sum_{m=1}^M r_m - \frac{\log \alpha^{-1}}{M} <  0  \right)
    \end{align*}
   Next, we will prove that $\frac{1}{M}\sum_{m=1}^M \log E^{(m)} - W_m \stackrel{a.s.}{\to}0.$ Note that $\sum_{m=1}^M \log E^{(m)} - W_m$ is a martingale w.r.t. the filtration $\Fcal_M$ with bounded martingale differences $\log E^{(m)} - W_m$ in $L_2$. This results from the boundedness of $\log E^{(m)}$:
   \begin{align*}
       \E[(\log E^{(m)} - W_m)^2]\le \sup_{(x_n)_{n\in \Ical^{(\leq m)}}}(\log E^{(m)} - W_m)^2\leq 4s_m^2
   \end{align*}
   It follows that
\begin{align*}
    \sum_{m=1}^{\infty}\frac{\E[(\log E^{(m)} - W_m)^2]}{m^2}\leq  \sum_{m=1}^{\infty}\frac{4s_m^2}{m^2} < \infty
\end{align*}
With  \Cref{theo:slln-mds} we get that $\frac{1}{M}\sum_{m=1}^M \log E^{(m)} - W_m \stackrel{a.s.}{\to}0.$ This implies
\begin{align*}
    \limsup_{M\to\infty} \frac{1}{M}\sum_{m=1}^M \log E^{(m)} - W_m -\frac{\log \alpha^{-1}}{M} +  \frac{1}{M}\sum_{m=1}^M r_m = 0 -0 + r>0,
\end{align*}
where $r=\limsup_{M\to\infty}\sum_{m=1}^M \frac{r_m}{M}>0$. Note that the sequence can even diverge, i.e. $r=+\infty$. Thus,
\begin{align*}
   &P_{\theta}\left(\forall M\geq 0:\;\frac{1}{M}\sum_{m=1}^M \log E^{(m)} - W_m  +  \frac{1}{M}\sum_{m=1}^M r_m - \frac{\log \alpha^{-1}}{M} <  0  \right)\\
   &\leq P_{\theta}\lp \limsup_{M\to\infty} \frac{1}{M}\sum_{m=1}^M \log E^{(m)} - W_m -\frac{\log \alpha^{-1}}{M} +  \frac{1}{M}\sum_{m=1}^M r_m \leq 0\rp = 0.
\end{align*}
It follows that $P_{\theta}(T=\infty)=0.$
\end{proof}

\paragraph{Proof of \Cref{lem:ec2st-consistency}}
\begin{proof}
    By \Cref{lem:bounded-e-c2st}, the defined $\Ev$-variable is bounded with bound depending on the batch size. With $|\Ical^{(m)}|\leq B$ for all $m$ the statement follows directly from \Cref{thm:seq-consistency}
\end{proof}
\newpage 
\section{Expected Log Growth Rate}
\begin{Thm}
    Consider the sequence of E-C2ST $\Ev$-variables $(E^{(\leq M)})_{M\geq 1}$ with increments $E^{(m)}$ for $m=1,\ldots, M$ defined as in \Cref{eq:ec2st-bounded}. Let $P(X,Y)$ denote the true joint distribution of $X$ and $Y$ with probability density function $p(x,y)=p(y|x)p(x)$.

Then it holds for all $M\geq 1$
   \begin{align*}
      \mathbb{E}_{X^{(\leq M)},Y^{(\leq M)}}\left[\log E^{(\leq M)}\right] \leq |\Ical^{(\leq M)}|\cdot I(X;Y) 
   \end{align*}
\end{Thm}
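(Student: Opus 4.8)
The plan is to turn the multiplicative claim into an additive one and then bound the per-batch increment. Set the filtration $\Fcal_{m-1}:=\sigma\big((X_n,Y_n)_{n\in\Ical^{(<m)}}\big)$ generated by the first $m-1$ batches. Since $\log E^{(\le M)}=\sum_{m=1}^M\log E^{(m)}$, the tower property reduces the theorem to the conditional per-batch bound
\[
  \E\big[\log E^{(m)}\,\big|\,\Fcal_{m-1}\big]\;\le\;N^{(m)}\cdot I(X;Y),
\]
after which one sums over $m$ using $\sum_{m=1}^M N^{(m)}=|\Ical^{(\le M)}|$ (the $m=1$ term contributes $0$ because $E^{(1)}=1$). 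Conditioning on $\Fcal_{m-1}$ is what makes the increment tractable: the trained predictive model $p_\alt(\cdot\,|\,x_n)$ (equivalently the classifier $g_{\hat\theta_\alt^{(<m)}}$) and the weight $\lambda_m$ are $\Fcal_{m-1}$-measurable, whereas the points of batch $m$ are fresh i.i.d.\ draws from the true law $P(X,Y)$.

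For a fixed batch I would write $\log E^{(m)}=\sum_{n\in\Ical^{(m)}}\log\frac{p_\alt(y_n\,|\,x_n)}{\hat q^{(m)}(y_n)}$, with $\hat q^{(m)}$ the null MLE of \eqref{eq:e-c2st}, and split each summand telescopically through the true conditional $p(y_n\,|\,x_n)$ and the true marginal $p(y_n)$:
\[
  \log\frac{p_\alt(y_n\,|\,x_n)}{\hat q^{(m)}(y_n)}
  =\log\frac{p_\alt(y_n\,|\,x_n)}{p(y_n\,|\,x_n)}
  +\log\frac{p(y_n\,|\,x_n)}{p(y_n)}
  +\log\frac{p(y_n)}{\hat q^{(m)}(y_n)} .
\]
The middle term is exactly the mutual-information integrand, so its conditional expectation over the batch equals $N^{(m)}I(X;Y)$, which produces the claimed leading order. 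It then remains to show that the first and third groups of terms each contribute at most $0$.

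The first term is controlled by Gibbs' inequality: for every fixed $x$ and every realization of the past, $p_\alt(\cdot\,|\,x)$ is a genuine probability distribution on $\{0,1\}$, so its conditional expectation is a non-positive divergence,
\[
  \E\Big[\textstyle\sum_{n\in\Ical^{(m)}}\log\tfrac{p_\alt(Y_n\,|\,X_n)}{p(Y_n\,|\,X_n)}\;\Big|\;\Fcal_{m-1}\Big]
  =-N^{(m)}\,\E_{X}\!\big[\KL\big(P(Y\,|\,X)\,\big\|\,P_\alt(Y\,|\,X)\big)\big]\le 0 .
\]
The third term is the step I expect to be the main obstacle, because the null MLE $\hat q^{(m)}$ is computed from the very batch over which we average, so no clean per-point expectation is available. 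The resolution is that its batch dependence cancels after summation: grouping the $y_n$ by value and using $\hat q^{(m)}(1)=N_1^{(m)}/N^{(m)}$ gives the pathwise identity
\[
  \sum_{n\in\Ical^{(m)}}\log\frac{p(y_n)}{\hat q^{(m)}(y_n)}=-N^{(m)}\,\KL\big(\hat q^{(m)}\,\big\|\,P(Y)\big)\le 0 ,
\]
which says that the null model's overfitting to its own batch can only decrease the $\Ev$-value. Assembling the three pieces, taking conditional expectations and summing over $m$ proves the bound. The bounded increment of \eqref{eq:ec2st-bounded}, whose numerator is the mixture $\tilde p_\alt=\lambda_m\hat q^{(m)}+(1-\lambda_m)p_\alt$, is itself a conditional distribution on $\{0,1\}$ and is treated through the same decomposition, the only extra bookkeeping being to absorb its residual dependence on $\hat q^{(m)}$ together with the third term.
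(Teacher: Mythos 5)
Your proposal is correct and follows essentially the same route as the paper's proof: the tower-property reduction to per-batch increments, the same three-way decomposition through the true conditional $p(y|x)$ and marginal $p(y)$, Gibbs' inequality for the learner term, the mutual-information identity for the middle term, and non-positivity of the null-MLE term. Your pathwise identity $\sum_{n\in\Ical^{(m)}}\log\bigl(p(y_n)/\hat q^{(m)}(y_n)\bigr)=-N^{(m)}\KL\bigl(\hat q^{(m)}\,\|\,P(Y)\bigr)$ is just a sharper restatement of the paper's observation that the batch MLE dominates the true marginal likelihood, $p(y^{(m)}|\hat\theta_\nul(y^{(m)}))\ge p(y^{(m)})$, and your closing remark on the mixture's residual dependence on $\hat q^{(m)}$ is handled no less carefully than in the paper, which treats $\tilde p_\alt$ as a conditional density given only the past.
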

\begin{proof}

For any $M\geq 1$, we get due to the independence of the observations:
\begin{align*}
    \mathbb{E}_{X^{(\leq M)},Y^{(\leq M)}}\left[\log E^{(\leq M)}\right]&=\sum_{m=1}^M \mathbb{E}_{X^{(\leq M)},Y^{(\leq M)}}\left[\log E^{(m)}\right] = \sum_{m=1}^M \mathbb{E}_{X^{(\leq m)},Y^{(\leq m)}}\left[\log E^{(m)}\right]\\
    &=\sum_{m=1}^M \mathbb{E}_{X^{(<m)},Y^{(< m)}}\left[\mathbb{E}_{X^{( m)},Y^{(m)}}\left[\log E^{(m)}| X^{(< m)},Y^{(< m)}\right]\right]
\end{align*}
Let $\tilde P_{A}(X,Y| X^{(<M)},Y^{(<M)} )$ be the estimated joint distribution  of $X$ and $Y$ under the alternative with probability density function $\tilde p_A(x,y)=\tilde p_A(y|x,  x^{(<M)},y^{(<M)})p(x)$, where $p(x)$ is the unknown marginal distribution of $X$ and $p_A(y|x,  x^{(<M)},y^{(<M)})$ is the learner trained on $x^{(<M)},y^{(<M)}$. We get the increments 
\begin{align*}
    &\mathbb{E}_{X^{( m)},Y^{(m)}}\left[\log E^{(m)}| X^{(< m)},Y^{(< m)}\right]=\mathbb{E}_{X^{( m)},Y^{(m)}}\left[\log \frac{\tilde p_A(y^{(m)}|x^{(m)},  x^{(<m)},y^{(<m)})}{p(y^{(m)}|\hat{\theta}_0(y^{(m)}))}\right]\\
    &=\mathbb{E}_{X^{( m)},Y^{(m)}}\left[\log \frac{\tilde p_A(y^{(m)}|x^{(m)},  x^{(<m)},y^{(<m)})p(y^{(m)}|x^{(m)})p(y^{(m)})}{p(y^{(m)}|\hat{\theta}_0(y^{(m)}))p(y^{(m)}|x^{(m)})p(y^{(m)})}\right]\\
    &=\mathbb{E}_{X^{( m)},Y^{(m)}}\left[\log \frac{\tilde p_A(y^{(m)}|x^{(m)},  x^{(<m)},y^{(<m)})}{p(y^{(m)}|x^{(m)})}\right] + \mathbb{E}_{X^{( m)},Y^{(m)}}\left[\log \frac{ p(y^{(m)}|x^{(m)})}{p(y^{(m)})}\right]+ \mathbb{E}_{X^{( m)},Y^{(m)}}\left[\log \frac{ p(y^{(m)})}{p(y^{(m)}|\hat{\theta}_0(y^{(m)}))}\right]\\
    &=- \KL (P^{(m)}\|\tilde P_A^{(m)|(<m)})+|\Ical^{(m)}|\cdot I(X;Y) +\mathbb{E}_{Y^{(m)}}\left[\log \frac{ p(y^{(m)})}{p(y^{(m)}|\hat{\theta}_0(y^{(m)}))}\right]
\end{align*}
where
\[\KL (P^{(m)}\|\tilde P_A^{(m)|(<m)}) =-\mathbb{E}_{X^{( m)},Y^{(m)}}\left[\log \frac{\tilde p_A(y^{(m)}|x^{(m)},  x^{(<m)},y^{(<m)})p(x^{(m)})}{p(y^{(m)}|x^{(m)})p(x^{(m)})}\right]. \]
Plugging this into the above expression for  $ \mathbb{E}_{X^{(\leq M)},Y^{(\leq M)}}\left[\log E^{(\leq M)}\right]$ we get
    \begin{align*}
         \mathbb{E}_{X^{(\leq M)},Y^{(\leq M)}}\left[\log E^{(\leq M)}\right] &= -\sum_m  \mathbb{E}_{X^{(<m)},Y^{(< m)}}\left[ \KL (P^{(m)}\|\tilde P_A^{(m)|(<m)})\right]+|\Ical^{(\leq M)}|\cdot I(X;Y)\\
         &+\sum_{m=1}^M\mathbb{E}_{Y^{(m)}}\left[\log \frac{ p(y^{(m)})}{p(y^{(m)}|\hat{\theta}_0(y^{(m)}))}\right]
    \end{align*}
    First, note that $p(y^{(m)}|\hat{\theta}_0(y^{(m)}))\geq p(y^{(m)})$ for any $m$ and thus $\mathbb{E}_{Y^{(m)}}\left[\log \frac{ p(y^{(m)})}{p(y^{(m)}|\hat{\theta}_0(y^{(m)}))}\right]\leq 0$. Together with the non-negativity of the KL, we get that 
      \begin{align*}
         \mathbb{E}_{X^{(\leq M)},Y^{(\leq M)}}\left[\log E^{(\leq M)}\right] \leq -0 +|\Ical^{(\leq M)}|\cdot I(X;Y)+0=|\Ical^{(\leq M)}|\cdot I(X;Y)(=Mb\cdot I(X;Y) \text{ for constant batch size})
    \end{align*}

\end{proof}
\newpage
\section{Experiments}
\label{app:exp}
\begin{figure}
    \centering
    \includegraphics[scale=0.27]{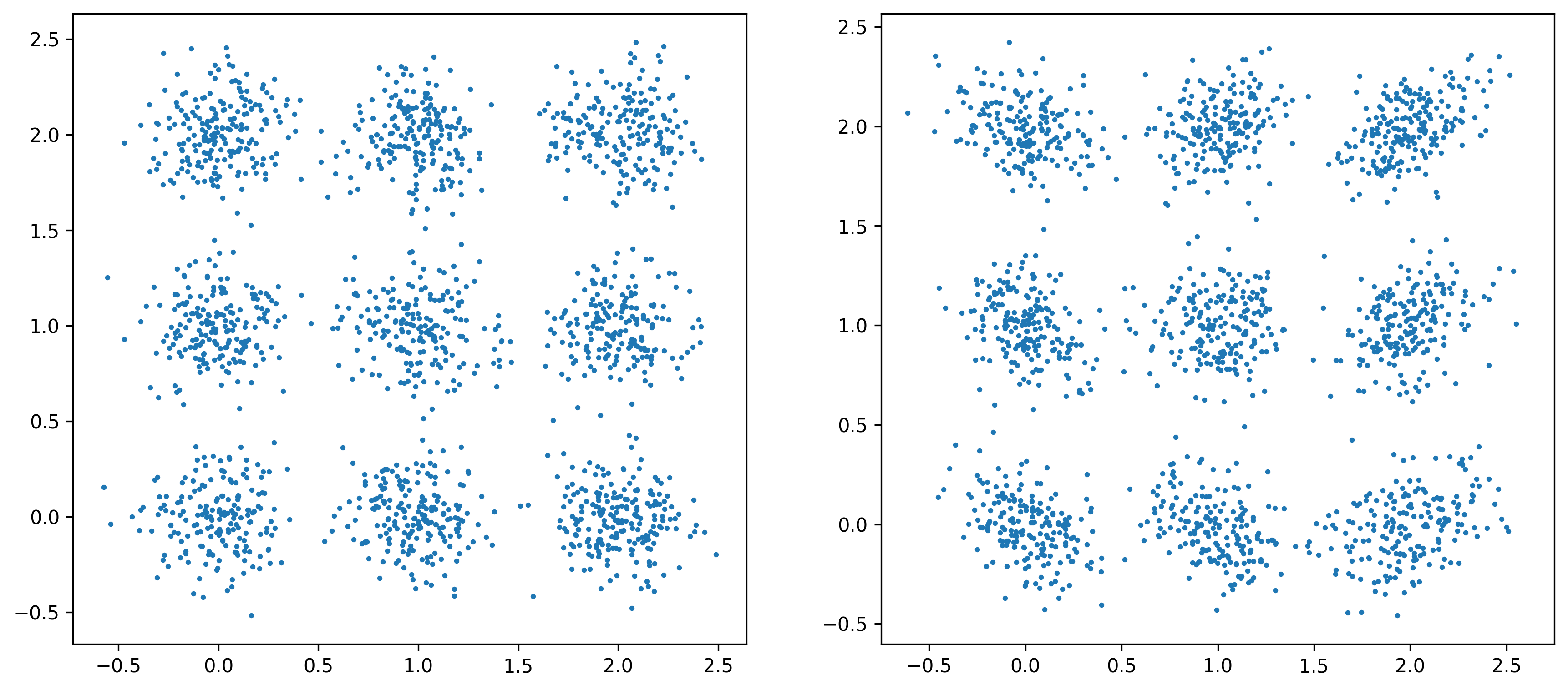}
    \caption{The two classes of the Blob dataset.}
    \label{fig:blobdata}
\end{figure}
In this section, we explain the implementation and training of our models in detail. In  \Cref{app:train}, we discuss the architecture choice and training of E-C2ST and the other baseline methods for the synthetic and image data experiments. Code for reproducing the results can be found here: \url{https://github.com/tpandeva/e-c2st}.

\subsection{Baselines}
\label{app:baselines}
 We compare E-C2ST to the following baselines. 
\begin{itemize}
\item \textbf{S-C2ST} (standard C2ST), is the C2ST proposed by \cite{LM16}. We train a binary classifier on the augmented data. The null hypothesis is that accuracy is 0.5, and the alternative is that it is larger than 0.5. The $p$-value is computed via a permutation test.
\item \textbf{L-C2ST} (logits C2ST) proposed by \cite{cheng2019classification} is a kernel-based test, which again trains a binary classifier to distinguish the two classes. The null hypothesis is rejected if the difference between the classes' average logits is insignificant. The $p$-value is computed via a permutation test. 
\item \textbf{M-C2ST}  We conduct the tests using the proposed test statistics based on maximum mean discrepancy \citep{kirchler2020two}. The $p$-value is computed via a permutation test.
\end{itemize}
\subsection{Training}
\label{app:train}
 We used Adam optimizer \citep{kingma2014adam} with a learning rate $1\cdot 1e-4$ (and $5\cdot 1e-4$ for the Blob data). For fitting the parameter $\lambda$ from \eqref{eq:lambda-opt}, we used L-BFGS-B \citep{byrd1995limited} implemented in \citep{2020SciPy-NMeth} and we set the initial value to $0.5$ unless specified otherwise. Note that in all experiments, we consider a paired two-sample test for simplicity, i.e., each observation consists of a pair of $X$ and $Y$ that possibly come from different observations.
\begin{itemize}
\item \textbf{Blob data.} The two Blob distributions used in the corresponding type 2 error experiment are visualized in  \Cref{fig:blobdata}. The means are the same for both classes and are arranged in a $3\times 3$ grid. The two populations differ in their variance.  The used network architectures are described in \Cref{tab:synth}. We trained the models with early stopping with patience 20 for all methods in all cases.
\item \textbf{MNIST.} The dataset is obtained from \url{https://github.com/fengliu90/DK-for-TST}.  \Cref{tab:mnist} outlines the neural network architectures. We trained the models with early stopping with patience of $15$ epochs for the baseline methods and $10$ for E-C2ST.
\item \textbf{Face Expression Data.} For all methods, we used the network architecture provided in  \Cref{tab:kdev}. We set the patience parameter to 20 epochs. 
\end{itemize}
\subsection{Additional Experiments}

\paragraph{Best models according to the ablation study.} In the main paper, we performed two experiments to investigate the effect of batch size and the effect of initial lambda value. Here, we summarize our results by comparing the best E-C2ST performer according to the ablation studies with the best baseline L-C2ST in \Cref{fig:summary-abl}. For both MNIST and KDEF, we can conclude that there is a significant gain in performance by using the enhanced E-C2ST.
\begin{figure}
    \centering
    \includegraphics[scale=0.55]{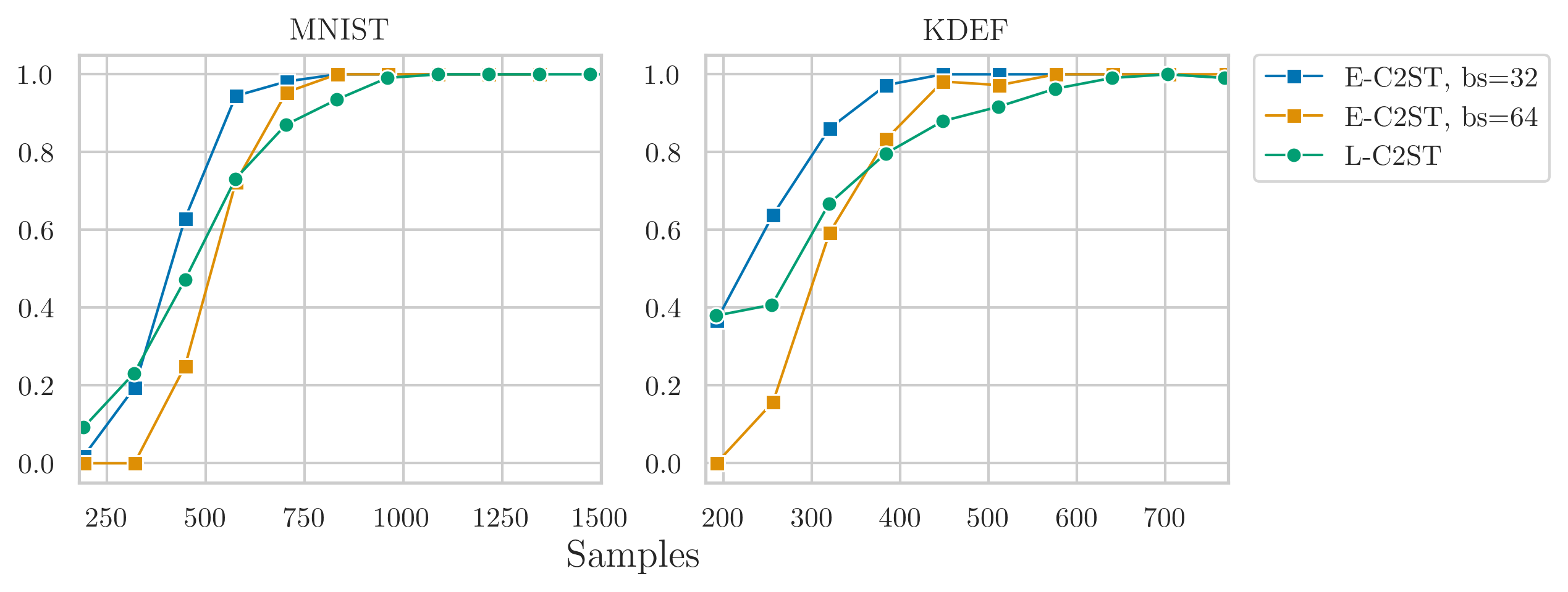}
    \caption{Comparison between the E-C2ST trained with batch size=32 and the initial E-C2ST and baseline L-C2ST. We observe a significant increase in power when using a batch size of 32. }
    \label{fig:summary-abl}
\end{figure}
\paragraph{Robustness to the batch order.}
We aim to empirically assess the robustness of our statistical test to changes in the order of data batches. To achieve this, we design a sequential experiment in which we iteratively collect new data batches, stopping the process when the null hypothesis is rejected. This experiment is repeated 100 times, with each iteration involving a random shuffling of the predetermined data batches, repeated ten times to generate ten different sequence orders. For each sequence order, we calculate the power of the test based on the results of the 100 independent runs. We then aggregate the results from all ten sequences by calculating and visualizing the mean power curve and its corresponding 95\% confidence interval in \Cref{fig:batch-order} for the MNIST and KDEF datasets, where the batch size is 64 and the initial $\lambda=0.5$.   \Cref{fig:batch-order} shows that the observed deviations from the mean are slight in both scenarios. This indicates that our method is robust with respect to permutation in the order of the data batches.
\begin{figure}
    \centering
    \includegraphics[scale=0.55]{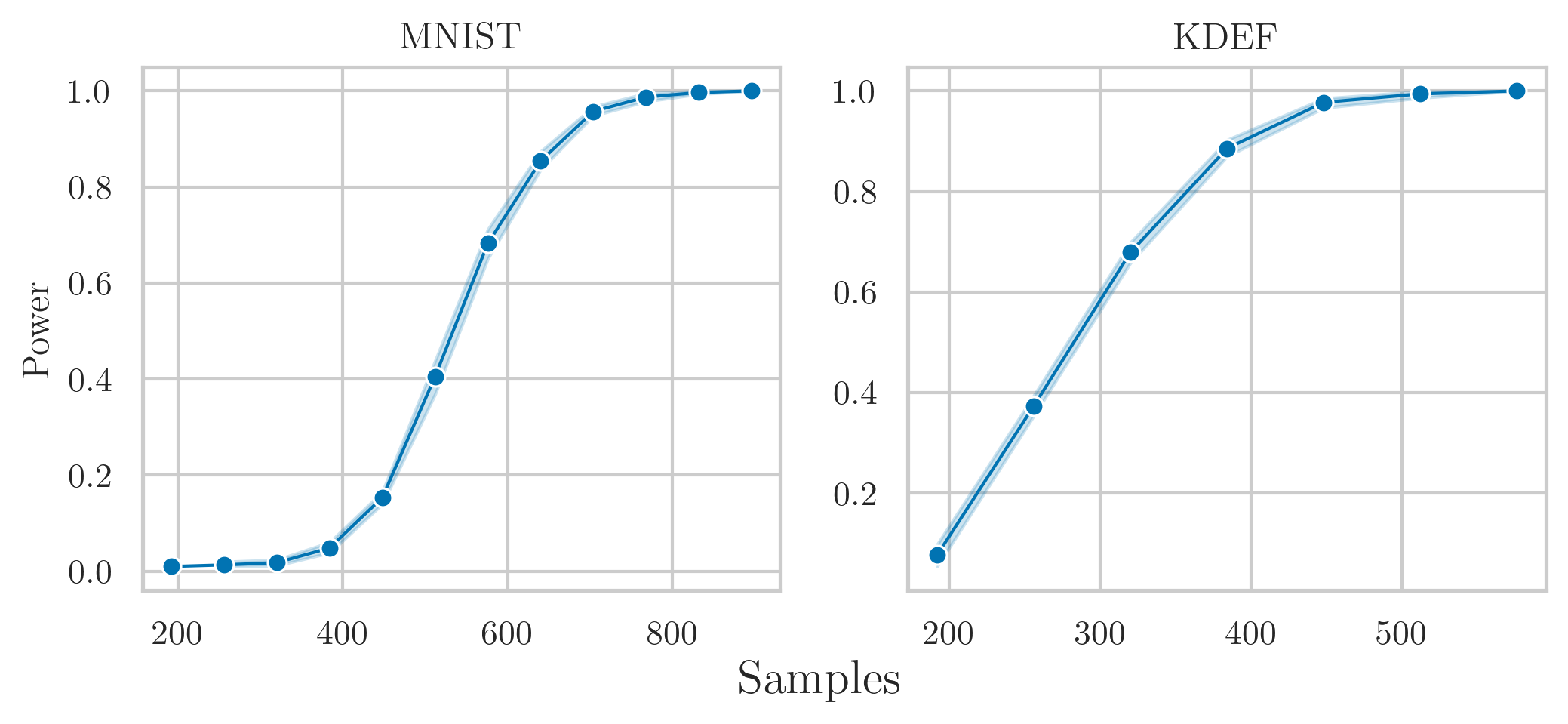}
    \caption{Robustness to batch order. Here, we visualize the mean and 95\% confidence interval based on 10 batch permutations of the power curves computed from 100 independent runs.}
    \label{fig:batch-order}
\end{figure}
\paragraph{Wall clock time.} We compared the execution time, measured in seconds, for the E-C2ST and L-C2ST algorithms. This analysis is based on 100 independent trials with different sample sizes within the KDEF and MNIST scenarios, as shown in  \Cref{tab:time}. In this context, data were generated under the alternative hypothesis. Both E-C2ST and L-C2ST were trained with identical training parameters (learning rate, patience, network architectures). Batches of 64 samples were used for E-C2ST. For L-C2ST, the total sample size - shown in the second row of the  \Cref{tab:time} - was divided according to a ratio we used in all our experiments, i.e., 5:1:1 for training, validation, and test data, respectively.

We do not include the other baselines since their execution times are comparable to those of L-C2ST. This similarity in execution time is because all baselines, including L-C2ST, use the same trained model and number of permutations to compute the corresponding $p$-values. All experiments were performed on NVIDIA GP102 [GeForce GTX 1080 Ti] GPUs.
\begin{table}[ht]\small
\centering
\begin{tabular}{l|ccc|ccccc}
\hline
& \multicolumn{3}{c|}{KDEF} & \multicolumn{5}{c}{MNIST} \\
\cline{2-9}
& $64\cdot 3$ & $64\cdot 6$ & $64\cdot 9$ & $64\cdot 3$ & $64\cdot 6$ & $64\cdot 10$ & $64\cdot 14$ & $64\cdot 17$\\
\hline
E-C2ST & $\mathbf{1.5 \pm 	0.6}$ & $\mathbf{4.4 	\pm 1.4}$ & $\mathbf{5.6 \pm	2.7}$ & $\mathbf{2.5 \pm	0.4}$ & $4.9 \pm	0.8$ & $10.5 \pm	1.8$ &$\mathbf{11.5 \pm 2.8}$ & $\mathbf{11.6 \pm 	2.9}$\\
L-C2ST & $2.7 \pm 	1.1$ & $4.6 \pm	2.1$ & $9.4 \pm	3.4$ & $3.5 \pm 0.7$ & $\mathbf{4.4 \pm 0.98}$ & $\mathbf{7 	\pm 2.3}$ & $11.6 \pm 3.99$ & $15.96 	\pm 5.1$\\
\hline
\end{tabular}
\caption{Wall Clock Times in Seconds}
\label{tab:time}
\end{table}
\normalsize

Our discussion of the results is organized around three different scenarios: 1) when the number of data batches equals three; 2) when the total sample size falls below the threshold required to achieve maximum statistical power; and 3) when the total number of samples exceeds the minimum data set size required to achieve maximum power.

Looking at the first columns of the \Cref{tab:time} for both the KDEF and MNIST scenarios, we see that E-C2ST requires less computation time than L-C2ST when the dataset is segmented into three batches. In this case E-C2ST performs a single training iteration   (as L-C2ST) utilizing the three batches for training, validation, and testing, respectively. Then, the reason for the increased computational cost for L-C2ST can be attributed to permutation test used for calculating the $p$-value that uses 1000 permutations.

In our work, as demonstrated by our experimental results, including those shown in  \Cref{fig:batch-order}, we have empirically determined the optimal sample sizes for maximizing the statistical power of E-C2ST when using a batch size of 64. In particular, we found that approximately 384 samples (equivalent to $64 \times 6$) are required for KDEF, and 896 samples (equivalent to $64 \times 14$) are required for MNIST. This indicates that E-C2ST requires a maximum of 6 and 14 batches in the KDEF and MNIST scenarios to reject the null hypothesis effectively. 

When the available data set is smaller than these thresholds, E-C2ST becomes more computationally expensive than L-C2ST (see the second and third columns for the MNIST scenario). This increased computational cost occurs because E-C2ST  updates the model with each additional batch of data, using the full dataset available at that time, whereas L-C2ST trains the model only once using the train, validation, and test data split.

However, when the dataset size exceeds the identified thresholds for achieving maximum power with E-C2ST, E-C2ST becomes less computationally expensive than L-C2ST (see the second and third columns for KDEF and the fourth and fifth columns for MNIST). This efficiency gain is due to E-C2ST's ability to terminate further training upon early rejection of the null hypothesis in large datasets. This means that for datasets larger than the specified thresholds, the computational time for E-C2ST becomes constant and does not increase with additional data, providing a significant computational advantage over L-C2ST.

\paragraph{Why can't we use $p$-values for sequential testing?}
Traditional statistical tests, such as the t-test or chi-squared test, assume that the number of observations or experiments is determined before data collection begins. This setup means that researchers decide in advance how much data to collect and that the predetermined sample size does not change no matter what the outcome of the test is. Sequential testing, however, allows data to be analyzed as it comes in. This approach enables researchers to decide whether to continue collecting data at any point during the study. Therefore, sequential testing focuses on designing tests that can keep the type I error rate below a certain significance level throughout the data collection period. This is a stricter requirement than in classical statistical testing, where controlling for type I error is only necessary for a single, fixed sample size.

Standard testing methods are not directly applicable to sequential testing scenarios. In situations where data arrives in batches over time, recalculating the $p$-value for the entire dataset each time new data arrives and making testing decisions based on that recalculated value is likely to push the type I error rate above the alpha threshold.  We illustrate this in the following example.

We ran a sequential testing experiment, where in each round of this experiment, two samples were drawn, with their sizes randomly chosen from the range $[32,64]$.  Both were sampled from a standard normal distribution.

As new batches of data arrived, a t-test was performed to compare the means of the samples collected up to that point (including the current and past data). The test was stopped if the null hypothesis was rejected at a significance level of 0.05; if not, new samples were drawn until the null hypothesis was rejected at that level.

This procedure was repeated 100 times to estimate the type I error rate over time. Our experiment shows that the type I error increases over time and reaches a level of 60\%, as shown in \Cref{fig:p-value-seq}. 
\begin{figure}
    \centering
    \includegraphics[scale=0.47]{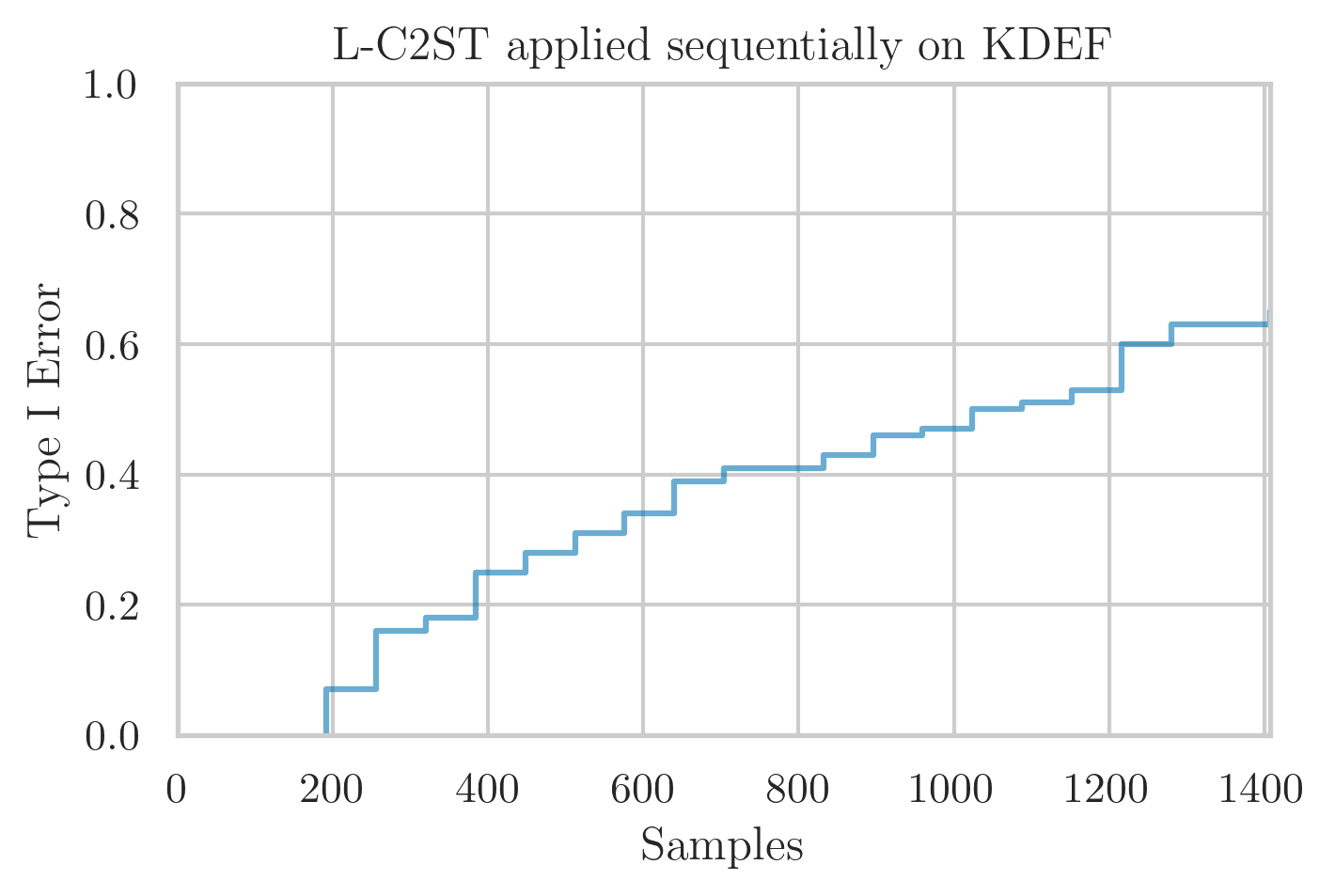}
    \caption{Type I error of L-C2ST, applied on a sequential testing task. We fix the number of batches at 20 and compute the type I error over time as the number of batches revealed to the model increases. }
    \label{fig:wrong}
\end{figure}
Let's look at the implications for the tests we are considering in this work. If we use the same batch splitting technique for any standard method as we do for E-C2ST, we compromise the type I error guarantees that would be in place if we calculated the $p$-value only once. In particular, if we have $M$ batches and thus recalculate a $p$-value $M$ times, the standard methods face $M$ times more opportunities to reject the null hypothesis falsely. This significantly increases the likelihood of inflating the type I error rate when evaluated over the entire data set. For example, \Cref{fig:wrong} shows the increasing type I error of L-C2ST  as the number of batches increases, where the maximum batch number is 20, and the batch size is 64. Thus, for $M=20$, in this sequential setting, the type I error of L-C2ST is above $0.6>>0.05$.

On the other hand, the way E-C2ST is designed addresses this problem, i.e., it ensures that the type I error, in this case, remains below the predetermined significance level. See \Cref{thm:batch-anytime-type-I} for these theoretical guarantees.

\begin{figure}
    \centering
    \includegraphics[scale=0.47]{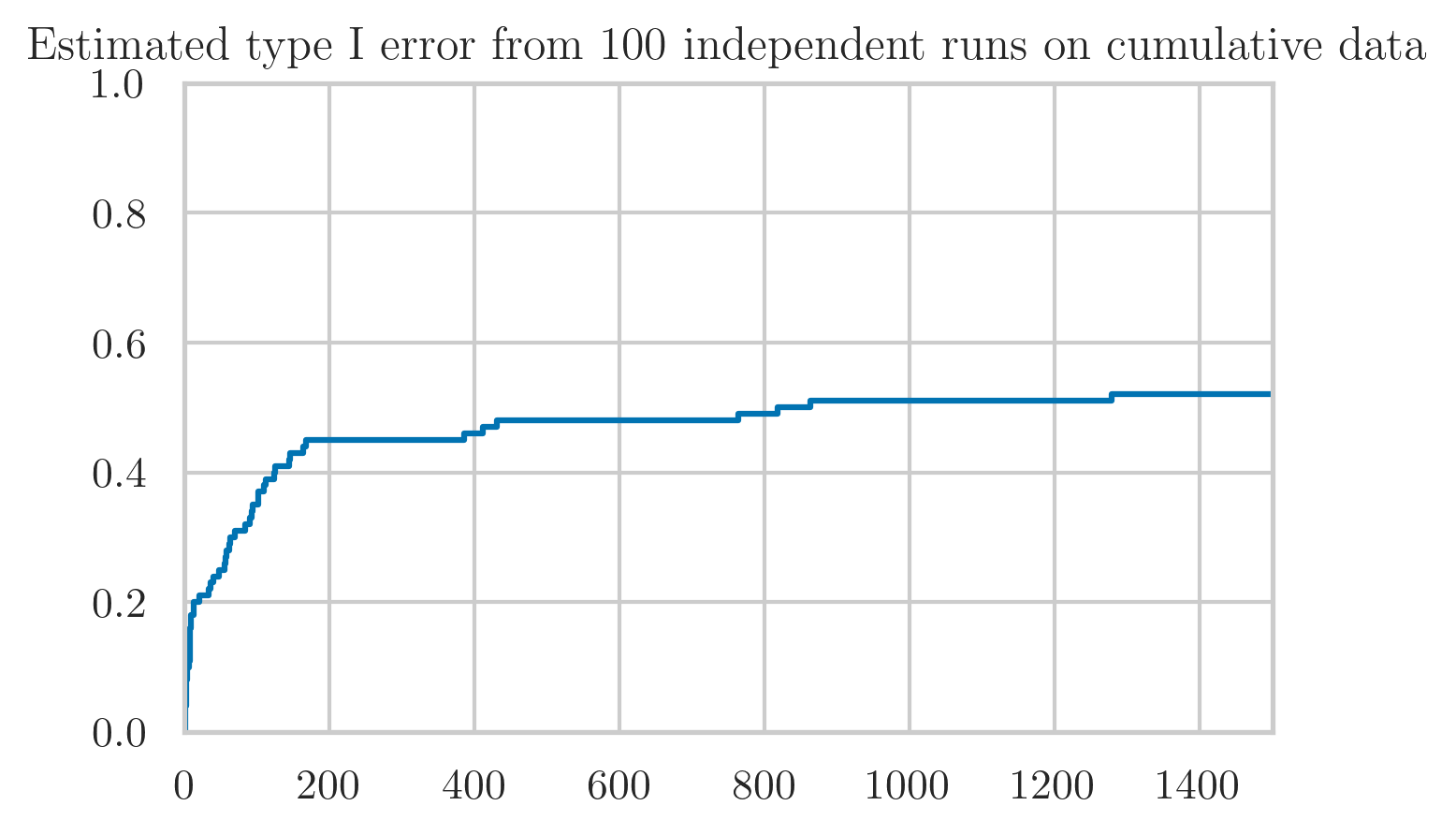}
    \caption{ Estimated type I error from 100 independent sequential experiments for batch sizes randomly selected in the interval [32, 64]. The estimated cumulative type I error is way above the significance level $\alpha=0.05$ and naturally increases over time. }
    \label{fig:p-value-seq}
\end{figure}

\newpage
\begin{table}[]
    \centering
     \begin{tabular}{c|c}
Layer (type)        &       Output Shape    \\ \hline
            Linear-1    &                $[\text{batch size},30]$  \\
       LayerNorm-2    &                 $[\text{batch size},30]$   \\
              ReLU-3    &                $[\text{batch size},30]$   \\
            Linear-4    &                 $[\text{batch size},30]$   \\
       LayerNorm-5    &                 $[\text{batch size},30]$   \\
              ReLU-6   &                 $[\text{batch size},30]$   \\
            Linear-7    &                  $[\text{batch size},2]$    \\
    \end{tabular}
    \caption{The network architecture employed in the Blob experiment for all methods .}
    \label{tab:synth}
\end{table}

\begin{table}[]
    \centering
    \begin{tabular}{c|c}
Layer (type)        &       Parameters    \\ \hline
            Conv2d-1   &         16, kernel size=(3, 3), stride=(2, 2), padding=(1, 1)    \\
         LeakyReLU-2   &         negative slope=0.2   \\
         GroupNorm-3    &         eps=1e-05    \\
            Conv2d-4    &         32, kernel size=(3, 3), stride=(2, 2), padding=(1, 1)    \\
         LeakyReLU-5     &          negative slope=0.2      \\
         GroupNorm-6     &         eps=1e-05      \\
            Conv2d-7    &           64, kernel size=(3, 3), stride=(2, 2), padding=(1, 1)      \\
         LeakyReLU-8       &         negative slope=0.2     \\
        GroupNorm-9    &           eps=1e-05       \\
           Conv2d-10     &         2, kernel size=(3, 3), stride=(2, 2), padding=(1, 1)      \\
        
    \end{tabular}
    \caption{The network architecture employed in the MNIST experiment for E-C2ST, L-C2ST, S-C2ST, M-C2ST.}
    \label{tab:mnist}
\end{table}
\begin{table}[]
    \centering
    \begin{tabular}{c|c}
Layer (type)        &       Parameters    \\ \hline
            Linear-1    &                size=32  \\
       LayerNorm-2    &                    \\
              ReLU-3    &                  \\
              Dropout-4    &                0.5   \\
            Linear-5    &                 size=32   \\
       LayerNorm-6    &                  \\
              ReLU-7   &                   \\
               Dropout-8    &                0.5   \\
            Linear-9    &                  size=2   \\
    \end{tabular}
    \caption{The network architecture employed in the KDEF experiments for all baselines.}
    \label{tab:kdev}
\end{table}

\end{document}